\documentclass{article}[12pt]
\usepackage{graphicx,epsfig,color}
\usepackage{enumerate}
\usepackage{amsfonts}

\hyphenation{super-terse mea-sure semi-recursive non-recursive
             non-superterse}
\newcounter{savenumi}

\newtheorem{theoremfoo}{Theorem}
\newenvironment{theorem}{\pagebreak[1]\begin{theoremfoo}}{\end{theoremfoo}}

\newtheorem{propositionfoo}[theoremfoo]{Proposition}
\newenvironment{proposition}{\pagebreak[1]\begin{propositionfoo}}{\end{propositionfoo}}
\newtheorem{lemmafoo}[theoremfoo]{Lemma}
\newenvironment{lemma}{\pagebreak[1]\begin{lemmafoo}}{\end{lemmafoo}}
\newtheorem{conjecturefoo}[theoremfoo]{Conjecture}

\newtheorem{corollaryfoo}[theoremfoo]{Corollary}
\newenvironment{corollary}{\pagebreak[1]\begin{corollaryfoo}}{\end{corollaryfoo}}
\newtheorem{exercisefoo}{Exercise}

\newtheorem{openfoo}[theoremfoo]{Question}

\newtheorem{nttn}[theoremfoo]{Notation}

\newtheorem{dfntn}[theoremfoo]{Definition}
\newenvironment{definition}{\pagebreak[1]\begin{dfntn}\rm}{\end{dfntn}}

\newenvironment{proof}
    {\pagebreak[1]{\narrower\noindent {\bf Proof:\quad\nopagebreak}}}{\QED}




\newcommand{\ceiling}[1]{\left\lceil#1\right\rceil}

\newcommand{\poly}{{\rm poly}}
\def\nre.{$n$\/-r.e.}

\tolerance=2000

\newtheorem{factfoo}[theoremfoo]{Fact}

\newcommand{\squeeze}{
\textwidth 6in \textheight 8.8in \oddsidemargin 0.2in \topmargin
-0.4in }

\newtheorem{propertyfoo}[theoremfoo]{Property}

\makeatletter

\def\@makechapterhead#1{ \vspace*{50pt} { \parindent 0pt \raggedright
 \ifnum \c@secnumdepth >\m@ne \huge\bf \@chapapp{} \thechapter. \par
 \vskip 20pt \fi \Huge \bf #1\par
 \nobreak \vskip 40pt } }

\def\@sect#1#2#3#4#5#6[#7]#8{\ifnum #2>\c@secnumdepth
     \def\@svsec{}\else
     \refstepcounter{#1}\edef\@svsec{\csname the#1\endcsname.\hskip 1em }\fi
     \@tempskipa #5\relax
      \ifdim \@tempskipa>\z@
        \begingroup #6\relax
          \@hangfrom{\hskip #3\relax\@svsec}{\interlinepenalty \@M #8\par}
        \endgroup
       \csname #1mark\endcsname{#7}\addcontentsline
         {toc}{#1}{\ifnum #2>\c@secnumdepth \else
                      \protect\numberline{\csname the#1\endcsname}\fi
                    #7}\else
        \def\@svsechd{#6\hskip #3\@svsec #8\csname #1mark\endcsname
                      {#7}\addcontentsline
                           {toc}{#1}{\ifnum #2>\c@secnumdepth \else
                             \protect\numberline{\csname the#1\endcsname}\fi
                       #7}}\fi
     \@xsect{#5}}

\def\@begintheorem#1#2{\it \trivlist \item[\hskip \labelsep{\bf #1\ #2.}]}

\def\@opargbegintheorem#1#2#3{\it \trivlist
      \item[\hskip \labelsep{\bf #1\ #2\ (#3).}]}

\makeatother



\newif\ifshortconferences
\shortconferencesfalse
\newif\ifmediumconferences
\mediumconferencesfalse

\def\ending#1{{\count1=#1\relax
\count2=\count1 \divide\count2 by 100 \multiply\count2 by 100
\advance\count1 by -\count2
\ifnum\count1=11
th%
\else \ifnum\count1=12
th%
\else \ifnum\count1=13
th%
\else \count2=\count1 \divide\count1 by 10 \multiply\count1 by 10
\advance\count2 by -\count1 \ifnum\count2=1
st%
\else \ifnum\count2=2
nd%
\else \ifnum\count2=3
rd%
\else th%
\fi\fi\fi\fi\fi\fi }}

\def\Proceedingsofthe{\ifshortconferences Proc.\else\ifmediumconferences Proc.\else Proceedings of the\fi\fi}

\newcounter{confnum}

\def\conf#1#2{%
\setcounter{confnum}{#2}%
\addtocounter{confnum}{-\csname #1zero\endcsname}%
\ifnum\value{confnum}=1%
\expandafter\ifx\csname #1One\endcsname\relax%
\Proceedingsofthe\ \arabic{confnum}\ending{\value{confnum}}\ \csname #1name\endcsname%
\else \csname #1One\endcsname\fi%
\else%
\Proceedingsofthe\ \arabic{confnum}\ending{\value{confnum}}\ \csname
#1name\endcsname\fi}

\def\qsym{\vrule width0.7ex height0.9em depth0ex}

\newif\ifqed\qedtrue

\def\noqed{\global\qedfalse}

\def\qed{\ifqed{\penalty1000\unskip\nobreak\hfil\penalty50
\hskip2em\hbox{}\nobreak\hfil\qsym
\parfillskip=0pt \finalhyphendemerits=0\par\medskip}\fi\global\qedtrue}

\makeatletter
\def\eqnqed{\noqed
    \def\@tempa{equation}
    \ifx\@tempa\@currenvir\def\@eqnnum{\qsym}%
    \addtocounter{equation}{-1}\else%
    \def\@@eqncr{\let\@tempa\relax
    \ifcase\@eqcnt \def\@tempa{& & &}\or \def\@tempa{& &}%
      \else \def\@tempa{&}\fi
     \@tempa {\def\@eqnnum{{\qsym}}\@eqnnum}
     \global\@eqnswtrue\global\@eqcnt\z@\cr}\fi}


\def\eqnlabel#1#2{\if@filesw {\let\thepage\relax%
   \def\protect{\noexpand\noexpand\noexpand}%
   \edef\@tempa{\write\@auxout{\string
      \newlabel{#2}{{{#1}}{\thepage}}}}%
   \expandafter}\@tempa%
   \if@nobreak \ifvmode\nobreak\fi\fi\fi%
    \def\@tempa{equation}
    \ifx\@tempa\@currenvir\def\theequation{{#1}}%
    \addtocounter{equation}{-1}\else%
    \def\@@eqncr{\let\@tempa\relax
    \ifcase\@eqcnt \def\@tempa{& & &}\or \def\@tempa{& &}%
      \else \def\@tempa{&}\fi
     \@tempa {\def\@eqnnum{{#1}}\@eqnnum}
     \global\@eqnswtrue\global\@eqcnt\z@\cr}\fi}

\makeatother



\def\QED{\qed}

\makeatother




\squeeze

\newcommand{\sampleCount}{ s}
\newcommand{\dist}{{\rm dist}}

\newcommand{\bigO}{O}

\newcommand{\gammabound}{{\gamma\over 6}} 

\begin{document}

\date{}

\title{Speeding Up Constrained $k$-Means Through 2-Means }

\author{
Qilong Feng$^1$ and Bin Fu$^2$
 \\ \\
$^1$School of Information Science and Engineering\\
 Central South 
University, P.R. China\\ csufeng@csu.edu.cn\\
\\
$^2$Department of Computer Science\\
 University of Texas Rio Grande Valley,  USA\\ bin.fu@utrgv.edu\\
} \maketitle


\begin{abstract}
For the constrained 2-means problem, we present a $\bigO\left(dn+d({1\over\epsilon})^{\bigO({1\over \epsilon})}\log n\right)$ time algorithm. It  generates a collection $U$ of approximate center pairs $(c_1, c_2)$ such that one of pairs in $U$ can induce a $(1+\epsilon)$-approximation for the problem. The existing approximation scheme for the
constrained 2-means problem takes
$\bigO(({1\over\epsilon})^{\bigO({1\over \epsilon})}dn)$ time, and
the existing approximation scheme for the constrained $k$-means
problem takes $\bigO(({k\over\epsilon})^{\bigO({k\over
\epsilon})}dn)$ time. Using the method developed in this paper, we
point out that every existing approximating scheme for the constrained
$k$-means so far with time $C(k, n, d, \epsilon)$ can be transformed
to a new approximation scheme with time complexity ${C(k, n, d,
\epsilon)/ k^{\Omega({1\over\epsilon})}}$.
\end{abstract}


\section{Introduction}

The $k$-means problems is to partition a set $P$ of points in
$d$-dimensional space $\mathbb{R}^d$ into $k$ subsets $P_1,\cdots, P_k$ such
that  $\sum_{i=1}^k\sum_{p\in P_i}||p-c_i||^2$ is minimized, where
$c_i$ is the center of $P_i$, and $||p-q||$ is the distance between
two points of $p$ and $q$. The $k$-means problem is one of the
classical NP-hard problems in the field of computer science, and has
broad applications as well as theoretical importance. The $k$-means
problem is NP-hard even for the case $k=2$~\cite{AloiseDHP09}. The
classical $k$-means problem and $k$-median problem have received a lot of attentions in the last decades~\cite{Matousek00,BadoiuHI02,Vega,Har-PeledM04,KumarSS10,AckermannBS10,Chen06,Jaiswal0S14,FeldmanMS07,OstrovskyRSS12}.

Inaba, Katoh, and Imai~\cite{InabaKI94} showed that $k$-means problem has an
exact algorithm~\cite{InabaKI94} with running time $\bigO(n^{kd+1})$. For the $k$-means problem, Arthur and Vassilvitskii \cite{Arthur} gave a $\Theta (\log k$)-approximation algorithm. A $(1+\epsilon)$-approximation
scheme was derived by de la Vega et al.~\cite{Vega} with time $\bigO(e^{{k^3/\epsilon^8}\ln (k/\epsilon)\ln
    k}\cdot n\log^kn)$. Kumar, Sabharwal,
and Sen~\cite{KumarSS10} presented a
$(1+\epsilon)$-approximation algorithm for the $k$-means problem with running time $\bigO(2^{(k/\epsilon)^{\bigO(1)}}nd)$.
Ostrovsky et al. ~\cite{OstrovskyRSS12}
developed a $(1+\epsilon)$-approximation for the $k$-means problem under the
separation condition with running time $\bigO(2^{\bigO(k/\epsilon)}nd)$.
Feldman,
Monemizadeh, and Sohler~\cite{FeldmanMS07} gave a $(1+\epsilon)$-approximation scheme for the $k$-means problem using
corset with running time $\bigO(knd+d\cdot
\poly(k/\epsilon)+2^{\tilde{\bigO}{(k/\epsilon)}})$. Jaiswal, Kumar, and Yadav~\cite{Jaiswal0S14} presented
a $(1+\epsilon)$-approximation algorithm for the $k$-means problem
using $D^2$-sampling method with running time $\bigO(2^{\tilde{\bigO}(k^2/\epsilon)}nd)$. Jaiswal, Kumar, and Yadav~\cite{Jaiswal} gave a $(1+\epsilon)$-approximation algorithm with running time $\bigO(2^{\tilde{\bigO}(k/\epsilon)}nd)$. Kanungo et al. \cite{Kanungo} presented a $(9+\epsilon)$-approximation algorithm for the problem in polynomial time by applying local search.  Ahmadian et al. \cite{Ahmadian} gave a $(6.375+\epsilon)$-approximation algorithm for the $k$-means problem in Euclidean space.
For fixed $d$ and arbitrary $k$, Friggstad, Rezapour, and Salavatipour \cite{ls1} and Cohen-Addad, Klein, and Mathieu \cite{ls2} proved that the local search algorithm yields a PTAS for the problem, which runs in $(nk)^{(1/\epsilon)^{O(d)}}$ time. Cohen-Addad \cite{Cohenaddad2017A} further showed that the running time can be improved to $nk(\log n)^{(d/\epsilon)^{O(d)}}$.



The input data of the $k$-means problem always satisfies local properties. However, for many applications, each cluster of the input data may satisfy some additional constraints.  It seems that the
constrained $k$-means problem has different structure from the
classical $k$-means problem, which lets each point go to the cluster
with nearest center. The constrained $k$-means problems have been paid lots of attention in the literature, such as the chromatic clustering problem \cite{arkin2015bichromatic,chromatic}, the $r$-capacity clustering problem \cite{capa}, $r$-gather clustering \cite{r}, fault tolerant clustering \cite{FAULT}, uncertain data clustering \cite{uncertain}, semi-supervised clustering \cite{back,valls2009using}, and $l$-diversity clustering \cite{diver}. As given in  Ding and Xu \cite{cmeans1}, all $k$-means problems with constraint conditions can be defined as follows.

\begin{definition}[Constrained $k$-means problem]
    Given a point set $P\subseteq\mathbb{R}^d$, a list of constraints $\mathbb{L}$, and a positive integer $k$, the constrained $k$-means problem is to partition $P$ into $k$ clusters $\mathbb{P}=\{P_1\ldots P_k\}$ such that all the constraints in $\mathbb{L}$ are satisfied and $\sum_{P_i\in \mathbb{P}}\sum_{x\in P_i}||x-c(P_i)||^2$ is minimized, where $c(P_i)=\frac{1}{|P_i|}\sum_{x\in P_i}x$ denotes the centroid of $P_i$.
\end{definition}

Recent years, there are some progress for the constrained $k$-means
problem. The first polynomial time approximation scheme with running
time $\bigO(2^{\poly(k/\epsilon)}(\log n)^knd)$ for the constrained
$k$-means problem was shown by Ding and Xu~\cite{cmeans1}, and a collection of size $O(2^{poly (k/\epsilon)}(\log n)^{k+1})$ of candidate approximate centers can be obtained. The
existing fastest approximation schemes for the constrained $k$-means problem
takes
$\bigO(nd\cdot 2^{\tilde{\bigO}(k/\epsilon)})$ time~\cite{BhattacharyaJaiswalKumar17,bhattacharya2018faster,FengHuHuangWang17}, which was first
derived by Bhattacharya, Jaiswai, and
Kumar~\cite{BhattacharyaJaiswalKumar17,bhattacharya2018faster}. Their algorithm gives a collection of size $O(2^{\tilde{O}(k/\epsilon)})$ of candidate approximate centers. Feng et al.~\cite{FengHuHuangWang17} analyzed the complexity
of~\cite{BhattacharyaJaiswalKumar17,bhattacharya2018faster} and gave an algorithm with running time $O((\frac{1891ek}{\epsilon^2})^{8k/\epsilon}nd)$, which outputs a collection of size $O((\frac{1891ek}{\epsilon^2})^{8k/\epsilon}n)$ of candidate approximate centers.

It is known that 2-means problem is the smallest version of the $k$-means problem, and remains being NP-hard. Obviously, all the approximation algorithms of the $k$-means problem can be directly applied to get approximation algorithms for the 2-means problem. However, not all the approximation algorithms for 2-means problem can be generalized to solve the $k$-means problem. The understanding of the characteristics of the 2-means problem will give new insight to the $k$-means problem. Meanwhile,  getting two clusters of the input data is useful in many interesting applications, such as the ``good" and ``bad" clusters of input data, the ``normal" and ``abnormal" clusters of input data, etc.

For the 2-means problem, Inaba, Katoh, and Imai~\cite{InabaKI94} presented an
$(1+\epsilon)$-approximation scheme for $2$-means with running time
$\bigO(n({1\over \epsilon})^d)$.  Matou{\v{s}}ek~\cite{Matou} gave a deterministic $(1+\epsilon)$-approximation algorithm with running time $O(n$log$ n(1/\epsilon)^{d-1})$. Sabharwal and Sen ~\cite{sabharwal2005} presented a $(1+\epsilon)$-approximation algorithm with linear running time $O((1/\epsilon)^{O(1/\epsilon)}(d/\epsilon)^dn)$. Kumar, Sabharwal, and Sen~\cite{kumar2004} gave a randomized approximation algorithm with running time $O(2^{(1/\epsilon)^{O(1)}}dn)$.

This paper develops a new technology to deal with the constrained 2-means problem. It is based on how balance between the sizes of clusters in the constrained $2$-means problem.  This brings an algorithm  with running time $\bigO(dn+d({1\over\epsilon})^{\bigO({1\over \epsilon})}\log n)$. Our algorithm outputs a collection of size $O(({1\over\epsilon})^{\bigO({1\over \epsilon})}\log n)$ of candidate approximate centers, in which one of them induces a $(1+\epsilon)$-approximation for the constrained $2$-means problem. The technology shows a faster way to obtain first two approximate centers when applied to the constrained $k$-means, and can speed up the existing approximation schemes for constrained $k$-means with
$k$ greater than 2.
Using this method developed in this paper, we point out every
existing PTAS for the constrained $k$-means so far with time $C(k,
n, d, \epsilon)$ can be transformed to a new PTAS with time
complexity ${C(k, n, d, \epsilon)/ k^{\Omega({1\over\epsilon})}}$.
Therefore, we provide a unified
approach to speed up the existing approximation scheme for the
constrained $k$-means problem.


This papers is organized with a few sections. In Section~\ref{prelim-sect}, we give some basic notations. In
section~\ref{overview-sect}, we give an overview of the new
algorithm for the constrained $2$-means problem. In
section~\ref{2-means-sect}, we give a much faster approximation
scheme for the constrained $2$-means problem. In
section~\ref{k-mean-sect}, we apply the method to the general
constrained $k$-means problem, and show faster approximation
schemes.

\section{Preliminaries}\label{prelim-sect}

This section gives some notations that are used in the algorithm design.

\begin{definition} Let $c$ be a real number in $[1,+\infty)$. Let $P$ be a set of points in $\mathbb{R}^d$.
    \begin{itemize}
        \item
        A partition $Q_0,Q_1$ of $P$ is $c$-balanced if $|Q_i|\le
        c|Q_{1-i}|$ for $i=0,1$.
        \item
        A $c$-balanced $k$-means problems is to partition $P$ into $P_1,\cdots, P_k$ such that $|P_i|\le c|P_j|$ for all $1\le i,j\le k$.
    \end{itemize}
\end{definition}

\begin{definition}\label{f2-def}
    Let $S$ be a set of points in $\mathbb{R}^d$, and $q\in
    \mathbb{R}^d$.
    \begin{itemize}
        \item
        Define $f_2(q, S)=\sum_{p\in S} ||p-q||^2$.
        \item
        Define $c(S)={1\over |S|}\sum_{p\in S}p$.
    \end{itemize}
\end{definition}

\begin{definition}\label{OPT-def} Let $P$ be a set of points in $\mathbb{R}^d$, and
    $P_1,\cdots, P_k$ be a partition of $P$.
    \begin{enumerate}[1.]
        \item
        Define $m_j=c(P_j)$.
        \item
        Define $\beta_j={|P_j|\over |P|}$.
        \item\label{sigma-j-def}
        Define $\sigma_j=\sqrt{{f_2(m_j, P_j)\over |P_j|}}$.
        \item
        Define $OPT_k(P)=\sum_{j=1}^k\sum_{p\in P_j}||p-c(P_j)||^2$.
        \item\label{sigma-opt2-def}
        Define $\sigma_{opt}=\sqrt{{OPT_k(P)\over
                |P|}}=\sqrt{\sum_{i=1}^k\beta_i\sigma_i^2}$.
    \end{enumerate}
\end{definition}

Chernoff Bound
(see~\cite{MotwaniRaghavan00}) is used in the approximation
algorithm when our main result is applied in some concrete model.

\begin{theorem}\label{chernoff3-theorem}
Let $X_1,\ldots , X_{\sampleCount}$ be $\sampleCount$ independent
random $0$-$1$ variables, where $X_i$ takes $1$ with probability at
least $p$ for $i=1,\ldots , \sampleCount$. Let
$X=\sum_{i=1}^{\sampleCount} X_i$. Then for any
$\delta>0$,
 $\Pr(X<(1-\delta)p\sampleCount)<e^{-{1\over 2}\delta^2 p\sampleCount}$.
\end{theorem}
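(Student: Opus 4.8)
The plan is to prove this by the standard exponential-moment (Bernstein--Chernoff) method, adapted to the one-sided hypothesis that each $X_i$ equals $1$ with probability \emph{at least} $p$ rather than exactly $p$. First I would dispose of the trivial range: if $\delta \ge 1$ then $(1-\delta)p\sampleCount \le 0 \le X$, so the event $\{X < (1-\delta)p\sampleCount\}$ is empty and has probability $0$, which is below the claimed bound. Hence it suffices to treat $0 < \delta < 1$.

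For the main range, fix a parameter $t>0$ and apply Markov's inequality to the nonnegative random variable $e^{-tX}$:
\[
\Pr\bigl(X < (1-\delta)p\sampleCount\bigr) = \Pr\bigl(e^{-tX} > e^{-t(1-\delta)p\sampleCount}\bigr) \le e^{t(1-\delta)p\sampleCount}\, E\bigl[e^{-tX}\bigr].
\]
By independence, $E\bigl[e^{-tX}\bigr] = \prod_{i=1}^{\sampleCount} E\bigl[e^{-tX_i}\bigr]$, and writing $p_i = \Pr(X_i=1) \ge p$ gives $E\bigl[e^{-tX_i}\bigr] = 1 - p_i(1-e^{-t})$. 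The key point for handling the ``at least $p$'' direction is that, since $t>0$ forces $1-e^{-t}>0$, this quantity is \emph{decreasing} in $p_i$; therefore $E\bigl[e^{-tX_i}\bigr] \le 1 - p(1-e^{-t}) \le e^{-p(1-e^{-t})}$, the last step using $1+x \le e^x$. Multiplying over $i$ yields $E\bigl[e^{-tX}\bigr] \le e^{-p\sampleCount(1-e^{-t})}$ and hence
\[
\Pr\bigl(X < (1-\delta)p\sampleCount\bigr) \le e^{p\sampleCount[t(1-\delta) - (1-e^{-t})]}.
\]

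Next I would optimize the free parameter. Minimizing the exponent over $t>0$ gives $e^{-t} = 1-\delta$, i.e.\ $t = \ln\frac{1}{1-\delta} > 0$, which is legitimate precisely because $0<\delta<1$. Substituting $1-e^{-t}=\delta$ and $t(1-\delta) = -(1-\delta)\ln(1-\delta)$ collapses the exponent to $p\sampleCount\bigl[-(1-\delta)\ln(1-\delta) - \delta\bigr]$, giving the sharp form $\Pr\bigl(X < (1-\delta)p\sampleCount\bigr) \le \bigl(e^{-\delta}/(1-\delta)^{1-\delta}\bigr)^{p\sampleCount}$.

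Finally I would reduce this to the stated Gaussian-type tail. It remains to verify the elementary inequality $g(\delta) := (1-\delta)\ln(1-\delta) + \delta - {1\over 2}\delta^2 \ge 0$ for $0 \le \delta < 1$, after which $-(1-\delta)\ln(1-\delta) - \delta \le -{1\over 2}\delta^2$ and the exponent is at most $-{1\over 2}\delta^2 p\sampleCount$, as required. I expect this last calculus step, rather than the probabilistic part, to be the only real obstacle; it is handled by noting $g(0)=0$ and $g'(\delta) = -\ln(1-\delta) - \delta \ge 0$ on $[0,1)$, the latter because $-\ln(1-\delta) = \sum_{n\ge 1}\delta^n/n \ge \delta$, so $g$ is nondecreasing and stays nonnegative. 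The strict inequality in the statement follows because $1+x < e^x$ is strict for the nonzero argument $x=-p(1-e^{-t})$ arising above.
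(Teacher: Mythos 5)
Your proof is correct. Note that the paper does not prove this statement at all: Theorem~\ref{chernoff3-theorem} is quoted as a known Chernoff bound, with only a citation to Motwani and Raghavan's textbook, so your argument supplies what the paper delegates to the reference. What you give is exactly the standard exponential-moment derivation found there: Markov's inequality applied to $e^{-tX}$, factorization by independence, the bound $1+x\le e^x$, the optimal choice $e^{-t}=1-\delta$, and the calculus inequality $(1-\delta)\ln(1-\delta)+\delta\ge \tfrac{1}{2}\delta^2$ on $[0,1)$. The two points that genuinely require care beyond the textbook statement are both handled correctly: the hypothesis ``probability at least $p$'' (rather than exactly $p$) is absorbed by observing that $1-p_i(1-e^{-t})$ is decreasing in $p_i$ when $t>0$, and the range $\delta\ge 1$ is disposed of trivially since then the event is empty. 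The only unstated edge case is $p=0$, where your strictness argument (which needs the argument $-p(1-e^{-t})$ of $1+x<e^x$ to be nonzero) degenerates; but there the claim reads $\Pr(X<0)<1$ and is immediate, so nothing is lost.
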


The union bound is
expressed by the inequality
\begin{eqnarray}\Pr(E_1\cup E_2 \ldots \cup E_{m})\le
\Pr(E_1)+\Pr(E_2)+\ldots+\Pr(E_{m}),\label{prob-Add-Ineqn}
\end{eqnarray}
where $E_1,E_2,\ldots, E_{m}$ are
$m$ events that may not be independent.
We will use the famous Stirling formula
\begin{eqnarray}
n!\approx \sqrt{2\pi n}\cdot {n^n\over e^n}.
\end{eqnarray}





For two points $p=(x_1,x_2,\cdots, x_d)$ and $q=(y_1, y_2,\cdots, y_d)$ in $\mathbb{R}^d$, both $\dist(p,q)$ and $||p-q||$ represent their Euclidean distance $\sqrt{\sum_{i=1}^d (x_i-y_i)^2}$.  For a finite set $S$, $|S|$ is the number
of elements in it.

\begin{lemma}\cite{KumarSS10}\label{KumarSS10-eqn} For a set $P\subseteq \mathbb{R}^d$ of points, and any point
$x\in \mathbb{R}^d$, $f_2(x, P)=f_2(c(P),P)+|P|||c(P)-x||^2$.
\end{lemma}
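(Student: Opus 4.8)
The plan is to establish this identity, which is the standard ``parallel axis'' decomposition of a sum of squared distances, by expanding each squared norm about the centroid $c(P)$ and then using the defining property of the centroid to annihilate the resulting linear cross term. Write $c=c(P)$ for brevity. For each $p\in P$ I would route the displacement from $x$ through $c$ by writing $p-x=(p-c)+(c-x)$, and then expand, using bilinearity of the Euclidean inner product,
\[
||p-x||^2 = ||p-c||^2 + 2\,(p-c)\cdot(c-x) + ||c-x||^2.
\]

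Next I would sum this identity over all $p\in P$, producing three contributions. The first sum is $\sum_{p\in P}||p-c||^2=f_2(c,P)=f_2(c(P),P)$ by Definition~\ref{f2-def}, and the third sum is $|P|\cdot||c-x||^2$ since its summand does not depend on $p$. Factoring the constant vector $(c-x)$ out of the remaining cross term leaves
\[
2\sum_{p\in P}(p-c)\cdot(c-x) = 2\Bigl(\sum_{p\in P}p - |P|\,c\Bigr)\cdot(c-x).
\]

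The crux---indeed the only step carrying any content---is the observation that this cross term is zero. By the definition $c=c(P)=\frac{1}{|P|}\sum_{p\in P}p$ we have $\sum_{p\in P}p=|P|\,c$, so the factor $\sum_{p\in P}p-|P|\,c$ is the zero vector and the entire cross term vanishes. Substituting back then yields $f_2(x,P)=f_2(c(P),P)+|P|\cdot||c(P)-x||^2$, as claimed. I do not expect a genuine obstacle here: once the expansion about $c$ is set up, the result is a one-line computation, and the only point requiring care is verifying that the centroid is precisely the point at which the linear term cancels.
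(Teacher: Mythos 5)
Your proof is correct: the expansion of $||p-x||^2$ about the centroid and the vanishing of the cross term $\sum_{p\in P}(p-c)=0$ is exactly the standard argument for this identity. The paper itself offers no proof---it cites the lemma from Kumar, Sabharwal, and Sen~\cite{KumarSS10}---and your one-line computation is precisely the canonical derivation of that cited result, so there is nothing to add or repair.
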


\begin{lemma}\cite{InabaKI94}\label{geometric-random-lemma}
Let $S$ be a set of points in $\mathbb{R}^d$. Assume that $T$ is a set of
points obtained by sampling points from $S$ uniformly and independently. Then
for any $\delta>0$, $||c(T)-c(S)||^2\le {1\over \delta |T|}\sigma^2$
with probability at least $1-\delta$, where $\sigma^2={1\over
|S|}\sum_{q\in S}||q-c(S)||^2$.
\end{lemma}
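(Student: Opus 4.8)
The plan is to reduce the high-probability bound to an exact computation of the \emph{expected} squared deviation $E\big[\,||c(T)-c(S)||^2\,\big]$, followed by a single application of Markov's inequality. Write $m=|T|$ and let $X_1,\ldots,X_m$ be the points of $T$, regarded as independent random vectors each distributed uniformly on $S$. Then $c(T)=\frac1m\sum_{i=1}^m X_i$, and because the sampling is uniform, $E[X_i]=\frac1{|S|}\sum_{q\in S}q=c(S)$; that is, $c(S)$ is the common mean of each draw and $c(T)$ is an unbiased estimate of it.

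First I would compute the variance of this sample mean. Writing $\mu=c(S)$, we have $c(T)-\mu=\frac1m\sum_{i=1}^m(X_i-\mu)$, so
\[
||c(T)-\mu||^2=\frac1{m^2}\sum_{i=1}^m\sum_{j=1}^m (X_i-\mu)\cdot(X_j-\mu).
\]
Taking expectations, each off-diagonal term ($i\ne j$) splits by independence into $E[X_i-\mu]\cdot E[X_j-\mu]=0$, while each of the $m$ diagonal terms equals $E\big[\,||X_i-\mu||^2\,\big]=\frac1{|S|}\sum_{q\in S}||q-c(S)||^2=\sigma^2$. Hence $E\big[\,||c(T)-c(S)||^2\,\big]=\sigma^2/m=\sigma^2/|T|$.

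Since $||c(T)-c(S)||^2\ge 0$, Markov's inequality then gives, for any $\delta>0$,
\[
\Pr\!\left(||c(T)-c(S)||^2\ge \frac1\delta\cdot\frac{\sigma^2}{|T|}\right)\le \delta,
\]
and passing to the complementary event yields exactly $||c(T)-c(S)||^2\le \frac{\sigma^2}{\delta|T|}$ with probability at least $1-\delta$, as claimed. The argument is essentially routine; the only step demanding care is the cancellation of the cross terms, which rests on the \emph{independence} of the draws together with the fact that uniform sampling makes $c(S)$ the common mean. I therefore expect no genuine obstacle here beyond this standard variance-of-the-sample-mean bookkeeping.
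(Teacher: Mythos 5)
Your proof is correct and is essentially the argument of the cited source: the paper itself imports this lemma from Inaba--Katoh--Imai without proof, and their proof is exactly your computation, namely that $c(T)$ is an unbiased estimator of $c(S)$ whose expected squared deviation is $\sigma^2/|T|$ (the cross terms vanishing by independence), followed by Markov's inequality. No gap here; the cancellation of cross terms is justified precisely as you state.
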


\begin{lemma}\cite{cmeans1}\label{large-portion-lemma}
Let $Q$ be a set of points in $\mathbb{R}^d$, and $Q_1$ be an arbitrary subset
of $Q$ with $\alpha |Q|$ points for some $0<\alpha \le 1$. Then
$||c(Q)-c(Q_1)||\le \sqrt{1-\alpha\over\alpha}\sigma$, where
$\sigma^2={1\over |Q|}\sum_{q\in Q}||q-c(Q)||^2$.
\end{lemma}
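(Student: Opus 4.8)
The plan is to prove the bound by decomposing $Q$ into the given subset $Q_1$ and its complement $Q_2 = Q \setminus Q_1$, and then to exploit the affine relationship among the three centroids together with the parallel-axis identity of Lemma~\ref{KumarSS10-eqn}. First I would dispose of the trivial case $\alpha = 1$, where $Q_1 = Q$, so $c(Q_1) = c(Q)$ and both sides vanish. For $0 < \alpha < 1$, write $n = |Q|$, $c = c(Q)$, $c_1 = c(Q_1)$, and $c_2 = c(Q_2)$, so that $|Q_1| = \alpha n$ and $|Q_2| = (1-\alpha)n$.

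The first key step is the observation that the global centroid is the weighted average of the two cluster centroids: since $nc = \sum_{q\in Q_1} q + \sum_{q\in Q_2} q = \alpha n\, c_1 + (1-\alpha)n\, c_2$, we get $c = \alpha c_1 + (1-\alpha)c_2$. Rearranging yields $c - c_1 = (1-\alpha)(c_2 - c_1)$ and $c_2 - c = \alpha(c_2 - c_1)$, so both displacements are scalar multiples of the single vector $c_2 - c_1$. Writing $D = ||c_2 - c_1||$, we obtain $||c - c_1|| = (1-\alpha)D$ and $||c_2 - c|| = \alpha D$.

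Next I would lower-bound $\sigma^2$ by applying Lemma~\ref{KumarSS10-eqn} to each part with the point $x = c$:
\[
n\sigma^2 = \sum_{q\in Q_1}||q-c||^2 + \sum_{q\in Q_2}||q-c||^2 = f_2(c_1,Q_1) + \alpha n||c_1-c||^2 + f_2(c_2,Q_2) + (1-\alpha)n||c_2-c||^2 .
\]
Dropping the nonnegative within-cluster terms $f_2(c_1,Q_1)$ and $f_2(c_2,Q_2)$ and substituting the displacement relations gives
\[
n\sigma^2 \ge \alpha n(1-\alpha)^2 D^2 + (1-\alpha)n\alpha^2 D^2 = \alpha(1-\alpha)n D^2 ,
\]
so $D^2 \le \sigma^2 / (\alpha(1-\alpha))$. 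Finally, combining this with $||c-c_1||^2 = (1-\alpha)^2 D^2$ yields $||c-c_1||^2 \le (1-\alpha)^2 \cdot \sigma^2/(\alpha(1-\alpha)) = \frac{1-\alpha}{\alpha}\sigma^2$, which is exactly the claimed inequality after taking square roots.

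I expect the only real subtlety to be recognizing that a naive bound will not suffice: applying Cauchy--Schwarz directly to the identity $c_1 - c = \frac{1}{\alpha n}\sum_{q\in Q_1}(q-c)$ only yields the weaker estimate $||c-c_1|| \le \sigma/\sqrt{\alpha}$, because it ignores the constraint that $\sum_{q\in Q}(q-c) = 0$. The extra factor $\sqrt{1-\alpha}$ comes precisely from accounting for the complementary cluster $Q_2$, which is why the two-part decomposition above, rather than a one-sided estimate, is the right vehicle for the proof.
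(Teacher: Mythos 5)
Your proof is correct, and every step checks out: the affine identity $c(Q)=\alpha c(Q_1)+(1-\alpha)c(Q_2)$, the two applications of Lemma~\ref{KumarSS10-eqn} with $x=c(Q)$, the dropping of the nonnegative within-cluster terms, and the final combination $\|c(Q)-c(Q_1)\|^2=(1-\alpha)^2D^2\le\frac{1-\alpha}{\alpha}\sigma^2$ are all valid, and the $\alpha=1$ edge case is handled. Note, however, that this paper does not prove the lemma at all: it is imported verbatim from \cite{cmeans1} (Ding and Xu), so there is no in-paper argument to compare yours against. Your argument is essentially the standard one for this fact (and matches the spirit of the source), resting only on the parallel-axis identity that the paper also imports as Lemma~\ref{KumarSS10-eqn}; your closing remark that a one-sided Cauchy--Schwarz estimate loses the $\sqrt{1-\alpha}$ factor correctly identifies why the complementary-cluster decomposition is the right mechanism, since the constraint $\sum_{q\in Q}(q-c(Q))=0$ is exactly what the two-part bookkeeping exploits.
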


\begin{lemma}~\cite{BhattacharyaJaiswalKumar17,bhattacharya2018faster}\label{norm-square-lemmma}
 For any three points $x, y,z\in \mathbb{R}^d$, we have $||x-z||^2\le
 2||x-y||^2+2||y-z||^2$.
\end{lemma}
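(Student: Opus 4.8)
The plan is to reduce the three–point inequality to a two–vector statement and then invoke an elementary identity. Setting $a=x-y$ and $b=y-z$, the telescoping identity $x-z=(x-y)+(y-z)=a+b$ lets me rewrite the claim as $||a+b||^2\le 2||a||^2+2||b||^2$, which no longer mentions $y$ at all. So the whole lemma comes down to bounding the squared norm of a sum of two vectors in $\mathbb{R}^d$ by twice the sum of their squared norms.

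For the core inequality I would use the parallelogram law. Expanding through the inner product gives
\[
2||a||^2+2||b||^2-||a+b||^2=||a-b||^2\ge 0,
\]
so that $||a+b||^2\le 2||a||^2+2||b||^2$ immediately, with equality exactly when $a=b$, i.e.\ when $y$ is the midpoint of $x$ and $z$. Substituting back $a=x-y$ and $b=y-z$ yields the stated bound. An equally short alternative is to expand $||a+b||^2=||a||^2+2\langle a,b\rangle+||b||^2$ and bound the cross term by Cauchy--Schwarz followed by AM--GM, namely $\langle a,b\rangle\le ||a||\,||b||\le \frac12(||a||^2+||b||^2)$; this route is perhaps more transparent about where the constant $2$ comes from.

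I do not expect a genuine obstacle here: the result is the standard ``relaxed triangle inequality'' for squared Euclidean distance, and both proofs above are a single line of algebra. The only point worth a moment's care is the orientation of the cross–term bound—one must feed $\langle a,b\rangle$ through AM--GM so that the factor $2$ actually appears rather than being lost—but the parallelogram identity sidesteps even that by making the slack $||a-b||^2$ explicit. It is also worth noting that the argument is dimension–free and relies only on $||\cdot||$ arising from an inner product, so it would transfer verbatim to any Hilbert space should a later section require it.
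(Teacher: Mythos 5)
Your proof is correct. Note that the paper itself does not prove this lemma at all---it is imported as a cited fact from Bhattacharya, Jaiswal, and Kumar---so there is no in-paper argument to compare against; your reduction to $\|a+b\|^2\le 2\|a\|^2+2\|b\|^2$ via the parallelogram law (or equivalently the Cauchy--Schwarz/AM--GM expansion of the cross term) is the standard one-line justification and fully settles the statement, including the correct equality condition $y=(x+z)/2$.
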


\section{Overview of Our Method}\label{overview-sect}

In order to develop a faster algorithm for the constrained $2$-Means
problem, we assume that the input set $P$ has two clusters $P_1$ and $P_2$.
We will try to find a subset $H_1$ and $H_2$ of size $M$ from $P_1$
and $P_2$, respectively, where $M$ is an integer to be large enough
to derive an approximate center by Lemma~\ref{large-portion-lemma}.
We consider two different cases. The first case is that the two
clusters $P_1$ and $P_2$ with $|P_1|\ge |P_2|$ have a balanced sizes
of points ($|P_1|\ge |P_2|=\epsilon^{O(1)}|P|)$. We get a set
$N_a$ of random samples, and another set $N_b$ of random
samples from $P$. An approximate center $c_1$ for the cluster $P_1$
will be generated via one of the subsets of size $M$ from  $N_a$. An
approximate center $c_2$ for $P_2$ will be generated via one of the subsets of size $M$ from
$N_b$. The two parameters  $N_a$ and  $N_b$ are selected based on the balanced condition between the sizes of $P_1$ and $P_2$.

We discuss the case that $|P_1|$ is much larger than $|P_2|$. We
generate a subset $V_a\subseteq P_1$  with $|V_a|=M$  that will be
used to generate an approximate center $c_1=c(V_a)$ for $P_1$. The
set $V_a$ can be obtained via $M$ random samples from $P$ since
$|P_1|$ is much larger than $|P_2|$.
It also has
two cases to find another approximate center $c_2$ for $P_2$.
The
first case is that almost all points of $P_1$ is close to $c_1$. In
this case, we just let $c_2$ be the same as $c_1$, which is based on
Lemma~\ref{large-portion-lemma}. The second case is that there are
enough points of $P_2$ to be far from $c_1$. This transforms the
problem into finding the second approximate center $c_2$ for the
second cluster $P_2$ assuming the approximate center $c_1$ is good
enough for $P_1$.

Phase $0$ of the algorithm lets $Q_0$ be equal to $P$. Phase $i+1$ extracts the set of half elements $Q_{i+1}$ from $Q_i$ with larger
distances to $c_1$ than the rest half.
It will have $\bigO(\log n)$ phases
to search $c_2$. The next phase will shrink the search area by a
constant factor. This method was used in the existing algorithms. As
we only have one approximate center for $P_1$, it saves the amount
of  time by a factor to find the first approximate center. This
makes our approximation algorithm run in $\bigO(dn+({1\over\epsilon})^{\bigO({1\over \epsilon})}\log n)$ time for the
constrained $2$-means problem.

\section{Approximation Algorithm for Constrained  2-means}\label{2-means-sect}

In this section, an approximation scheme will be presented for the constrained
$2$-means problem. The methods used in this section will be applied
to the general constrained $k$-means problem in
Section~\ref{k-mean-sect}. We define some parameters before
describing the algorithm for the constrained $2$-means  problem.


\subsection{Setting Parameters}\label{parameters-sect}

Assume that real parameter $\epsilon\in (0,1)$ is used to control the  approximation ratio, and real parameter $\gamma\in (0,1)$ is used to control failure
probability of the randomized algorithm.
We define some constants for our algorithm and its
analysis.
All the parameters that are set up through (\ref{gamma-eqn}) to (\ref{varsigma-delta1-ineqn}) in this section are positive real constants.

\begin{eqnarray}
\gamma&=&{1\over 3},\label{gamma-eqn}\\
\delta=\delta_6&=&{1\over 10},\label{delta6-eqn}\\
d_1&=&\delta_6,\\
d_2&=&4,\label{d2-eqn}.
\end{eqnarray}

We select $\delta_2\in (0,{1\over 2}]$ to satisfy
inequality~(\ref{delta2-d2-ineqn}).
\begin{eqnarray}
(1+2\delta_2)\cdot d_2\le d_2 +\delta. \label{delta2-d2-ineqn}
\end{eqnarray}

\begin{eqnarray}
d_4&=&{(10+
    18\delta_6)\over \delta_2^2}\cdot ({1\over \gamma}+\log {1\over \gamma}), \label{c4-eqn}\\
\gamma^*&=&{1\over 2}-\delta_6,\label{gamma4-eqn}\\
\gamma_{5,b}&=&{\gamma\over 12},\label{gamma5-b-eqn}\\
\eta&=&{\delta_6\over 2},\label{eta-eqn}\\
\alpha_1&=&5,\label{alpha1-eqn}\\
\alpha_2&=&2\alpha_1\alpha_5+\delta_6,\label{alpha2-eqn}\\
\alpha_6&=&{\gamma^*\cdot d_4\over 12}, \label{alpha6-eqn}\\
 \alpha_5&=&{4\over
1-\delta_6-{4\over \alpha_6}}, {\rm \ and}\label{alpha5-eqn}\\
M&=&{d_4\over \epsilon}.\label{M-eqn}
\end{eqnarray}

We select $\varsigma$ and $\delta_1$ in $(0,{1\over 2}]$ to satisfy
inequality (\ref{varsigma-delta1-ineqn}).
\begin{eqnarray}
(1+\varsigma)(1+2\delta_1)\alpha_2&\le& \alpha_2+\delta/2.
\label{varsigma-delta1-ineqn}
\end{eqnarray}

\begin{lemma}\label{parameters-lemma} The parameters satisfy the following conditions (\ref{constant-start-conditions}) to (\ref{constant-end-conditions}):
\begin{eqnarray}
\alpha_6&\ge&4,\label{constant-start-conditions}\\
e^{{-\delta_2^2d_2\over 4}M}&\le& \gammabound, \label{exp-M-ineqn}\\
e^{-aM}&\le& {\gamma\over 12}
\ \ {\rm for \ any \ positive\ real\ }a {\rm \ and \ all \ } \epsilon\le  {ad_4\over \ln {12\over \gamma}},\label{exp-M2-ineqn}\\
{4\over \alpha_6}+{4\over
\alpha_5}&=&1-\delta_6.\label{constant-end-conditions}
\end{eqnarray}
\end{lemma}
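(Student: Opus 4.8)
The plan is to verify the four conditions by straight substitution of the explicit constants fixed in (\ref{gamma-eqn})--(\ref{varsigma-delta1-ineqn}), handling them in increasing order of arithmetic content. Conditions (\ref{constant-end-conditions}) and (\ref{constant-start-conditions}) are essentially definitional, whereas the two exponential estimates (\ref{exp-M-ineqn}) and (\ref{exp-M2-ineqn}) reduce, after taking natural logarithms, to linear lower bounds on $M$ that are engineered into the definition (\ref{M-eqn}), $M=d_4/\epsilon$, together with the standing hypothesis $\epsilon\in(0,1)$.

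First I would dispose of (\ref{constant-end-conditions}). Unfolding the definition (\ref{alpha5-eqn}) gives $\frac{4}{\alpha_5}=1-\delta_6-\frac{4}{\alpha_6}$, so $\frac{4}{\alpha_6}+\frac{4}{\alpha_5}=1-\delta_6$ is immediate. Along the way I would record that $\alpha_5$ is well-defined and positive exactly when $1-\delta_6-\frac{4}{\alpha_6}>0$, i.e.\ $\alpha_6>\frac{4}{1-\delta_6}=\frac{40}{9}$; this is the genuine place where a lower bound on $\alpha_6$ is needed, and it is slightly stronger than the bare claim (\ref{constant-start-conditions}). For (\ref{constant-start-conditions}) I would then lower-bound $d_4$: from (\ref{delta2-d2-ineqn}) with $d_2=4$ and $\delta=\delta_6=\frac{1}{10}$ one gets $\delta_2\le\frac{1}{80}$, hence $\frac{1}{\delta_2^2}\ge 6400$, and feeding this into (\ref{c4-eqn}) with $\gamma=\frac13$ shows $d_4$ is a very large constant. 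Since $\alpha_6=\frac{\gamma^* d_4}{12}=\frac{d_4}{30}$ by (\ref{alpha6-eqn}) and (\ref{gamma4-eqn}), this forces $\alpha_6$ to vastly exceed both $4$ and $\frac{40}{9}$, settling (\ref{constant-start-conditions}) and the well-definedness of $\alpha_5$ simultaneously.

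For the two exponential bounds I would take logarithms and substitute $M=d_4/\epsilon$. For (\ref{exp-M-ineqn}), inserting $d_2=4$ collapses the exponent to $\frac{\delta_2^2 d_2}{4}M=\frac{\delta_2^2 d_4}{\epsilon}$, and the definition (\ref{c4-eqn}) is arranged precisely so that $\delta_2^2 d_4=(10+18\delta_6)\bigl(\frac1\gamma+\log\frac1\gamma\bigr)$, i.e.\ the $\frac{1}{\delta_2^2}$ cancels. Using $\epsilon<1$, this exponent is at least $(10+18\delta_6)\bigl(\frac1\gamma+\log\frac1\gamma\bigr)$, which with $\gamma=\frac13$ comfortably exceeds $\ln\frac{6}{\gamma}$ (the $\frac{10}{\gamma}$ term alone dominates $\ln 6$, and $10\log\frac1\gamma\ge\ln\frac1\gamma$), giving $e^{-\delta_2^2 d_2 M/4}\le\frac{\gamma}{6}$. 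For (\ref{exp-M2-ineqn}) the proof is even more direct: the hypothesis $\epsilon\le\frac{a d_4}{\ln(12/\gamma)}$ yields $aM=\frac{a d_4}{\epsilon}\ge\ln\frac{12}{\gamma}$, whence $e^{-aM}\le e^{-\ln(12/\gamma)}=\frac{\gamma}{12}$.

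The main obstacle here is not conceptual but one of careful bookkeeping: no inequality whose truth is actually in doubt appears, so the real work is to (a) confirm that the selection ranges are non-empty, namely $\delta_2\in(0,\frac{1}{80}]$ for (\ref{delta2-d2-ineqn}) and a sufficiently small $\varsigma,\delta_1\in(0,\frac12]$ for (\ref{varsigma-delta1-ineqn}) (the left side of (\ref{varsigma-delta1-ineqn}) tends to $\alpha_2$ as $\varsigma,\delta_1\to 0$, and $\alpha_2$ is a fixed positive constant by (\ref{alpha2-eqn}), so the slack $\delta/2$ can always be met), and (b) make $d_4$ explicit enough that the chain ``$d_4$ large $\Rightarrow\alpha_6>\frac{40}{9}\Rightarrow\alpha_5>0$'' is manifestly valid, so that every quantity occurring in the four conditions is genuinely defined before it is estimated. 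Once these feasibility and well-definedness points are pinned down, each of (\ref{constant-start-conditions})--(\ref{constant-end-conditions}) follows from a single-line computation.
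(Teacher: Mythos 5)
Your proposal is correct and follows essentially the same route as the paper: direct substitution of the constants, with both exponential bounds reduced via $M=d_4/\epsilon$ (using $\epsilon<1$ for (\ref{exp-M-ineqn}) and the hypothesis $\epsilon\le ad_4/\ln\frac{12}{\gamma}$ for (\ref{exp-M2-ineqn})), and (\ref{constant-end-conditions}) read off from the definition (\ref{alpha5-eqn}). If anything, your bookkeeping is slightly more careful than the paper's: you use $\delta_2\le\frac{1}{80}$ where the paper only invokes $\delta_2\le\frac12$ (which yields exactly $\alpha_6\ge 4$), and you correctly observe that positivity of $\alpha_5$ actually needs the stronger bound $\alpha_6>\frac{4}{1-\delta_6}=\frac{40}{9}$, which your estimate of $d_4$ delivers.
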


\begin{proof}
Inequality (\ref{constant-start-conditions}): By equations 
(\ref{alpha6-eqn}), (\ref{c4-eqn}) and (\ref{gamma4-eqn}), we have inequalities:
\begin{eqnarray}
\alpha_6&=&{1\over 12}\cdot \gamma^*\cdot d_4={1\over 12}\cdot ({1\over 2}-\delta_6)\cdot (10\cdot 4\cdot 3)\ge 4.
\end{eqnarray}

Inequality (\ref{exp-M-ineqn}):
Let $z={\delta_2^2d_2\over 4}M$. We have the inequalities:
\begin{eqnarray}
z&=&{\delta_2^2d_2\over 4}\cdot M\\
&\ge&{\delta_2^2d_2\over 4}\cdot {d_4\over \epsilon}\\
&\ge&{d_2\over 4}\cdot 8({1\over \gamma}+\log {1\over \gamma})\\
&\ge&2d_2({1\over \gamma}+\log {1\over \gamma}).
\end{eqnarray}
Thus, $e^{-z}\le \gammabound$.

Inequality (\ref{exp-M2-ineqn}):
By equation (\ref{M-eqn}) we have $e^{-aM}\le {\gamma\over 12} $ when  $\epsilon\le {ad_4\over \ln {12\over \gamma}}$.

Equation (\ref{constant-end-conditions}): It follows from equation (\ref{alpha5-eqn}).
\end{proof}

\subsection{Algorithm Description}

In this section, an approximation algorithm for the constrained 2-means problem
is given. It outputs a collection of centers, and one of them brings a $(1+\epsilon)$-approximation for the constrained 2-means problem.

\vskip 20pt {\bf Algorithm} $2$-Means$(P, \epsilon)$

Input: $P$ is a set of points in $\mathbb{R}^d$, and real parameter
$\epsilon\in (0,1)$ to control accuracy of approximation.

Output: A collection $U$ of two centers $(c_1,c_2)$.

\begin{enumerate}[1.]



\item
\qquad Let $U=\emptyset$;

\item \qquad Let $j=0$;

\item\label{M-setting-line}
\qquad Let $M$ be defined as that in equation (\ref{M-eqn});


\item

\qquad Let $N_a=d_2M$;

\item\label{N1-line}
\qquad Let $N_b={M\over (1-\delta_2)\cdot \epsilon^{1+d_1}}$;


\item\label{N3-line}
\qquad Let $N_2=M\cdot (1+\varsigma)\cdot {\alpha_2\over
\epsilon^2}\cdot {1\over 1-\delta_1}$;

\item\label{phase1-start-line}
\qquad Select a set $S_a$ of $N_a$ random samples from $P$;

\item
\qquad Select a set $S_b$ of $N_b$ random samples from $P$;

\item
\qquad For every two subsets $H_1$ of $S_a$ and $H_2$ of
$S_b$  of size $M$,

\item
\qquad \{

\item
\qquad\qquad Compute the centroid $c(H_1)$ of $H_1$, and
$c(H_2)$ of $H_2$;


\item
\qquad\qquad Add $(c(H_1), c(H_2))$ to $U$;

\item\label{phase1-end-line}
\qquad \}

\item\label{phase2-start-line}
\qquad Select a set $V_a$ of $M$ random samples from $P$;

\item
\qquad Compute the centroid $c_1=c(V_a)$ of $V_a$;


\item
\qquad Let $Q_0=P$;

\item\label{phase2-loop-start-line}
\qquad Repeat

\item
\qquad \qquad Select a set $V_b $ of $N_2$ random samples from $Q_j$;

\item\label{phase2-loop2-start-line}
\qquad\qquad For each size $M$ subset $H'$ of $V_b \cup \{M$ copies of
$c(V_b )\}$

\item
\qquad\qquad \{

\item\label{H'-center-line}
\qquad\qquad\qquad Compute the centroid $c(H')$ of $H'$;

\item
\qquad\qquad\qquad  Add $(c_1, c(H'))$ to $U$;

\item\label{phase2-loop2-end-line}
\qquad\qquad \}

\item\label{find-i-th-largest-line}
\qquad\qquad Let $d_j$ be the ${|Q_j|\over 1+\varsigma}$-th largest of
$\{\dist(p, c_1): p\in Q_j\}$;

\item
\qquad\qquad Let $Q_{j+1}$ contain all of the points $q$
in $Q_j$ with $\dist(q,c_1)\ge d_j$;

\item
\qquad\qquad Let $j=j+1$;

\item\label{phase2-end-line}
\qquad Until $Q_j$ is empty;

\item
\qquad Output $U$;

\end{enumerate}

{\bf End of Algorithm}

\begin{definition}\label{rj-Pj-mj-def} Let $c_1$ be the  approximate center of
    $P_1$ via the algorithm.
    \begin{enumerate}[1.]
        \item\label{rj-def}
        Define $r_2=\sqrt{\epsilon\over \alpha_5\beta_2}\sigma_{opt}$.
        \item
        Define $B_2=\{p\in P:||p-c_{1}||\le r_2\}$.

        \item
        Define $P_2^{out}=P_2-B_2$.

        \item
        Define $P_2^{in}=P_2\cap B_2$.

        \item
        Let $m_j$ be the center of $P_j$ for $j=1,2$.

        \item
        Let $m_j^{in}$ be the center for $P_j^{in}$ for $j=1,2$

            \item
        For each $p\in {P_2^{in}}$, let $\tilde{p}=c_1$.

        \item
        Let $\tilde{P_2^{in}}$ be the multiset with $|P_2^{in}|$ number
        of $c_1$. It transforms every element of $P_2^{in}$ to $c_1$.


        \item
        Let $\tilde{P_2}=\tilde{P_2^{in}}\cup P_2^{out}$.

        \item
        Let $\tilde{m_j}$ be the center of $\tilde{P_j^{in}}\cup P_j^{out}$  for $j=1,2$.
    \end{enumerate}
\end{definition}

\begin{lemma}\label{ex-lemma}
Let $x$ be a real number in $[0,1]$ and $y$ be positive real number with $1\le y$. Then we have
$1-xy\le (1-x)^y$,
\end{lemma}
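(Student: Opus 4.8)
The plan is to recognize the claimed inequality as a form of Bernoulli's inequality: setting $t=-x$ and using the exponent $y\ge 1$, the statement $1-xy\le(1-x)^y$ is exactly $(1+t)^y\ge 1+yt$ for $t=-x\ge-1$. Rather than merely cite it, I would give a short self-contained calculus argument, since $y$ is an arbitrary real number in $[1,\infty)$ (not necessarily an integer) and the surrounding applications only need the real-exponent version.

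First I would fix $y\ge 1$ and define $f(x)=(1-x)^y-1+xy$ on $[0,1]$, so that the claim becomes $f(x)\ge 0$ throughout $[0,1]$. I would record the boundary value $f(0)=1^y-1+0=0$, which serves as the base point for a monotonicity argument.

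Next I would differentiate: $f'(x)=-y(1-x)^{y-1}+y=y\bigl(1-(1-x)^{y-1}\bigr)$. The key step is to observe that for $x\in[0,1)$ the base satisfies $0\le 1-x\le 1$ while the exponent satisfies $y-1\ge 0$, so $(1-x)^{y-1}\le 1$; hence $f'(x)\ge 0$ on $[0,1)$. Since $f$ is continuous on $[0,1]$ and nondecreasing (because $f'\ge 0$), and $f(0)=0$, we conclude $f(x)\ge f(0)=0$ for every $x\in[0,1]$, which is precisely the desired inequality.

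I do not expect a genuine obstacle here; the only point requiring a little care is the endpoint $x=1$, where the base $1-x$ vanishes and $f$ need not be differentiable. I would sidestep this by establishing monotonicity on the open interval $[0,1)$ and then passing to $x=1$ by continuity of $f$ (the map $x\mapsto(1-x)^y$ is continuous on $[0,1]$ for $y\ge 1$). An equally clean alternative, which avoids even this remark, is to fix $x$ and view the difference as a function of $y$, or to invoke convexity of $t\mapsto(1-x)^t$; but the $x$-monotonicity route above is the shortest.
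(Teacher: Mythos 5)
Your proof is correct, but it takes a different route from the paper. The paper's argument is a one-line appeal to Taylor's theorem: it expands $(1-x)^y$ to second order around $x=0$ and observes that the remainder term is nonnegative because $y\ge 1$, so $(1-x)^y\ge 1-xy$. Your argument instead works at first order: you define $f(x)=(1-x)^y-1+xy$, check $f(0)=0$, and show $f'(x)=y\bigl(1-(1-x)^{y-1}\bigr)\ge 0$ on $[0,1)$, concluding by monotonicity and continuity. The trade-off is mild but real: the paper's proof is shorter, though the remainder it writes, $\frac{y(y-1)}{2}\xi^2$, is not the correct Lagrange form (it should be $\frac{y(y-1)}{2}(1-\xi)^{y-2}x^2$; the sign conclusion is unaffected, but as stated the formula is imprecise, and for $1<y<2$ one must also note the second derivative can blow up near $x=1$ without changing sign). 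Your first-derivative argument sidesteps both issues entirely --- it never needs the second-order remainder, every inequality is elementary, and you handle the endpoint $x=1$ explicitly by continuity --- so it is arguably the more airtight of the two, at the cost of a few extra lines.
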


\begin{proof}
By Taylor formula, we have $(1-x)^y=1-xy+{y\cdot (y-1)\over 2}\xi^2$
for some $\xi\in [0,x]$. Thus, we have $(1-x)^y\ge 1- y x$.
\end{proof}

\begin{lemma}\label{early-sampling-lemma} The algorithm $2$-Means(.) has the following
properties:
\begin{enumerate}[1.]
\item\label{early-case1}
With probability at least $1-\gamma_1$, at least ${1\over 2}(1-\delta_2)d_2M$
random points are from $P_1$ in $S_a$, where
$\gamma_1=e^{{-\delta_2^2d_2\over 4}M}\le \gammabound$.
\item\label{early-case2}
If the two clusters $P_1$ and $P_2$ satisfy $\epsilon^{1+d_1}|P|\le
|P_2|\le |P_1|$, then with probability at least $1-\gamma_2$,  at
least $M$ random points are from $P_2$ in $S_b$, where
$\gamma_2=e^{{-\delta_2^2\over 2}\cdot {M\over 1-\delta_2}}\le
\gammabound$.
\item\label{phase1-complexity-statm}
Line \ref{phase1-start-line} to line~\ref{phase1-end-line} of the
algorithm $2$-Means(.) generate at most ${N_a\choose M}\cdot
{N_b\choose M}$ pairs of centers.
\item\label{inbalance-statm}
If the clusters $P_1$ and $P_2$ satisfy $|P_2|<\epsilon^{1+d_1}|P|$,
then with probability at least $1-\gamma_3$, $V_a$ contains no
element of $P_2$, where $\gamma_3=\epsilon^{d_1}d_4\le \gammabound$
for all $\epsilon\le \epsilon_0$, where $\epsilon_0=({\gammabound}{1\over d_4})^{1/d_1}$.
\item\label{phase2-complexity-statm}
Line \ref{phase2-start-line} to line~\ref{phase2-end-line} iterate at most $\ceiling{\log n\over \log
    (1+\varsigma)}$ times  and generate
at most ${N_2+M\choose M}\cdot {\log n\over \log (1+\varsigma)}$
pairs of centers.
\end{enumerate}
\end{lemma}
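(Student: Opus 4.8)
The plan is to treat the five parts as two families: parts~\ref{early-case1}, \ref{early-case2}, and~\ref{inbalance-statm} are tail/union-bound estimates on the composition of the random samples, while parts~\ref{phase1-complexity-statm} and~\ref{phase2-complexity-statm} are deterministic counts of how many center pairs each loop emits. In every probabilistic part the hypothesis on $|P_2|/|P|$ pins down the per-sample probability of landing in $P_1$ or $P_2$, and the collapse to $\le\gammabound$ is obtained by substituting $M=d_4/\epsilon$ together with the explicit constants and invoking Lemma~\ref{parameters-lemma}.

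For part~\ref{early-case1}, the hypothesis $|P_1|\ge|P_2|$ forces $|P_1|\ge|P|/2$, so each of the $N_a=d_2M$ samples of $S_a$ lies in $P_1$ with probability at least $p=1/2$. Letting $X$ count the $P_1$-samples and applying Theorem~\ref{chernoff3-theorem} with $\delta=\delta_2$ and $s=d_2M$, the threshold $(1-\delta)ps$ is exactly ${1\over 2}(1-\delta_2)d_2M$ and the failure probability is $e^{-\delta_2^2 d_2 M/4}=\gamma_1$, which is $\le\gammabound$ by inequality~(\ref{exp-M-ineqn}). Part~\ref{early-case2} is the identical argument with $p=\epsilon^{1+d_1}$ (guaranteed by $|P_2|\ge\epsilon^{1+d_1}|P|$) and $s=N_b$, which was chosen so that $ps=M/(1-\delta_2)$ and the threshold $(1-\delta_2)ps$ equals $M$; the failure probability is then $e^{-\delta_2^2 M/(2(1-\delta_2))}=\gamma_2$. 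The one point needing care is that $\gamma_2\le\gammabound$ is not literally~(\ref{exp-M-ineqn}): I would first bound $M/(1-\delta_2)\ge M$ to get $\gamma_2\le e^{-\delta_2^2 M/2}$, and then repeat the constant substitution of Lemma~\ref{parameters-lemma} (using $M=d_4/\epsilon$, the value of $d_4$ from~(\ref{c4-eqn}), and $\epsilon<1$) to see the exponent is at least $5({1\over\gamma}+\log{1\over\gamma})\ge\ln(6/\gamma)$, whence $\gamma_2\le\gammabound$.

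Parts~\ref{inbalance-statm} and~\ref{phase1-complexity-statm} are short. For~\ref{inbalance-statm}, the hypothesis $|P_2|<\epsilon^{1+d_1}|P|$ makes each of the $M$ samples forming $V_a$ miss $P_2$ with probability greater than $1-\epsilon^{1+d_1}$, so the union bound~(\ref{prob-Add-Ineqn}) over the $M$ samples gives failure probability below $M\epsilon^{1+d_1}=d_4\epsilon^{d_1}=\gamma_3$; solving $\gamma_3\le\gammabound$ for $\epsilon$ yields precisely the threshold $\epsilon_0=(\gammabound/d_4)^{1/d_1}$ stated. For~\ref{phase1-complexity-statm}, lines~\ref{phase1-start-line}--\ref{phase1-end-line} merely range over all size-$M$ subsets $H_1\subseteq S_a$ and $H_2\subseteq S_b$ and emit one pair $(c(H_1),c(H_2))$ for each choice, so the count is ${N_a\choose M}{N_b\choose M}$ directly from the meaning of the binomial coefficient.

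The most delicate item is part~\ref{phase2-complexity-statm}. Here I would first bound the iteration count: by line~\ref{find-i-th-largest-line}, $Q_{j+1}$ keeps only the points at distance $\ge d_j$, where $d_j$ is the ${|Q_j|/(1+\varsigma)}$-th largest of the distances to $c_1$, so in the intended (generic) case $Q_{j+1}$ is exactly the farthest $1/(1+\varsigma)$ fraction and $|Q_{j+1}|\le|Q_j|/(1+\varsigma)$, forcing geometric decay from $|Q_0|=n$ until $Q_j$ empties and hence at most $\ceiling{\log n\over\log(1+\varsigma)}$ iterations. Within a single iteration the inner loop over size-$M$ submultisets of $V_b\cup\{M$ copies of $c(V_b)\}$, a multiset with $N_2+M$ elements, produces at most ${N_2+M\choose M}$ pairs, giving the stated product. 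The genuine obstacle is making the iteration bound airtight: because $d_j$ is a \emph{ranked} distance and the test is ``$\dist(q,c_1)\ge d_j$'', ties at the threshold (in the extreme, all distances equal) could let $|Q_{j+1}|$ exceed $|Q_j|/(1+\varsigma)$ and even stall the loop, so I would fix a tie-breaking convention (retain exactly the top $\ceiling{|Q_j|/(1+\varsigma)}$ points) and account for the integer rounding, thereby guaranteeing the factor-$(1+\varsigma)$ shrinkage that the $\ceiling{\log n/\log(1+\varsigma)}$ bound requires.
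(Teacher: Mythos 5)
Your proof is correct and follows essentially the same route as the paper's: Chernoff bounds (Theorem~\ref{chernoff3-theorem}) with $p=1/2$ and $p=\epsilon^{1+d_1}$ for parts~\ref{early-case1} and~\ref{early-case2}, a first-order bound for part~\ref{inbalance-statm} (the paper uses $(1-x)^y\ge 1-xy$ via Lemma~\ref{ex-lemma}, you use the union bound over the $M$ samples -- the same estimate), and direct enumeration counts for parts~\ref{phase1-complexity-statm} and~\ref{phase2-complexity-statm}. Two of your refinements are worth keeping. First, in part~\ref{early-case2} the paper's own chain contains the step $e^{-{\delta_2^2\over 2}\cdot{M\over 1-\delta_2}}\le e^{-{\delta_2^2\over 2}\cdot{M\over 0.5}}$, which runs in the wrong direction when $\delta_2\le 1/2$ (then $1-\delta_2\ge 0.5$, so the left exponent is the smaller one); your detour through $M/(1-\delta_2)\ge M$ followed by re-substituting the constants from (\ref{c4-eqn}) and (\ref{M-eqn}) is the clean way to land $\gamma_2\le\gammabound$, and the final conclusion is unaffected. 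Second, your concern about ties at the ranked threshold in part~\ref{phase2-complexity-statm} is legitimate and is not addressed inside the paper's proof of this lemma; the paper instead resolves it elsewhere by randomly perturbing $c_1$ (Lemma~\ref{vibrate-lemma} in the appendix), which makes all distances to $c_1$ pairwise distinct with probability $1$, with the induced change in $f_2(c_1,P_1)$ controlled by Lemma~\ref{vibrate-propsition}; this is invoked in Lemma~\ref{failure-approx-lemma}. Your deterministic tie-breaking convention (keep exactly the top $\ceiling{|Q_j|/(1+\varsigma)}$ points) achieves the same geometric shrinkage without randomness, so either fix is acceptable.
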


\begin{proof}The Lemma is proven with the following cases.

Statement~\ref{early-case1}:    Since $|P_1|\ge |P_2|$, we have $|P_1|\ge {|P|\over 2}$.
Let $p_1={1\over 2}$. With $N_a$ elements from $P$, with
probability at most $\gamma_1=e^{{-\delta_2^2\over
2}p_1N_a}=e^{{-\delta_2^2d_2\over 4}M}\le \gammabound$ (by inequality (\ref{exp-M-ineqn})), there are less than
$(1-\delta_2)p_1N_a={(1-\delta_2)d_2\over 2}M$ elements from
$P_1$ by Theorem~\ref{chernoff3-theorem}.

Statement~\ref{early-case2}:
Let $z_1=\epsilon^{1+d_1}$. By line~\ref{N1-line} of the algorithm, we
have $pN_b={M\over 1-\delta_2}$.  When $N_b$ elements are
selected from $P$, by Theorem~\ref{chernoff3-theorem}, with
probability at most $\gamma_2\le e^{{-\delta_2^2\over
2}z_1N_b}=e^{{-\delta_2^2\over 2}\cdot {M\over 1-\delta_2}}\le
e^{{-\delta_2^2\over 2}\cdot {M\over 0.5}}\le \gammabound$ (by inequality (\ref{exp-M-ineqn}) and the range of $\delta_2$ determined nearby equation (\ref{delta2-d2-ineqn})), multiset $S_b$
has less than $(1-\delta_2)z_1N_b=M$ random points from  $P_2$.

Statement~\ref{phase1-complexity-statm}: After getting $S_a$ and $S_b$ of sizes $N_a$ and $N_b$, respectively,  it takes
${N_a\choose M}\cdot {N_b\choose M}$ cases to enumerate their subsets of size $M$. If $S_a$ contains $M$ elements from $P_1$
and $S_b$ contains $M$ elements from $P_2$, then it generates
${N_a\choose M}\cdot {N_b\choose M}$
pairs of $H_1$ and $H_2$.

Statement~\ref{inbalance-statm}: 
Let $z_2=\epsilon^{1+d_1}$. When $M $ elements are selected in $P$, the probability that $V_a$
contains no element of $P_2$ is at least $(1-z_2)^{M }\ge
1-\epsilon^{1+d_1}M=1-\epsilon^{d_1}d_4$ by
Lemma~\ref{ex-lemma}, and equations (\ref{M-eqn}) and (\ref{c4-eqn}).
Let $\gamma_3=\epsilon^{d_1}d_4$. We have
$\gamma_3\le\gammabound$ for all small positive $\epsilon$ when
$d_1>0$.

Statement~\ref{phase2-complexity-statm}: The loop from line~\ref{phase2-loop-start-line} to
line~\ref{phase2-end-line} iterates at most $t={\log n\over \log
(1+\varsigma)}$ times since $(1+\varsigma)^t\ge n$. Each iteration
of the internal loop from line~\ref{phase2-loop2-start-line} to
line~\ref{phase2-loop2-end-line} generates ${N_2+M\choose M}$ pairs
of centers.
\end{proof}

\begin{lemma}
\label{c1-quality-statm-lemma}
Assume that $V$ only contains elements in $P_i$ $(1\le i\le 2$). Then with
probability at least $1-\gamma_4$ ($\gamma_4\le \gammabound$), the approximate center $c_i$
satisfies the inequality
\begin{eqnarray}
||c(V)-m_i||^2\le {\epsilon(1+\eta )\over \alpha_6}\sigma_i^2
\label{c1-closeness-ineqn}
\end{eqnarray}
\end{lemma}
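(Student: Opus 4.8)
The plan is to obtain the bound directly from the Inaba--Katoh--Imai sampling estimate, Lemma~\ref{geometric-random-lemma}, applied with $S=P_i$ and $T=V$. The hypothesis ``$V$ only contains elements in $P_i$'' means that, conditioned on the event that every point drawn by the algorithm lands in $P_i$, the set $V$ is a collection of points sampled uniformly and independently from $P_i$, which is exactly the setting of Lemma~\ref{geometric-random-lemma}. Moreover the quantity $\sigma^2$ appearing there, namely ${1\over|P_i|}\sum_{q\in P_i}||q-c(P_i)||^2$, coincides with $\sigma_i^2$ by the definition of $\sigma_i$ and $m_i=c(P_i)$ in Definition~\ref{OPT-def}. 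The approximate center in question is $c_i=c(V)$.

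First I would invoke Lemma~\ref{geometric-random-lemma} with its free failure parameter set to a value $\gamma_4$ to be fixed below, yielding that with probability at least $1-\gamma_4$,
\[
||c(V)-m_i||^2 \le \frac{1}{\gamma_4\,|V|}\,\sigma_i^2.
\]
Every set $V$ the algorithm feeds into this lemma (a size-$M$ subset $H_1$ or $H_2$, or the sample $V_a$) has $|V|=M=d_4/\epsilon$ by equation (\ref{M-eqn}), so the right-hand side equals $\frac{\epsilon}{\gamma_4 d_4}\sigma_i^2$.

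Next I would pin down $\gamma_4$ so that the two demands of the lemma hold together. Using $\alpha_6=\gamma^* d_4/12$ from (\ref{alpha6-eqn}) we get $\alpha_6/d_4=\gamma^*/12$, a constant independent of $\epsilon$; hence the target inequality $\frac{\epsilon}{\gamma_4 d_4}\sigma_i^2\le \frac{\epsilon(1+\eta)}{\alpha_6}\sigma_i^2$ is equivalent to $\frac{\gamma^*}{12\gamma_4}\le 1+\eta$, that is, to $\gamma_4\ge \frac{\gamma^*}{12(1+\eta)}$. Substituting the concrete constants $\gamma^*=1/2-\delta_6$, $\eta=\delta_6/2$ and $\delta_6=1/10$ from (\ref{gamma4-eqn}), (\ref{eta-eqn}), (\ref{delta6-eqn}) gives $\frac{\gamma^*}{12(1+\eta)}=\frac{2}{63}$, whereas the probability requirement asks for $\gamma_4\le \gamma/6=\frac{1}{18}=\gammabound$. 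Since $\frac{2}{63}<\frac{1}{18}$ this interval is nonempty, and I would simply take $\gamma_4=\gamma/6$: this meets the probability bound by definition, and because then $\frac{\gamma^*}{12\gamma_4}=\frac{3}{5}<1+\eta$, it delivers the stated inequality (\ref{c1-closeness-ineqn}).

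I expect no genuine obstacle. The single point needing care is the conditional-sampling justification --- that restricting to the event $V\subseteq P_i$ keeps the points of $V$ uniform and independent on $P_i$, so that Lemma~\ref{geometric-random-lemma} is legitimately applicable; the complementary fact that such a $V$ actually occurs among the enumerated subsets with high probability is supplied separately by Lemma~\ref{early-sampling-lemma}. Everything else is the routine constant check that the chosen $\gamma_4$ lies between $\frac{\gamma^*}{12(1+\eta)}$ and $\gamma/6$.
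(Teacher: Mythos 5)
Your proposal is correct and takes essentially the same route as the paper's own proof: both apply Lemma~\ref{geometric-random-lemma} with $T=V$, $|V|=M=d_4/\epsilon$, set the failure parameter $\gamma_4=\gammabound$, and verify from (\ref{alpha6-eqn}), (\ref{gamma4-eqn}), (\ref{eta-eqn}) that ${1\over \gamma_4 M}\le {\epsilon(1+\eta)\over \alpha_6}$. Your explicit constant check ($\gamma^*/(12\gamma_4)=3/5<1+\eta$) is simply a cleaner rendering of the paper's chain of equalities, and your remark on conditional uniformity of $V$ within $P_i$ is a point the paper leaves implicit.
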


\begin{proof}
    It follows from
    Lemma~\ref{geometric-random-lemma}.
    Let $\delta^*=\gammabound$. This is because
    \begin{eqnarray}
    \delta^*|M|&=&\gammabound|M|\\
    &=&\gammabound\cdot {d_4\over \epsilon}\\
    &=&{1\over \epsilon}\cdot \left(\gammabound\cdot {12\over \gamma_4}\right)\cdot {(\gamma_4d_4)\over 12}\\
    &\ge& {1\over \epsilon}\cdot  {1\over (1+\eta )}\cdot \alpha_6.
    \end{eqnarray}

    Thus, ${1\over
        \delta^*|M|}\le {\epsilon(1+\eta )\over \alpha_6}$. Therefore,
    the failure probability is at most $\delta^*$ by
    Lemma~\ref{geometric-random-lemma}. Let $\gamma_4=\delta^*$.
\end{proof}

We assume that if the unbalanced condition of Statement
(\ref{inbalance-statm}) of Lemma~\ref{early-sampling-lemma} is satisfied, then inequality (\ref{c1-closeness-ineqn}) holds for $i=1$ with  $V=V_a$ and $c_1=c(V_a)$. In otherwords, inequality $||c_1-m_1||^2\le {\epsilon(1+\eta )\over \alpha_6}\sigma_1^2$ holds at the unbalanced condition since it has a
large probability to be true by Lemma~\ref{c1-quality-statm-lemma} and Statement~\ref{inbalance-statm} of Lemma~\ref{early-sampling-lemma}.

\begin{lemma}\label{f2-upper-lemma}
$f_2(c_1, P_1)\le \left(1+{\epsilon(1+\eta )\over \alpha_6}\right)f_2(m_1,
P_1)$.
\end{lemma}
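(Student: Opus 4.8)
```latex
The plan is to apply Lemma~\ref{KumarSS10-eqn} directly. That lemma, with
$x = c_1$ and $P = P_1$, gives the exact decomposition
$f_2(c_1, P_1) = f_2(m_1, P_1) + |P_1|\cdot ||m_1 - c_1||^2$, since by
Definition~\ref{OPT-def} we have $m_1 = c(P_1)$. So the entire statement reduces
to controlling the second term $|P_1|\cdot ||c_1 - m_1||^2$ and showing it is
bounded by ${\epsilon(1+\eta)\over \alpha_6}f_2(m_1, P_1)$.

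First I would invoke the standing assumption recorded just before the lemma:
under the unbalanced condition, inequality (\ref{c1-closeness-ineqn}) holds for
$i=1$ with $V = V_a$ and $c_1 = c(V_a)$, i.e.
$||c_1 - m_1||^2 \le {\epsilon(1+\eta)\over \alpha_6}\sigma_1^2$. Multiplying
through by $|P_1|$ yields
$|P_1|\cdot ||c_1 - m_1||^2 \le {\epsilon(1+\eta)\over \alpha_6}|P_1|\sigma_1^2$.
The key is then to recognize, from Definition~\ref{OPT-def}(\ref{sigma-j-def}),
that $\sigma_1^2 = {f_2(m_1, P_1)\over |P_1|}$, so that
$|P_1|\sigma_1^2 = f_2(m_1, P_1)$ exactly. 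Substituting gives
$|P_1|\cdot ||c_1 - m_1||^2 \le {\epsilon(1+\eta)\over \alpha_6}f_2(m_1, P_1)$.

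Combining the two pieces, I obtain
\begin{eqnarray}
f_2(c_1, P_1) &=& f_2(m_1, P_1) + |P_1|\cdot ||c_1 - m_1||^2 \\
&\le& f_2(m_1, P_1) + {\epsilon(1+\eta)\over \alpha_6}f_2(m_1, P_1) \\
&=& \left(1 + {\epsilon(1+\eta)\over \alpha_6}\right)f_2(m_1, P_1),
\end{eqnarray}
which is exactly the claimed inequality. The proof is essentially a bookkeeping
chain: the decomposition lemma, the sampling-quality bound, and the definition
of $\sigma_1$ fit together with no slack. There is no real obstacle here; the
only point requiring care is making explicit that the probabilistic bound
(\ref{c1-closeness-ineqn}) is being treated as a deterministic hypothesis in
this lemma, exactly as flagged in the paragraph preceding the statement, so that
the conclusion is stated unconditionally modulo that standing assumption.
```
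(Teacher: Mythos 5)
Your proof is correct and follows exactly the same route as the paper's: the decomposition from Lemma~\ref{KumarSS10-eqn}, the bound (\ref{c1-closeness-ineqn}) treated as a standing assumption, and the identity $|P_1|\sigma_1^2 = f_2(m_1,P_1)$ from item~\ref{sigma-j-def} of Definition~\ref{OPT-def}. Your explicit remark that the probabilistic bound is being used as a deterministic hypothesis matches the paper's own framing in the paragraph preceding the lemma.
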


\begin{proof}
 By Lemma~\ref{KumarSS10-eqn} and
inequality~(\ref{c1-closeness-ineqn}), we have
\begin{eqnarray}
f_2(c_1, P_1)&=&f_2(m_1,P_1)+|P_1|||c_1-m_1||^2\\
&\le&f_2(m_1,P_1)+|P_1|\cdot {\epsilon(1+\eta )\over
\alpha_6}\sigma_1^2\label{f2-c1-P1}\\
&=&f_2(m_1, P_1)+{\epsilon(1+\eta )\over \alpha_6}f_2(m_1, P_1)\label{f2-c1-P2}\\
&=&\left(1+{\epsilon(1+\eta )\over \alpha_6}\right)f_2(m_1, P_1)
\end{eqnarray}
Note that the transition from (\ref{f2-c1-P1}) to (\ref{f2-c1-P2}) is by item~\ref{sigma-j-def} of Definition~\ref{OPT-def} .
\end{proof}




 We discuss the two different cases. They are based on the size of
 $P_2^{out}$.

{\bf Case 1:} $|P_2^{out}|< {\epsilon\over \alpha_1}\beta_2n$.

In this case, we let $c_2=c_1$.

\begin{lemma}\label{approx-m2-lemma}
$||m_2-m_2^{in}||\le \sqrt{\epsilon\over
\alpha_1-\epsilon}\sigma_2$.
\end{lemma}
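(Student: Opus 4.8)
The plan is to recognize this statement as a packaged instance of Lemma~\ref{large-portion-lemma}, applied with $Q=P_2$ and $Q_1=P_2^{in}$. Since $P_2^{in}=P_2\cap B_2$ is a subset of $P_2$ by Definition~\ref{rj-Pj-mj-def}, the hypotheses of that lemma are met, and the three objects appearing in it specialize exactly to the quantities in our statement: $c(Q)=c(P_2)=m_2$, $c(Q_1)=c(P_2^{in})=m_2^{in}$, and the $\sigma$ of Lemma~\ref{large-portion-lemma} is $\sqrt{{1\over|P_2|}\sum_{q\in P_2}||q-m_2||^2}=\sigma_2$ by item~\ref{sigma-j-def} of Definition~\ref{OPT-def}. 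Thus the lemma delivers $||m_2-m_2^{in}||\le\sqrt{(1-\alpha)/\alpha}\,\sigma_2$, where $\alpha$ is determined by $|P_2^{in}|=\alpha|P_2|$, and all that remains is to pin down $\alpha$ and simplify.

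Next I would compute $\alpha$ from the Case~1 hypothesis. Because $P_2^{in}$ and $P_2^{out}$ partition $P_2$, we have $|P_2^{in}|=|P_2|-|P_2^{out}|$, so $\alpha=1-|P_2^{out}|/|P_2|$. The Case~1 assumption reads $|P_2^{out}|<{\epsilon\over\alpha_1}\beta_2 n$, and since $n=|P|$ we have $\beta_2 n={|P_2|\over|P|}\cdot|P|=|P_2|$, so this is precisely $|P_2^{out}|<{\epsilon\over\alpha_1}|P_2|$. Hence $\alpha>1-{\epsilon\over\alpha_1}={\alpha_1-\epsilon\over\alpha_1}$.

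Finally I would feed this lower bound into $\sqrt{(1-\alpha)/\alpha}\,\sigma_2$. The function ${1-\alpha\over\alpha}={1\over\alpha}-1$ is strictly decreasing in $\alpha$ on $(0,1]$, so the bound $\alpha>{\alpha_1-\epsilon\over\alpha_1}$ gives ${1-\alpha\over\alpha}<{\epsilon/\alpha_1\over(\alpha_1-\epsilon)/\alpha_1}={\epsilon\over\alpha_1-\epsilon}$. Taking square roots and multiplying by $\sigma_2$ yields $||m_2-m_2^{in}||\le\sqrt{\epsilon\over\alpha_1-\epsilon}\,\sigma_2$, as claimed.

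There is no genuine obstacle here; the only point demanding a little care is the monotonicity step, where one must use the \emph{lower} bound on $\alpha$ together with the fact that $(1-\alpha)/\alpha$ \emph{decreases} in $\alpha$, so that a larger guaranteed $\alpha$ produces a smaller and therefore valid upper bound. One should also note that $\alpha_1-\epsilon>0$ (indeed $\alpha_1=5$ by (\ref{alpha1-eqn}) and $\epsilon\in(0,1)$), which keeps the denominator positive and the square root well defined.
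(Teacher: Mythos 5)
Your proposal is correct and follows essentially the same route as the paper: both reduce the statement to Lemma~\ref{large-portion-lemma} applied with $Q=P_2$ and $Q_1=P_2^{in}$, extract $|P_2^{out}|<{\epsilon\over\alpha_1}|P_2|$ from the Case~1 hypothesis via $\beta_2 n=|P_2|$, and simplify $\sqrt{(1-\alpha)/\alpha}$ to $\sqrt{\epsilon/(\alpha_1-\epsilon)}$. The only cosmetic difference is that the paper fixes $\alpha=1-\epsilon/\alpha_1$ as a lower bound on the fraction while you take $\alpha$ to be the exact fraction $|P_2^{in}|/|P_2|$ and invoke monotonicity of $(1-\alpha)/\alpha$ explicitly — a step the paper uses implicitly.
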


\begin{proof}
Since  $|P_2|=\beta_2|P|=\beta_2n$, we have $|P_2^{out}|<({\epsilon\over \alpha_1})|P_2|$ by the condition of Case 1. Let $\alpha=1-{\epsilon\over \alpha_1}$. We have $|P_2^{in}|=|P_2-P_2^{out}|=|P_2|-|P_2^{out}|>(1-{\epsilon\over \alpha_1})|P_2|\ge \alpha|P_2|$. By
Lemma~\ref{large-portion-lemma}, we have $||m_2-m_2^{in}||\le
\sqrt{1-\alpha\over \alpha}\sigma_2=\sqrt{{\epsilon\over
\alpha_1}\over 1-{\epsilon\over
\alpha_1}}\sigma_2=\sqrt{\epsilon\over \alpha_1-\epsilon}\sigma_2.$
\end{proof}

\begin{lemma}
    $||m_2^{in}-c_2||\le r_2$.
\end{lemma}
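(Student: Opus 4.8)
The plan is to exploit the fact that in Case~1 the algorithm sets $c_2=c_1$, so the assertion $||m_2^{in}-c_2||\le r_2$ reduces to showing $||m_2^{in}-c_1||\le r_2$. First I would unpack the relevant definitions from Definition~\ref{rj-Pj-mj-def}: we have $B_2=\{p\in P:||p-c_1||\le r_2\}$ and $P_2^{in}=P_2\cap B_2$, so by construction every point of $P_2^{in}$ lies within distance $r_2$ of $c_1$. Thus the entire set $P_2^{in}$ is contained in the closed ball of radius $r_2$ centered at $c_1$.

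The key observation is then that $m_2^{in}=c(P_2^{in})$ is by definition the centroid, i.e. the arithmetic mean, of the points of $P_2^{in}$, and averaging points that all lie in a ball cannot escape that ball. Concretely, I would write $m_2^{in}=\frac{1}{|P_2^{in}|}\sum_{p\in P_2^{in}}p$ and apply the triangle inequality:
\[
||m_2^{in}-c_1||=\left\|\frac{1}{|P_2^{in}|}\sum_{p\in P_2^{in}}(p-c_1)\right\|\le \frac{1}{|P_2^{in}|}\sum_{p\in P_2^{in}}||p-c_1||\le r_2,
\]
where the final inequality uses $||p-c_1||\le r_2$ for each $p\in P_2^{in}$ (membership in $B_2$). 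Substituting $c_2=c_1$ yields exactly $||m_2^{in}-c_2||\le r_2$, which is the claim.

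I do not expect any genuine obstacle here: the statement is purely about the centroid of a bounded point set, and no appeal to the earlier sampling lemmas or to Lemma~\ref{large-portion-lemma} is required. The only point deserving care is recognizing that the centroid inherits the distance bound from the individual points, which is immediate from the triangle inequality (equivalently, from convexity of the ball $B_2$ together with the fact that a centroid is a convex combination of its points). The one degenerate case worth a remark is $P_2^{in}=\emptyset$, in which $m_2^{in}$ is undefined and the inequality is vacuous; otherwise the argument above applies verbatim.
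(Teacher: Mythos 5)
Your proposal is correct and is essentially the paper's own argument: both write $m_2^{in}$ as the centroid of $P_2^{in}$, use the triangle inequality to bound $||m_2^{in}-c_2||$ by the average of $||p-c_1||$ over $p\in P_2^{in}$, and conclude via the definition of $B_2$ (with $c_2=c_1$ in Case 1). Your explicit remarks on the substitution $c_2=c_1$ and the degenerate case $P_2^{in}=\emptyset$ are minor stylistic additions, not a different route.
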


\begin{proof}
    By the definition of $B_2$, we have the following inequalities:
\begin{eqnarray}
||m_2^{in}-c_2||&=&\left|\left|\left({1\over |P_2^{in}|}\sum_{p\in P_2^{in}}p\right)-c_2\right|\right|\\
&=&\left|\left|\left({1\over |P_2^{in}|}\sum_{p\in P_2^{in}}p\right)-\left({1\over |P_2^{in}|}\sum_{p\in P_2^{in}}c_2\right)\right|\right|\\
&\le&{1\over |P_2^{in}|}\left|\left|\sum_{p\in P_2^{in}}(p-c_2)\right|\right|\\
&\le&{1\over |P_2^{in}|}\sum_{p\in P_2^{in}}||p-c_1||\\
&\le& r_2.
\end{eqnarray}
\end{proof}

\begin{lemma}\label{case1-f2-lemma}
$f_2(c_2, P_2)\le f_2(m_2, P_2)(1+{2\epsilon\over
\alpha_1-\epsilon})+{2\epsilon\over \alpha_5}OPT_2(P)$.
\end{lemma}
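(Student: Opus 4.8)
The plan is to reduce the claimed bound directly to the two immediately preceding lemmas of Case~1, namely $||m_2-m_2^{in}||\le\sqrt{\epsilon/(\alpha_1-\epsilon)}\,\sigma_2$ (Lemma~\ref{approx-m2-lemma}) and $||m_2^{in}-c_2||\le r_2$, combined with the centroid identity and the squared triangle inequality. First I would apply Lemma~\ref{KumarSS10-eqn} with $x=c_2$ and $P=P_2$ (so that $c(P_2)=m_2$) to obtain $f_2(c_2,P_2)=f_2(m_2,P_2)+|P_2|\,||m_2-c_2||^2$. Everything then hinges on estimating the single quantity $||m_2-c_2||^2$.

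Next I would route the comparison from $m_2$ to $c_2$ through the auxiliary center $m_2^{in}$ by invoking Lemma~\ref{norm-square-lemmma}, which gives $||m_2-c_2||^2\le 2||m_2-m_2^{in}||^2+2||m_2^{in}-c_2||^2$. Substituting Lemma~\ref{approx-m2-lemma}, the bound $||m_2^{in}-c_2||\le r_2$, and the definition $r_2^2={\epsilon\over\alpha_5\beta_2}\sigma_{opt}^2$ (item~\ref{rj-def} of Definition~\ref{rj-Pj-mj-def}) yields $||m_2-c_2||^2\le {2\epsilon\over\alpha_1-\epsilon}\sigma_2^2+{2\epsilon\over\alpha_5\beta_2}\sigma_{opt}^2$. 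The factor $2$ appearing in the final coefficients originates entirely from this single use of Lemma~\ref{norm-square-lemmma}.

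Finally I would multiply through by $|P_2|$ and translate the variance quantities back into $f_2$ and $OPT_2$ using Definition~\ref{OPT-def}: since $\sigma_2^2=f_2(m_2,P_2)/|P_2|$ (item~\ref{sigma-j-def}), the first term collapses to ${2\epsilon\over\alpha_1-\epsilon}f_2(m_2,P_2)$; and since $\sigma_{opt}^2=OPT_2(P)/|P|$ (item~\ref{sigma-opt2-def}) while $\beta_2=|P_2|/|P|$, we have $|P_2|/\beta_2=|P|$, so the second term collapses to ${2\epsilon\over\alpha_5}OPT_2(P)$. Adding back $f_2(m_2,P_2)$ from the centroid identity then produces exactly $f_2(c_2,P_2)\le f_2(m_2,P_2)\left(1+{2\epsilon\over\alpha_1-\epsilon}\right)+{2\epsilon\over\alpha_5}OPT_2(P)$.

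The only delicate point is bookkeeping rather than mathematics: one must verify that the weight $|P_2|$ from the centroid identity cancels correctly against the $\beta_2$ in the denominator of $r_2^2$, so that the sampling-radius contribution emerges as a clean $OPT_2(P)$ term unscaled by $\beta_2$ (this is precisely why $r_2$ was defined with the $\beta_2$ in its denominator). Beyond this cancellation there is no genuine obstacle; once the squared triangle inequality is in place, the estimate is a direct assembly of Lemma~\ref{approx-m2-lemma} and the $||m_2^{in}-c_2||\le r_2$ bound.
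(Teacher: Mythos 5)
Your proposal is correct and follows essentially the same route as the paper's proof: the centroid identity of Lemma~\ref{KumarSS10-eqn}, the squared triangle inequality of Lemma~\ref{norm-square-lemmma} routed through $m_2^{in}$, the bounds from Lemma~\ref{approx-m2-lemma} and $||m_2^{in}-c_2||\le r_2$, and the cancellation $|P_2|/\beta_2=|P|$ to produce the clean ${2\epsilon\over\alpha_5}OPT_2(P)$ term. No gaps.
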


\begin{proof}
\begin{eqnarray}
f_2(c_2, P_2)&=&f_2(m_2, P_2)+|P_2|||m_2-c_2||^2\label{f2-upper-a1-ineqn}\\
&\le&f_2(m_2, P_2)+|P_2|(2||m_2-m_2^{in}||^2+2||m_2^{in}-c_2||^2)\label{f2-upper-a1b-ineqn}\\
&\le&f_2(m_2, P_2)+|P_2|\left(2\left(\sqrt{\epsilon\over \alpha_1-\epsilon}\sigma_2\right)^2+2r_2^2\right)\label{f2-upper-a2-ineqn}\\
&=&f_2(m_2, P_2)+|P_2|\left({2\epsilon\over \alpha_1-\epsilon}\right)\cdot \sigma_2^2+2|P_2|r_2^2\\
&=&\left(1+{2\epsilon\over \alpha_1-\epsilon}\right)f_2(m_2, P_2)+{2\epsilon\over \alpha_5}OPT_2(P).
\end{eqnarray}

The transition from inequality (\ref{f2-upper-a1-ineqn}) to
inequality (\ref{f2-upper-a2-ineqn}) is by
Lemma~\ref{approx-m2-lemma}.
\end{proof}

\begin{lemma}
$f_2(c_1, P_1)+f_2(c_2, P_2)\le (1+\epsilon)OPT_2(P)$.
\end{lemma}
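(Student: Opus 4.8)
The plan is to add the two cluster-wise bounds already established---Lemma~\ref{f2-upper-lemma} for $P_1$ and Lemma~\ref{case1-f2-lemma} for $P_2$---and then show that, once the global $OPT_2(P)$ error term is redistributed between the two clusters, each resulting coefficient stays at most $1+\epsilon$. The constant relations collected in Lemma~\ref{parameters-lemma} are exactly what make this succeed.

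First I would sum the inequalities of Lemma~\ref{f2-upper-lemma} and Lemma~\ref{case1-f2-lemma}, obtaining
\begin{eqnarray*}
f_2(c_1,P_1)+f_2(c_2,P_2)&\le&\left(1+{\epsilon(1+\eta)\over\alpha_6}\right)f_2(m_1,P_1)\\
&&{}+\left(1+{2\epsilon\over\alpha_1-\epsilon}\right)f_2(m_2,P_2)+{2\epsilon\over\alpha_5}OPT_2(P).
\end{eqnarray*}
Since $m_j=c(P_j)$, Definition~\ref{OPT-def} gives $OPT_2(P)=f_2(m_1,P_1)+f_2(m_2,P_2)$. Substituting this into the last term and grouping by cluster reduces the claim---because $f_2\ge 0$---to showing that the coefficient of each $f_2(m_j,P_j)$ is at most $1+\epsilon$, i.e. (after dividing by $\epsilon>0$) to the two scalar bounds
\[
{1+\eta\over\alpha_6}+{2\over\alpha_5}\le 1\qquad\mbox{and}\qquad{2\over\alpha_1-\epsilon}+{2\over\alpha_5}\le 1.
\]

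For the first I would use $\eta=\delta_6/2\le 1$, so $1+\eta\le 2$, and then bound ${1+\eta\over\alpha_6}+{2\over\alpha_5}\le{1\over2}\left({4\over\alpha_6}+{4\over\alpha_5}\right)={1-\delta_6\over2}<1$ by the identity (\ref{constant-end-conditions}). For the second I would use $\alpha_1=5$ and $\epsilon<1$ to get ${2\over\alpha_1-\epsilon}<{1\over2}$, together with ${2\over\alpha_5}\le{1\over2}\left({4\over\alpha_5}+{4\over\alpha_6}\right)={1-\delta_6\over2}$ (again by (\ref{constant-end-conditions}), dropping the nonnegative ${4\over\alpha_6}$); the sum is then below $1-{\delta_6\over2}<1$.

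The arithmetic is short; the only real subtlety is that Lemma~\ref{case1-f2-lemma} charges its additional error to the global optimum $OPT_2(P)$ rather than to $f_2(m_2,P_2)$ alone, so the ${2\epsilon\over\alpha_5}$ term must be absorbed into the coefficients of \emph{both} clusters at once. I expect the main point to be confirming that both coefficients still fit under $1+\epsilon$ after this split; that they do is no accident but a direct consequence of the engineered identity ${4\over\alpha_6}+{4\over\alpha_5}=1-\delta_6$ in (\ref{constant-end-conditions}), together with $\alpha_6\ge 4$ from (\ref{constant-start-conditions}) and the choices $\alpha_1=5$ and $\eta=\delta_6/2$, which leave slack in each case.
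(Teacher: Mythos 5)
Your proof is correct and follows essentially the same route as the paper: sum Lemma~\ref{f2-upper-lemma} and Lemma~\ref{case1-f2-lemma}, use $OPT_2(P)=f_2(m_1,P_1)+f_2(m_2,P_2)$, and close the argument with the parameter facts $\alpha_1=5$, $\alpha_6\ge 4$, and ${4\over\alpha_6}+{4\over\alpha_5}=1-\delta_6$. The only cosmetic difference is that you fold the ${2\epsilon\over\alpha_5}OPT_2(P)$ term into both per-cluster coefficients and bound each by $1+\epsilon$, whereas the paper keeps that term separate and bounds it by ${\epsilon\over 2}$ plus a $\max$ of the two cluster coefficients; the two arrangements are equivalent.
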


\begin{proof}  By  Lemma~\ref{f2-upper-lemma} and Lemma~\ref{case1-f2-lemma},  we have inequalities:
\begin{eqnarray}
&&f_2(c_1, P_1)+f_2(c_2, P_2)\\
&\le&\left(1+{\epsilon(1+\eta )\over
\alpha_6}\right)f_2(m_1,
P_1)+\left(1+{2\epsilon\over \alpha_1-\epsilon}\right)f_2(m_2, P_2)+{2\epsilon\over \alpha_5}OPT_2(P)\label{f2-c1-P1-ineqn1}\\
&\le&\left(1+{2\epsilon\over \alpha_5}+\max\left({\epsilon(1+\eta )\over \alpha_6},{2\epsilon\over
\alpha_1-\epsilon}\right)\right)OPT_2(P)\label{f2-c1-P1-ineqn2}\\
&\le&\left(1+{\epsilon\over 2}+\max\left({\epsilon\over 4},{\epsilon\over
    2}\right)\right)OPT_2(P)\label{f2-c1-P1-ineqn2b}\\
 &\le& (1+\epsilon)OPT_2(P).\label{f2-c1-P1-ineqn3}
\end{eqnarray}
The transition (\ref{f2-c1-P1-ineqn2}) to (\ref{f2-c1-P1-ineqn3}) is
based on equation (\ref{alpha6-eqn}) and the setting of parameters  in
Section~\ref{parameters-sect}. Note that ${(1+\eta )\over \alpha_6}\le {1.5\over 3}={1\over 2}$ by inequality  (\ref{constant-start-conditions}) and equation (\ref{eta-eqn}), and ${2\over \alpha_1-\epsilon}\le {1\over 2}$ by equation (\ref{alpha1-eqn}).
\end{proof}

{\bf Case 2:} $|P_2^{out}|\ge {\epsilon\over \alpha_1}\beta_2n$.

\begin{lemma}\label{lower-bound-lemma}
${|P_2^{out}|\over |P-B_2|}\ge {\epsilon^2\over \alpha_2}$ for all
 positive $\epsilon\le \epsilon_1$, where $\epsilon_1=\delta_6$.
\end{lemma}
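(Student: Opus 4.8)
The plan is to bound the denominator $|P-B_2|$ from above in terms of $|P_2^{out}|$. Writing $P_1^{out}=P_1-B_2$, the clusters $P_1$ and $P_2$ partition $P$, so $P-B_2=P_1^{out}\cup P_2^{out}$ is a disjoint union and $|P-B_2|=|P_1^{out}|+|P_2^{out}|$. Hence it suffices to control the ratio $|P_1^{out}|/|P_2^{out}|$: the Case~2 hypothesis already supplies the lower bound $|P_2^{out}|\ge{\epsilon\over\alpha_1}\beta_2 n$, so the real work is an upper bound on $|P_1^{out}|$.

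First I would bound $|P_1^{out}|$ by a counting argument. Every $p\in P_1^{out}$ lies outside $B_2$, so $||p-c_1||>r_2$, and therefore $|P_1^{out}|\,r_2^2< f_2(c_1,P_1)$, that is $|P_1^{out}|<f_2(c_1,P_1)/r_2^2$. Into this I would substitute three estimates: Lemma~\ref{f2-upper-lemma} gives $f_2(c_1,P_1)\le\left(1+{\epsilon(1+\eta)\over\alpha_6}\right)f_2(m_1,P_1)$; item~\ref{sigma-j-def} of Definition~\ref{OPT-def} gives $f_2(m_1,P_1)=|P_1|\sigma_1^2$; and item~\ref{rj-def} of Definition~\ref{rj-Pj-mj-def} gives $r_2^2={\epsilon\over\alpha_5\beta_2}\sigma_{opt}^2$. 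The crucial cancellation comes from $\sigma_{opt}^2=\beta_1\sigma_1^2+\beta_2\sigma_2^2\ge\beta_1\sigma_1^2$ together with $|P_1|=\beta_1 n$: the unknown quantities $\beta_1$ and $\sigma_1^2$ drop out, leaving the clean bound
\[|P_1^{out}|<\left(1+{\epsilon(1+\eta)\over\alpha_6}\right){\alpha_5\beta_2 n\over\epsilon}.\]

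Finally I would combine the two bounds. Dividing this displayed upper bound by the Case~2 lower bound $|P_2^{out}|\ge{\epsilon\over\alpha_1}\beta_2 n$ cancels the common factor $\beta_2 n$ and yields ${|P_1^{out}|\over|P_2^{out}|}<\left(1+{\epsilon(1+\eta)\over\alpha_6}\right)\alpha_1\alpha_5/\epsilon^2$, whence ${|P-B_2|\over|P_2^{out}|}=1+{|P_1^{out}|\over|P_2^{out}|}<{\epsilon^2+(1+{\epsilon(1+\eta)\over\alpha_6})\alpha_1\alpha_5\over\epsilon^2}$. Using the defining relation $\alpha_2=2\alpha_1\alpha_5+\delta_6$ from equation~(\ref{alpha2-eqn}), the desired conclusion ${|P-B_2|\over|P_2^{out}|}\le{\alpha_2\over\epsilon^2}$ reduces to the constant inequality $\epsilon^2+{\epsilon(1+\eta)\over\alpha_6}\alpha_1\alpha_5\le\alpha_1\alpha_5+\delta_6$, which I would verify for $\epsilon\le\epsilon_1=\delta_6$ using ${(1+\eta)\over\alpha_6}\le{1\over2}$ (from $\alpha_6\ge4$ in inequality~(\ref{constant-start-conditions}) and $\eta={\delta_6\over2}$ in equation~(\ref{eta-eqn})) and $\epsilon^2\le\epsilon\le\delta_6$; this then gives ${|P_2^{out}|\over|P-B_2|}>{\epsilon^2\over\alpha_2}$.

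The main obstacle is arranging the estimate on $|P_1^{out}|$ so that all data-dependent quantities cancel. The raw bound $f_2(c_1,P_1)/r_2^2$ a priori involves $\beta_1$, $\sigma_1$, and $\sigma_{opt}$; it is only the combination of $\sigma_{opt}^2\ge\beta_1\sigma_1^2$ with $|P_1|=\beta_1 n$ and the particular scaling $r_2^2\propto\sigma_{opt}^2/\beta_2$ that collapses everything to a multiple of $\beta_2 n$, which is exactly the scale of the Case~2 lower bound on $|P_2^{out}|$. The remaining constant inequality is pre-engineered by the choice of $\alpha_2$: the factor $2$ leaves a spare copy of $\alpha_1\alpha_5$ to absorb the $O(\epsilon)$ error term, and the additive $\delta_6$ absorbs the $\epsilon^2$ term.
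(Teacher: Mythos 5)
Your proof is correct and is essentially the paper's argument in contrapositive form: the paper assumes $|P_2^{out}|/|P-B_2| < \epsilon^2/\alpha_2$ and derives $f_2(c_1,P_1) > 2\,OPT_2(P)$, contradicting Lemma~\ref{f2-upper-lemma}, while you run the same chain forward---bounding $|P_1^{out}|$ by $f_2(c_1,P_1)/r_2^2$, then invoking Lemma~\ref{f2-upper-lemma}, the definition of $r_2$, the Case~2 hypothesis, and the engineered constant $\alpha_2=2\alpha_1\alpha_5+\delta_6$. The ingredients and cancellations are identical, so the difference is purely presentational.
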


\begin{proof}    Note that $\epsilon< 1$ by our assumption in Section~\ref{parameters-sect}.
Assume that ${|P_2^{out}|\over |P-B_2|}< {\epsilon^2\over
\alpha_2}$.  This implies $|P-B_2|>
{\alpha_2|P_2^{out}|\over \epsilon^2}$. We will derive a contradiction.  We have inequalities:
\begin{eqnarray}
|P-B_2|- |P_2^{out}|&\ge&  {\alpha_2|P_2^{out}|\over \epsilon^2}-|P_2^{out}|\\
&=&{(\alpha_2-\epsilon^2)|P_2^{out}|\over \epsilon^2}\\
&\ge&{(\alpha_2-\epsilon^2)\epsilon\beta_2n \over \alpha_1\epsilon^2}\\
&=&{(\alpha_2-\epsilon^2)\beta_2n \over \alpha_1\epsilon}.
\end{eqnarray}

On the other hand, it is easy to see that $P_2\subseteq (B_2\cup
P_2^{out})$ and $((P-B_2)- P_2^{out})\subseteq P_1$. For each element
$x\in ((P-B_2)- P_2^{out})$, we have $||c_1-x||\ge r_2$. We also
have inequalities:
\begin{eqnarray}
f_2(c_1, P_1)&>&    (|P-B_2|- |P_2^{out}|)r_2^2\\
&\ge&  {(\alpha_2-\epsilon^2)\beta_2n \over \alpha_1\epsilon}\cdot
\left(\sqrt{\epsilon\over
\alpha_5\beta_2}\sigma_{opt}\right)^2\\
&\ge&  {(\alpha_2-\epsilon^2)\beta_2n \over \alpha_1\epsilon}\cdot
{\epsilon\over
\alpha_5\beta_2}\sigma_{opt}^2\\
&\ge&  {(\alpha_2-\epsilon^2)n \over \alpha_1\alpha_5}\sigma_{opt}^2\\
&\ge&  {(\alpha_2-\epsilon^2) \over \alpha_1\alpha_5}OPT_2(P)\\
&>& 2OPT_2(P)\ \ \ ({\rm by~ equation~\ref{alpha2-eqn} \ and }\ \epsilon\le \delta_6).
\end{eqnarray}

This contradicts Lemma~\ref{f2-upper-lemma} since $\left(1+{\epsilon(1+\eta )\over \alpha_6}\right)<2$ and $f_2(m_1,P_1)\le OPT_2(P)$.
\end{proof}

\begin{lemma}\label{m2-m2-bar-lemma}
$||m_2-\tilde{m_2}||\le (1-{\epsilon\over \alpha_1})r_2$.
\end{lemma}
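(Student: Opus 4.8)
The plan is to write both centroids $m_2=c(P_2)$ and $\tilde{m_2}=c(\tilde{P_2})$ explicitly as weighted averages over the two common pieces $P_2^{in}$ and $P_2^{out}$, subtract them, and watch the $P_2^{out}$ contribution cancel. Writing $n_{in}=|P_2^{in}|$, $n_{out}=|P_2^{out}|$, and $m_2^{out}=c(P_2^{out})$, I first note that $P_2$ and $\tilde{P_2}$ have the same cardinality $n_{in}+n_{out}=|P_2|=\beta_2 n$, because $\tilde{P_2^{in}}$ merely replaces each of the $n_{in}$ points of $P_2^{in}$ by a copy of $c_1$. Hence $m_2=\frac{n_{in}m_2^{in}+n_{out}m_2^{out}}{n_{in}+n_{out}}$ while $\tilde{m_2}=\frac{n_{in}c_1+n_{out}m_2^{out}}{n_{in}+n_{out}}$, so after the $n_{out}m_2^{out}$ terms cancel we get $m_2-\tilde{m_2}=\frac{n_{in}}{|P_2|}(m_2^{in}-c_1)$. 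Taking norms, $||m_2-\tilde{m_2}||=\frac{n_{in}}{|P_2|}\,||m_2^{in}-c_1||$.

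Two ingredients then finish the proof. First I would bound $||m_2^{in}-c_1||\le r_2$: since $P_2^{in}=P_2\cap B_2\subseteq B_2$, the definition of $B_2$ gives $||p-c_1||\le r_2$ for every $p\in P_2^{in}$, whence the centroid satisfies $||m_2^{in}-c_1||=||{1\over n_{in}}\sum_{p\in P_2^{in}}(p-c_1)||\le {1\over n_{in}}\sum_{p\in P_2^{in}}||p-c_1||\le r_2$. This is exactly the estimate established in the earlier lemma, whose proof uses only membership in $B_2$ and so applies here even though $c_2\ne c_1$ in Case 2. Second, I would convert the Case 2 hypothesis $|P_2^{out}|\ge\frac{\epsilon}{\alpha_1}\beta_2 n=\frac{\epsilon}{\alpha_1}|P_2|$ into a bound on the weight: $\frac{n_{in}}{|P_2|}=1-\frac{n_{out}}{|P_2|}\le 1-\frac{\epsilon}{\alpha_1}$. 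Multiplying the two estimates gives $||m_2-\tilde{m_2}||\le(1-\frac{\epsilon}{\alpha_1})r_2$, as required.

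There is no serious obstacle; the lemma is essentially bookkeeping. The one point demanding care is the cancellation step: it is crucial that $P_2^{out}$ enters both $m_2$ and $\tilde{m_2}$ with identical weight $n_{out}$ and identical centroid $m_2^{out}$, which is what collapses the difference to a scalar multiple of $m_2^{in}-c_1$. This in turn relies on the equality $|\tilde{P_2}|=|P_2|$ so that the two denominators agree, which holds precisely because $\tilde{P_2^{in}}$ is defined as the multiset of $|P_2^{in}|$ copies of $c_1$. The only genuinely case-dependent input is the Case 2 size condition, and it enters solely through the coefficient $\frac{n_{in}}{|P_2|}$.
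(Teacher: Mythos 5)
Your proposal is correct and follows essentially the same route as the paper: both proofs exploit the cancellation of the $P_2^{out}$ contribution (the paper by writing out the sums directly, you by factoring through the weighted centroids), both reduce to the triangle-inequality bound $\|m_2^{in}-c_1\|\le r_2$ coming from $P_2^{in}\subseteq B_2$, and both convert the Case 2 hypothesis $|P_2^{out}|\ge{\epsilon\over\alpha_1}\beta_2 n$ into the factor $|P_2^{in}|/|P_2|\le 1-{\epsilon\over\alpha_1}$. Your exact identity $m_2-\tilde{m_2}={|P_2^{in}|\over|P_2|}(m_2^{in}-c_1)$ is just a cleaner repackaging of the paper's inequality chain, not a different argument.
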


\begin{proof}
\begin{eqnarray}
||m_2-\tilde{m_2}||&=&\left|\left|{1\over |P_2|}\sum_{p\in
P_2}p-{1\over
|P_2|}\left(\sum_{p\in P^{in}_2}\tilde{p}+\sum_{p\in P^{out}_2}p\right)\right|\right|\\
&=&{1\over |P_2|}||\sum_{p\in P^{in}_2}(p-\tilde{p})||\\
&\le&{1\over |P_2|}\sum_{p\in P^{in}_2}||p-\tilde{p}||\\
&\le&{|P^{in}_2|\over |P_2|}\cdot r_2\\
&=&{|P_2|-|P_2^{out}|\over |P_2|}\cdot r_2\\
&\le&{|P_2|-{\epsilon\over \alpha_1}\beta_2n \over |P_2|}\cdot r_2\\
&=&{|P_2|-{\epsilon\over \alpha_1}|P_2| \over |P_2|}\cdot r_2\\
 &\le& \left(1-{\epsilon\over \alpha_1}\right)r_2.
\end{eqnarray}

\end{proof}

\begin{lemma}\label{f2-alpha6-lemma}
$f_2(\tilde{m_2},\tilde{P_2})\le 2f_2(m_2, P_2)+\alpha_6 \beta_2nr_2^2$.
\end{lemma}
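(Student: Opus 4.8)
The plan is to exploit that $\tilde m_2$ is, by definition, the centroid of the multiset $\tilde P_2$, so it minimizes the sum of squared distances and may be replaced by any fixed reference point at the cost of only increasing $f_2$. The tempting choice is $c_1$ itself, so that the $|P_2^{in}|$ copies of $c_1$ contribute nothing; but bounding $f_2(\tilde m_2,\tilde P_2)$ by $f_2(c_1,\tilde P_2)$ overshoots, because the points of $P_2^{out}$ sit far from $c_1$ and the excess $|\tilde P_2|\,||\tilde m_2-c_1||^2$ is not controllable by the target $2f_2(m_2,P_2)+\alpha_6\beta_2 n r_2^2$. Instead I would compare against $m_2$. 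By Lemma~\ref{KumarSS10-eqn} applied to $\tilde P_2$ (a purely algebraic identity, valid for the multiset), $f_2(m_2,\tilde P_2)=f_2(\tilde m_2,\tilde P_2)+|\tilde P_2|\,||m_2-\tilde m_2||^2$, hence $f_2(\tilde m_2,\tilde P_2)\le f_2(m_2,\tilde P_2)$. Since $\tilde P_2$ consists of $|P_2^{in}|$ copies of $c_1$ together with $P_2^{out}$, this expands as $f_2(m_2,\tilde P_2)=|P_2^{in}|\,||c_1-m_2||^2+\sum_{p\in P_2^{out}}||p-m_2||^2$.

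The second sum is exactly the outside part of $f_2(m_2,P_2)$. For the first term I would route through $m_2^{in}=c(P_2^{in})$: by Lemma~\ref{norm-square-lemmma}, $||c_1-m_2||^2\le 2||c_1-m_2^{in}||^2+2||m_2^{in}-m_2||^2$, where $||c_1-m_2^{in}||\le r_2$ because every point of $P_2^{in}=P_2\cap B_2$ lies within $r_2$ of $c_1$ and $m_2^{in}$ is their average (the averaging argument used just before Lemma~\ref{case1-f2-lemma}). The crucial observation that keeps the constant at $2$ rather than $3$ is that $|P_2^{in}|\,||m_2^{in}-m_2||^2$ is charged against the \emph{inside} part of $f_2(m_2,P_2)$: applying Lemma~\ref{KumarSS10-eqn} to $P_2^{in}$ gives $\sum_{p\in P_2^{in}}||p-m_2||^2=f_2(m_2^{in},P_2^{in})+|P_2^{in}|\,||m_2^{in}-m_2||^2\ge |P_2^{in}|\,||m_2^{in}-m_2||^2$. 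Writing $A_{in}=\sum_{p\in P_2^{in}}||p-m_2||^2$ and $A_{out}=\sum_{p\in P_2^{out}}||p-m_2||^2$, I then get $|P_2^{in}|\,||c_1-m_2||^2\le 2|P_2^{in}|r_2^2+2A_{in}$, so that $f_2(m_2,\tilde P_2)\le 2|P_2^{in}|r_2^2+2A_{in}+A_{out}\le 2|P_2^{in}|r_2^2+2(A_{in}+A_{out})=2|P_2^{in}|r_2^2+2f_2(m_2,P_2)$.

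Finishing is immediate: since $|P_2^{in}|\le|P_2|=\beta_2 n$ and $\alpha_6\ge 4$ by inequality~(\ref{constant-start-conditions}), we have $2|P_2^{in}|r_2^2\le 2\beta_2 n r_2^2\le\alpha_6\beta_2 n r_2^2$, which yields $f_2(\tilde m_2,\tilde P_2)\le 2f_2(m_2,P_2)+\alpha_6\beta_2 n r_2^2$. The main obstacle is exactly the step that pins down the constant $2$: one must refuse to compare against $c_1$, compare against $m_2$ instead, and then recognize that the displacement $||m_2^{in}-m_2||$ of the inside-centroid is already paid for by $A_{in}$, so that $2A_{in}+A_{out}$ collapses into $2f_2(m_2,P_2)$ with no double counting. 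The degenerate case $P_2^{in}=\emptyset$ is harmless, since then $\tilde P_2=P_2$, $\tilde m_2=m_2$, and the claim is trivial.
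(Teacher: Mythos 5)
Your proof is correct, but it follows a genuinely different route from the paper's. The paper expands $f_2(\tilde{m_2},\tilde{P_2})$ directly around $\tilde{m_2}$: for each inside point it writes $\tilde{p}-\tilde{m_2}=(\tilde{p}-p)+(p-\tilde{m_2})$, applies Lemma~\ref{norm-square-lemmma} pointwise to reach $2|P_2^{in}|r_2^2+2f_2(\tilde{m_2},P_2)$, and then converts $f_2(\tilde{m_2},P_2)$ into $f_2(m_2,P_2)$ via Lemma~\ref{KumarSS10-eqn} together with Lemma~\ref{m2-m2-bar-lemma}, ending with the additive term $4\beta_2nr_2^2$. You instead invoke the centroid optimality of $\tilde{m_2}$ first, bounding $f_2(\tilde{m_2},\tilde{P_2})\le f_2(m_2,\tilde{P_2})$, expand that quantity exactly, and control the single displacement $\|c_1-m_2\|$ by routing it through $m_2^{in}$, charging $|P_2^{in}|\,\|m_2^{in}-m_2\|^2$ against the inside portion of $f_2(m_2,P_2)$ via Lemma~\ref{KumarSS10-eqn} applied to $P_2^{in}$. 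Your variant never uses Lemma~\ref{m2-m2-bar-lemma} at all and yields the slightly sharper additive term $2\beta_2nr_2^2$; the trade-offs are that you need the averaging bound $\|c_1-m_2^{in}\|\le r_2$ (borrowed from Case~1) and must dispose of the degenerate case $P_2^{in}=\emptyset$, where $m_2^{in}$ is undefined, which you correctly do. The paper's route costs a factor $4$ rather than $2$ but reuses Lemma~\ref{m2-m2-bar-lemma}, which is needed anyway in the proof of Lemma~\ref{f2-f2-opt2-lemma}, so no machinery is wasted; both bounds are absorbed by $\alpha_6\ge 4$ from inequality~(\ref{constant-start-conditions}), so either constant suffices for the downstream argument.
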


\begin{proof}
We have the following inequalities:
\begin{eqnarray}
f_2(\tilde{m_2},\tilde{P_2})&\le&\sum_{p\in P^{in}_2} ||\tilde{p}-\tilde{m_2}||^2+\sum_{p\in P^{out}_2} ||p-\tilde{m_2}||^2\label{f2-m2-P2-inequality1}\\
&\le&\sum_{p\in P^{in}_2} ||\tilde{p}-p+p-\tilde{m_2}||^2+\sum_{p\in P^{out}_2} ||p-\tilde{m_2}||^2\label{f2-m2-P2-inequality2}\\
&\le&\sum_{p\in P^{in}_2} 2(||\tilde{p}-p||^2+||p-\tilde{m_2}||^2)+\sum_{p\in P^{out}_2} ||p-\tilde{m_2}||^2\label{f2-m2-P2-inequality3}\\
&\le&2\sum_{p\in P^{in}_2}||\tilde{p}-p||^2+2\sum_{p\in P_2} ||p-\tilde{m_2}||^2\label{f2-m2-P2-inequality4}\\
&\le&2|P^{in}_2|r_2^2+2f_2(\tilde{m_2},P_2)\label{f2-m2-P2-inequality5}\\
&\le&2|P^{in}_2|r_2^2+2f_2(m_2,P_2)+2|P_2|\cdot||\tilde{m_2}-m_2||^2\label{f2-m2-P2-inequality6}\\
     &\le& 2f_2(m_2, P_2)+4 \beta_2nr_2^2\label{f2-m2-P2-inequality7}\\
     &\le& 2f_2(m_2, P_2)+\alpha_6 \beta_2nr_2^2. \label{f2-m2-P2-inequality8}
\end{eqnarray}
The transition from (\ref{f2-m2-P2-inequality5}) to
(\ref{f2-m2-P2-inequality6}) is by Lemma~\ref{KumarSS10-eqn}. The
transition from (\ref{f2-m2-P2-inequality6}) to
(\ref{f2-m2-P2-inequality7}) is by Lemma~\ref{m2-m2-bar-lemma}.
 The
transition from (\ref{f2-m2-P2-inequality7}) to
(\ref{f2-m2-P2-inequality8}) is based on inequality
(\ref{constant-start-conditions}).
\end{proof}

\begin{lemma}\label{later-sampling-lemma}  The iteration from line~\ref{phase2-loop-start-line} to line~\ref{phase2-end-line} of the algorithm has the property:
\begin{enumerate}[1.]
    \item\label{iteration-first}
    There is an integer $j$ $\left(0\le j\le \ceiling{\log n\over \log
        (1+\varsigma)}\right)$ such that $(P-B_2)\subseteq Q_j$ and $|Q_j|\le
    (1+\varsigma)|P-B_2|$.

\item\label{iteration-second}
                    With probability at least
$1-\gamma_5$, when $(P-B_2)\subseteq Q_j$ and $|Q_j|\le
(1+\varsigma)|P-B_2|$,
           the algorithm $2$-Means(.) at line~\ref{H'-center-line} generates $c_2=c(H')$ for a subset $H'$ of size $M$ such that
\begin{eqnarray}
 ||\tilde{m_2}-c_2||^2\le {\epsilon\over \alpha_6}{f_2(\tilde{m_2},\tilde{P_2})\over
 |\tilde{P_2}|},\label{c2-closeness-ineqn}
 \end{eqnarray}
 where $\gamma_5\le \gammabound$ for all positive $\epsilon\le \epsilon_2$ with $\epsilon_2={\delta_1^2d_4\over 2(1-\delta_1)\ln {12\over \gamma}}$.
\end{enumerate}
\end{lemma}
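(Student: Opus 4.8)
The plan is to prove the two items separately: item~\ref{iteration-first} is a deterministic property of the peeling loop in lines~\ref{phase2-loop-start-line}--\ref{phase2-end-line}, while item~\ref{iteration-second} is a sampling argument resting on Inaba's Lemma~\ref{geometric-random-lemma}.

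For item~\ref{iteration-first}, I would first record the key separation: every point of $P-B_2$ satisfies $\dist(p,c_1)>r_2$, whereas every point of $B_2$ satisfies $\dist(p,c_1)\le r_2$, so the points of $P-B_2$ carry the largest distances to $c_1$. Line~\ref{find-i-th-largest-line} retains the points of $Q_j$ farthest from $c_1$, keeping a $1/(1+\varsigma)$ fraction, so $|Q_{j+1}|\le\frac{|Q_j|}{1+\varsigma}$ and the sizes decay geometrically from $|Q_0|=|P|=n$; hence the loop halts within $\ceiling{\log n/\log(1+\varsigma)}$ passes and there is a first index $j$ with $|Q_j|\le(1+\varsigma)|P-B_2|$. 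I would then show $P-B_2\subseteq Q_{j'}$ for all $j'\le j$ by induction: the base case is $Q_0=P$, and for $j'<j$ we have $|Q_{j'}|>(1+\varsigma)|P-B_2|$, so the retained set (being the farthest $\ge\frac{|Q_{j'}|}{1+\varsigma}>|P-B_2|$ points of $Q_{j'}$) must contain the $|P-B_2|$ farthest points, i.e.\ all of $P-B_2$. This yields $P-B_2\subseteq Q_j$ together with $|Q_j|\le(1+\varsigma)|P-B_2|$, and $|Q_j|\ge|P-B_2|$ is immediate from the inclusion.

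For item~\ref{iteration-second} I would condition on the good phase $j$ and argue over the randomness of $V_b$ alone. The first step bounds the density of $P_2^{out}$ in $Q_j$: Lemma~\ref{lower-bound-lemma} gives $|P_2^{out}|/|P-B_2|\ge\epsilon^2/\alpha_2$, and with $|P-B_2|\ge|Q_j|/(1+\varsigma)$ this yields $|P_2^{out}|/|Q_j|\ge\frac{\epsilon^2}{\alpha_2(1+\varsigma)}$. With $N_2$ chosen as in line~\ref{N3-line}, the expected number of points of $V_b$ falling in $P_2^{out}$ is at least $M/(1-\delta_1)$, so Theorem~\ref{chernoff3-theorem} with deviation $\delta_1$ shows that $V_b$ holds at least $M$ points of $P_2^{out}$ except with probability $e^{-\delta_1^2M/(2(1-\delta_1))}$, which is at most $\gamma/12$ once $\epsilon\le\epsilon_2$ by inequality~\ref{exp-M2-ineqn}. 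On this event the points of $V_b\cap P_2^{out}$ are independent uniform draws from $P_2^{out}$, and I would use them, together with the appended copies produced at line~\ref{H'-center-line}, to realize a uniform size-$M$ sample of the modified cluster $\tilde{P_2}$: its outer draws come from $V_b\cap P_2^{out}$ and its inner, $c_1$-valued draws from the copies. Since the algorithm enumerates every size-$M$ subset $H'$, and hence every outer/inner split, this sample appears among the produced $c(H')$. Applying Lemma~\ref{geometric-random-lemma} to it with failure parameter $\alpha_6/d_4$ and using $M=d_4/\epsilon$ gives precisely $\|\tilde{m_2}-c_2\|^2\le\frac{\epsilon}{\alpha_6}\cdot\frac{f_2(\tilde{m_2},\tilde{P_2})}{|\tilde{P_2}|}$. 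A union bound over the Chernoff failure and the Inaba failure then gives $\gamma_5\le\gammabound$ for $\epsilon\le\epsilon_2$, the constants of Section~\ref{parameters-sect} being fixed so that the two contributions fit inside $\gammabound$.

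The main obstacle is the middle of item~\ref{iteration-second}: certifying that some enumerated $H'$ behaves like a uniform size-$M$ sample of $\tilde{P_2}$ rather than of $Q_j$. Two mismatches must be reconciled. First, the densities $|P_2^{out}|/|Q_j|$ and $|P_2^{out}|/|P_2|$ differ, so the correct outer/inner split must be recovered from the exhaustive enumeration of size-$M$ subsets rather than read off the sampling rate. Second, $\tilde{P_2}$ carries a synthetic inner part equal to $|P_2^{in}|$ copies of $c_1$ that never occurs in $Q_j$, while the algorithm appends copies of the empirical mean $c(V_b)$ at line~\ref{H'-center-line}; one must therefore argue that these copies adequately stand in for the $c_1$-valued inner part without damaging the bound. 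Keeping the constructed sample independent enough for Lemma~\ref{geometric-random-lemma} to apply, while absorbing this substitution, is the delicate point, and it is exactly what forces the calibrated values of $N_2$ and $\alpha_6$ in Section~\ref{parameters-sect}.
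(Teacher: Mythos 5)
Your proposal follows essentially the same route as the paper's proof: for item~\ref{iteration-first} the same geometric-decay-plus-induction argument on the peeling loop (the paper states this more tersely, without the explicit induction, but the idea is identical), and for item~\ref{iteration-second} the same three-step chain --- Lemma~\ref{lower-bound-lemma} to lower-bound the density of $P_2^{out}$ in $Q_j$ by $\epsilon^2/(\alpha_2(1+\varsigma))$, Theorem~\ref{chernoff3-theorem} with inequality~(\ref{exp-M2-ineqn}) to get at least $M$ points of $P_2^{out}$ into $V_b$ except with probability $\gamma/12$, then Lemma~\ref{geometric-random-lemma} applied to a size-$M$ sample of $\tilde{P_2}$ realized through the exhaustive enumeration of subsets $H'$, finished by a union bound. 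Your Inaba parameterization (failure parameter $\alpha_6/d_4$, so that $1/(\delta M)=\epsilon/\alpha_6$ exactly) is in fact cleaner than the paper's, which sets $\gamma_{5,b}=\gamma/12$ and relies on a constant identity ($\gamma_{5,b}M=\alpha_6/\epsilon$) that only holds up to the paper's loose bookkeeping.

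The one place you diverge is the ``delicate point'' you flag and leave open: that the algorithm at line~\ref{phase2-loop2-start-line} appends copies of $c(V_b)$, whereas simulating a uniform sample of $\tilde{P_2}$ requires the inner draws to be copies of $c_1$ (the elements of $\tilde{P_2^{in}}$). You should know that the paper does not close this gap either; its proof simply asserts that the $M$ points forming $H'$ ``are chosen from $\tilde{P_2}=P_2^{out}\cup\tilde{P_2^{in}}$'' and invokes Lemma~\ref{geometric-random-lemma}, which silently treats the appended copies as copies of $c_1$. In other words, the paper's argument is only consistent with a version of the algorithm in which line~\ref{phase2-loop2-start-line} appends $M$ copies of $c_1$ rather than of $c(V_b)$ (and indeed $c(V_b)\approx c(Q_j)$ need not be anywhere near $c_1$, since $Q_j$ consists of the points farthest from $c_1$). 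So what you identified as the obstacle is a genuine inconsistency between the algorithm as written and its analysis, not a missing idea in your own reconstruction; modulo that shared issue, your proof and the paper's coincide.
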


\begin{proof}

Statement~\ref{iteration-first}: Since $Q_0=P$,  it is trivial $(P-B_2)\subseteq Q_0$. By Statement~\ref{phase2-complexity-statm} of Lemma~\ref{early-sampling-lemma}, the variable $j$  is in the range  $[0,\ceiling{\log n\over \log
    (1+\varsigma)}]$.
For each $j$ during the iteration, we have $Q_{j+1}\subseteq Q_{j}$, and the size of $Q_{j+1}$ is reduced by a factor $(1+\varsigma)$. Furthermore, $Q_{j+1}$ keeps the elements of $Q_j$ with $\dist(q,c_1)\ge m_j$. Therefore, there is an integer $j$ such that $(P-B_2)\subseteq Q_j$ and $|Q_j|\le
(1+\varsigma)|P-B_2|$.

Statement~\ref{iteration-second}: Assume that $(P-B_2)\subseteq Q_j$ and $|Q_j|\le
(1+\varsigma)|P-B_2|$.
By Lemma~\ref{lower-bound-lemma}, with probability at least
$z_3={\epsilon^2\over \alpha_2(1+\varsigma)}$,
 a random element in $Q_j$ is in $P_2^{out}$. By line~\ref{N3-line} of the algorithm $2$-Means(.), we have $z_3N_2={M\over 1-\delta_1}$.

 We assume  $\epsilon\in \left(0, {\delta_1^2d_4\over 2(1-\delta_1)\ln {12\over \gamma}}\right]$.
  Let $a={\delta_1^2\over 2(1-\delta_1)}$.
 If $\epsilon\le  {ad_4\over \ln {12\over \gamma}}={\delta_1^2d_4\over 2(1-\delta_1)\ln {12\over \gamma}}$, then
 $e^{-aM}=e^{-\delta_1^2M\over 2(1-\delta_1)}\le {\gamma\over 12}$  by equation~(\ref{exp-M2-ineqn}) of Lemma~\ref{parameters-lemma}.
  When $N_2$ random
 elements are chosen from $Q_j$, by Theorem~\ref{chernoff3-theorem},
 with probability at most $e^{-\delta_1^2pN_2\over 2}= e^{-\delta_1^2M\over 2(1-\delta_1)}=e^{-aM}\le {\gamma\over 12}$ (for all positive $\epsilon\le  {\delta_1^2d_4\over 2(1-\delta_1)\ln {12\over \gamma}}$),
 there are less than $(1-\delta_1)pN_2=M$ elements from
 $P_2^{out}$.
 Let  $\gamma_{5,a}=e^{-\delta_1^2z_3N_2\over 2}$. We have $\gamma_{5,a}\le {\gamma\over 12}$.

   When $M$ random points, which form subset $H'$ at line~\ref{phase2-loop2-start-line} of $2$-Means(.), are chosen from $\tilde{P_2}=P_2^{out}\cup
 \tilde{P_2^{in}}$, we can get $c_2=c(H')$ that satisfies inequality (\ref{c2-closeness-ineqn}) by Lemma~\ref{geometric-random-lemma}.
We have $\gamma_{5,b}M={\gamma_{5,b}d_4\over 12\epsilon}={\alpha_6\over \epsilon}$  (see equqtion (\ref{gamma5-b-eqn}) for $\gamma_{5,b}$).  We have ${1\over \gamma_{5,b}M}= {\epsilon\over \alpha_6}$.  Thus, we have inequality~(\ref{c2-closeness-ineqn}) with failure probability at most $\gamma_{5,b}$.
It is easy to see that $\gamma_5=\gamma_{5,a}+\gamma_{5,b}\le \gammabound$.
\end{proof}

Assume that $c_2$ satisfies the inequality
(\ref{c2-closeness-ineqn}). We note $|P_2|=|\tilde{P_2}|$.

    \begin{lemma}\label{f2-f2-opt2-lemma}
    $f_2(c_2,P_2)\le (1+{4\epsilon\over \alpha_6})f_2(m_2, P_2)+({2\epsilon\over \alpha_5}+ {2\epsilon^2\over \alpha_5})OPT_2(P)$.
    \end{lemma}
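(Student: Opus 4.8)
The plan is to start from the parallel-axis identity of Lemma~\ref{KumarSS10-eqn}, which gives
$$f_2(c_2,P_2)=f_2(m_2,P_2)+|P_2|\cdot||m_2-c_2||^2,$$
so that the entire estimate reduces to bounding the single displacement term $|P_2|\cdot||m_2-c_2||^2$. Since $c_2$ is controlled only relative to the shifted center $\tilde{m_2}$ (through inequality (\ref{c2-closeness-ineqn})), while $\tilde{m_2}$ is in turn close to the true center $m_2$ (through Lemma~\ref{m2-m2-bar-lemma}), I would route the estimate through $\tilde{m_2}$ using the relaxed triangle inequality of Lemma~\ref{norm-square-lemmma}:
$$||m_2-c_2||^2\le 2||m_2-\tilde{m_2}||^2+2||\tilde{m_2}-c_2||^2.$$

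Next I would bound the two resulting pieces separately. For the first, Lemma~\ref{m2-m2-bar-lemma} gives $||m_2-\tilde{m_2}||^2\le (1-{\epsilon\over\alpha_1})^2 r_2^2\le r_2^2$, so its contribution is at most $2|P_2|r_2^2$. For the second, inequality (\ref{c2-closeness-ineqn}) together with the observation $|\tilde{P_2}|=|P_2|$ yields $2|P_2|\cdot||\tilde{m_2}-c_2||^2\le {2\epsilon\over\alpha_6}f_2(\tilde{m_2},\tilde{P_2})$; I would then eliminate the shifted objective by Lemma~\ref{f2-alpha6-lemma}, turning this into ${4\epsilon\over\alpha_6}f_2(m_2,P_2)+2\epsilon\beta_2 n r_2^2$.

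Finally I would convert all radius terms into multiples of the optimum. Using $r_2=\sqrt{\epsilon\over\alpha_5\beta_2}\,\sigma_{opt}$ from Definition~\ref{rj-Pj-mj-def} and $\sigma_{opt}^2=OPT_2(P)/n$ from Definition~\ref{OPT-def}, the identity $\beta_2 n r_2^2={\epsilon\over\alpha_5}OPT_2(P)$ collapses the $2|P_2|r_2^2$ term into ${2\epsilon\over\alpha_5}OPT_2(P)$ and the $2\epsilon\beta_2 n r_2^2$ term into ${2\epsilon^2\over\alpha_5}OPT_2(P)$; adding the surviving ${4\epsilon\over\alpha_6}f_2(m_2,P_2)$ to the leading $f_2(m_2,P_2)$ from the parallel-axis identity produces exactly the claimed bound. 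The only real obstacle here is bookkeeping — keeping straight which term originates directly from $r_2^2$, which arrives through the $\tilde{P_2}$ objective via Lemma~\ref{f2-alpha6-lemma}, and verifying that discarding the factor $(1-{\epsilon\over\alpha_1})^2\le 1$ is harmless — but no inequality beyond those already established is required.
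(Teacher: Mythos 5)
Your proposal is correct and follows essentially the same route as the paper's proof: the parallel-axis identity of Lemma~\ref{KumarSS10-eqn}, the relaxed triangle inequality of Lemma~\ref{norm-square-lemmma} routed through $\tilde{m_2}$, then Lemma~\ref{m2-m2-bar-lemma}, inequality~(\ref{c2-closeness-ineqn}) with $|\tilde{P_2}|=|P_2|$, Lemma~\ref{f2-alpha6-lemma}, and finally the substitutions $r_2=\sqrt{\epsilon\over\alpha_5\beta_2}\,\sigma_{opt}$ and $\sigma_{opt}^2=OPT_2(P)/n$ to collect the terms. The bookkeeping in your final step matches the paper's term-by-term accounting exactly, including the harmless discarding of the factor $\left(1-{\epsilon\over\alpha_1}\right)^2\le 1$.
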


    \begin{proof}
We have the inequalities:
\begin{eqnarray}
f_2(c_2,P_2)&=& f_2(m_2, P_2)+|P_2|||m_2-c_2||^2\label{f2-c2-P2-ineqn1}\\
&\le&  f_2(m_2, P_2)+|P_2|(2||m_2-\tilde{m_2}||^2+2||\tilde{m_2}-c_2||^2)\label{f2-c2-P2-ineqn2}\\
&\le&  f_2(m_2, P_2)+2|P_2|r_2^2+2|P_2|||\tilde{m_2}-c_2||^2)\label{f2-c2-P2-ineqn3}\\
&\le&  f_2(m_2, P_2)+2|P_2|r_2^2+2|P_2|\cdot {\epsilon\over \alpha_6}\cdot {f_2(\tilde{m_2},\tilde{P_2})\over |P_2|}\label{f2-c2-P2-ineqn4}\\
&=&  f_2(m_2, P_2)+2|P_2|r_2^2+{2\epsilon\over \alpha_6}f_2(\tilde{m_2},\tilde{P_2})\label{f2-c2-P2-ineqn5}\\
&=&  f_2(m_2, P_2)+2|P_2|\left(\sqrt{\epsilon\over \alpha_5\beta_2}\sigma_{opt}\right)^2+{2\epsilon\over \alpha_6}f_2(\tilde{m_2},\tilde{P_2})\label{f2-c2-P2-ineqn6}\\
&=&  f_2(m_2, P_2)+\left({2|P_2|\epsilon\over \alpha_5\beta_2}\right)\sigma_{opt}^2+{2\epsilon\over \alpha_6}f_2(\tilde{m_2},\tilde{P_2})\label{f2-c2-P2-ineqn7}\\
&\le&  f_2(m_2, P_2)+\left({2|P_2|\epsilon\over \alpha_5\beta_2}\right)\sigma_{opt}^2+{2\epsilon\over \alpha_6}\left(2f_2(m_2, P_2)+\alpha_6 \beta_2nr_2^2\right)\label{f2-c2-P2-ineqn8}\\
&=&  f_2(m_2, P_2)+\left({2|P_2|\epsilon\over \alpha_5\beta_2}\right)\sigma_{opt}^2+{2\epsilon\over \alpha_6}\left(2f_2(m_2, P_2)+\alpha_6 \beta_2n\left(\sqrt{\epsilon\over \alpha_5\beta_2}\sigma_{opt}\right)^2\right)\label{f2-c2-P2-ineqn9}\\
&=&  f_2(m_2, P_2)+\left({2|P_2|\epsilon\over \alpha_5\beta_2}\right)\sigma_{opt}^2+{2\epsilon\over \alpha_6}\left(2f_2(m_2, P_2)+\alpha_6 n\left({\epsilon\over \alpha_5}\right)\sigma_{opt}^2\right)\label{f2-c2-P2-ineqn10}\\
&=&  f_2(m_2, P_2)+\left({2\epsilon\over \alpha_5}\right)OPT_2(P)+{2\epsilon\over \alpha_6}\left(2f_2(m_2, P_2)+\alpha_6 \left({\epsilon\over \alpha_5}\right)OPT_2(P)\right)\label{f2-c2-P2-ineqn11}\\
&=&  f_2(m_2, P_2)+{4\epsilon\over \alpha_6}f_2(m_2, P_2)+{2\epsilon\over \alpha_5}OPT_2(P)+ {2\epsilon^2\over \alpha_5}OPT_2(P)\label{f2-c2-P2-ineqn12}\\
&=&  \left(1+{4\epsilon\over \alpha_6}\right)f_2(m_2,
P_2)+\left({2\epsilon\over \alpha_5}+ {2\epsilon^2\over
\alpha_5}\right)OPT_2(P).\label{f2-c2-P2-ineqn13}
\end{eqnarray}

The transition from (\ref{f2-c2-P2-ineqn1}) to
(\ref{f2-c2-P2-ineqn2}) is by Lemma~\ref{norm-square-lemmma}.
The
transition from (\ref{f2-c2-P2-ineqn2}) to (\ref{f2-c2-P2-ineqn3})
is by Lemma~\ref{m2-m2-bar-lemma}.
 The
transition from (\ref{f2-c2-P2-ineqn3}) to (\ref{f2-c2-P2-ineqn4})
is by inequality (\ref{c2-closeness-ineqn}). 
The transition from
(\ref{f2-c2-P2-ineqn5}) to (\ref{f2-c2-P2-ineqn6}) is by
item~\ref{rj-def} of Definition~\ref{rj-Pj-mj-def}. The transition
from (\ref{f2-c2-P2-ineqn7}) to (\ref{f2-c2-P2-ineqn8}) is by
Lemma~\ref{f2-alpha6-lemma}. The transition from
(\ref{f2-c2-P2-ineqn8}) to (\ref{f2-c2-P2-ineqn9}) is by
item~\ref{rj-def} of Definition~\ref{rj-Pj-mj-def}. The transition
from (\ref{f2-c2-P2-ineqn9}) to (\ref{f2-c2-P2-ineqn10}) is by
item~\ref{sigma-opt2-def} of Definition~\ref{OPT-def}. We note that
$|P|={|P_2|\over \beta_2}$.
    \end{proof}

  \begin{lemma}\label{app-ratio-lemma}
  $f_2(c_1, P_2)+f_2(c_2,P_2)\le (1+\epsilon)OPT_2(P)$ for all positive $\epsilon\le \delta_6/4$.
  \end{lemma}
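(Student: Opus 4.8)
The plan is to combine the two upper bounds already established for the individual clusters under the Case 2 hypothesis: Lemma~\ref{f2-upper-lemma} bounds $f_2(c_1,P_1)$ and Lemma~\ref{f2-f2-opt2-lemma} bounds $f_2(c_2,P_2)$. (I read the statement's $f_2(c_1,P_2)$ as a typo for $f_2(c_1,P_1)$, since the whole construction produces centers $c_1$ for $P_1$ and $c_2$ for $P_2$.) Adding these two inequalities gives
\begin{eqnarray}
f_2(c_1,P_1)+f_2(c_2,P_2)
&\le& \left(1+{\epsilon(1+\eta)\over \alpha_6}\right)f_2(m_1,P_1)
+\left(1+{4\epsilon\over \alpha_6}\right)f_2(m_2,P_2)\nonumber\\
& & {}+\left({2\epsilon\over \alpha_5}+{2\epsilon^2\over \alpha_5}\right)OPT_2(P).\nonumber
\end{eqnarray}

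The next step is to replace each cluster cost by $OPT_2(P)$ using $f_2(m_1,P_1)+f_2(m_2,P_2)=OPT_2(P)$ and $f_2(m_j,P_j)\le OPT_2(P)$. I would factor out the common coefficient by bounding the two $(1+\cdots)$ multipliers on $f_2(m_1,P_1)$ and $f_2(m_2,P_2)$ by their maximum, so that the combined cluster-cost term is at most $\bigl(1+\max\bigl({\epsilon(1+\eta)\over\alpha_6},{4\epsilon\over\alpha_6}\bigr)\bigr)OPT_2(P)=\bigl(1+{4\epsilon\over\alpha_6}\bigr)OPT_2(P)$, and then collect the additive $OPT_2(P)$ terms. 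This yields an overall factor of the form
$1+{4\epsilon\over\alpha_6}+{2\epsilon\over\alpha_5}+{2\epsilon^2\over\alpha_5}$,
and the task reduces to checking this is at most $1+\epsilon$ for all $\epsilon\le\delta_6/4$.

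The final step is the constant-chasing verification. Using $\epsilon\le\delta_6/4\le\epsilon<1$ I would bound ${2\epsilon^2\over\alpha_5}\le{2\epsilon\over\alpha_5}$, so the error is at most ${4\epsilon\over\alpha_6}+{4\epsilon\over\alpha_5}=\epsilon\cdot\bigl({4\over\alpha_6}+{4\over\alpha_5}\bigr)$. By equation~(\ref{constant-end-conditions}) of Lemma~\ref{parameters-lemma}, ${4\over\alpha_6}+{4\over\alpha_5}=1-\delta_6<1$, giving the bound $1+(1-\delta_6)\epsilon\le 1+\epsilon$ as required. I expect the only real obstacle to be bookkeeping: making sure the additive $OPT_2(P)$ contributions and the multiplicative contributions are combined so that the surviving coefficients line up exactly with the two summands of~(\ref{constant-end-conditions}); the identity~(\ref{constant-end-conditions}) is precisely engineered to absorb them, so once the terms are grouped correctly the inequality closes immediately.
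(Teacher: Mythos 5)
Your proposal is correct and takes essentially the same route as the paper's proof: add Lemma~\ref{f2-upper-lemma} and Lemma~\ref{f2-f2-opt2-lemma}, replace $f_2(m_1,P_1)+f_2(m_2,P_2)$ by $OPT_2(P)$ after bounding the two multipliers by a common factor, and close via the identity (\ref{constant-end-conditions}); you also correctly read $f_2(c_1,P_2)$ in the statement as a typo for $f_2(c_1,P_1)$. Your bookkeeping is in fact slightly tighter than the paper's: since $1+\eta\le 4$ you can drop the $(1+\eta)$ factor at the max step and get the coefficient $(1-\delta_6)\epsilon\le\epsilon$ outright (valid for all $\epsilon<1$), whereas the paper carries $(1+\eta)$ through, ends with $(1-\delta_6)(1+\delta_6/2+2\epsilon)\epsilon$, and only then invokes $\epsilon\le\delta_6/4$.
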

\begin{proof}
We have the following inequalities:
\begin{eqnarray}
&&f_2(c_1, P_2)+f_2(c_2,P_2)\label{f2-f2-OPT2-ineqn0}\\
&\le& \left(1+{\epsilon(1+\eta )\over \alpha_6}\right)f_2(m_1,
P_1)+\left(1+{4\epsilon\over \alpha_6}\right)f_2(m_2,
P_2)+\left({2\epsilon\over \alpha_5}+ {2\epsilon^2\over
\alpha_5}\right)OPT_2(P)\label{f2-f2-OPT2-ineqn1}\\
&\le& \left(1+{4\epsilon(1+\eta )\over \alpha_6}\right)(f_2(m_1,
P_1)+f_2(m_2, P_2))+\left({2\epsilon\over \alpha_5}+
{2\epsilon^2\over
\alpha_5}\right)OPT_2(P)\label{f2-f2-OPT2-ineqn2}\\
&\le& \left(1+{4\epsilon(1+\eta )\over
\alpha_6}\right)OPT_{2}(P)+\left({2\epsilon\over \alpha_5}+
{2\epsilon^2\over
\alpha_5}\right)OPT_2(P)\\
&\le& \left(1+{4\epsilon(1+\eta )\over \alpha_6}+{2\epsilon\over
\alpha_5}+ {2\epsilon^2\over
\alpha_5}\right)OPT_2(P)\label{ratio-last1a-ineqn}\\
&\le& \left(1+\left({4(1+\eta )\over \alpha_6}+{2+2\epsilon\over
\alpha_5}\right)\epsilon \right)OPT_2(P)\label{ratio-last1b-ineqn}\\
&\le& \left(1+\left({4\over \alpha_6}+{2\over
\alpha_5}\right)(1+\eta+2\epsilon )\epsilon\right)OPT_2(P)\label{ratio-last1-ineqn}\\
&\le& \left(1+\left(1-\delta_6\right)(1+\delta_6/2+2\epsilon )\epsilon\right)OPT_2(P)\label{ratio-last1b-ineqn}\\
&\le& \left(1+\left(1-\delta_6\right)(1+\delta_6)\epsilon\right)OPT_2(P)\label{ratio-last1c-ineqn}\\
&\le& (1+\epsilon)OPT_2(P)\label{ratio-last2-ineqn}\ \ \ {\rm for\
all\ small\ positive\ } \epsilon\le \delta_6/4.
\end{eqnarray}
The transition from (\ref{f2-f2-OPT2-ineqn0}) to
(\ref{f2-f2-OPT2-ineqn1}) is by Lemma~\ref{f2-upper-lemma} and
Lemma~\ref{f2-f2-opt2-lemma}. The transition from
(\ref{ratio-last1-ineqn}) to (\ref{ratio-last2-ineqn}) is by
equation (\ref{constant-end-conditions}).
\end{proof}

\begin{lemma}\label{failure-approx-lemma} There is a positive constant $\delta_0$ such that
with probability at most $1-\gamma$, the algorithm $2$-Means(.)
returns a set $U$ that contains at least one  $2$ centers to induce a $(1+\epsilon)$-approximation for the constrained $2$-means  problem for every $\epsilon\in (0,\delta_0]$.
\end{lemma}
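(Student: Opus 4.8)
The plan is to assemble the preceding lemmas through a case analysis on the (unknown) relative sizes of $P_1$ and $P_2$, and then control the total failure probability by a union bound. First I would fix $\delta_0$ to be the minimum of all the $\epsilon$-thresholds appearing in the component lemmas, namely $\epsilon_0=(\gammabound/d_4)^{1/d_1}$ from Statement~\ref{inbalance-statm} of Lemma~\ref{early-sampling-lemma}, $\epsilon_1=\delta_6$ from Lemma~\ref{lower-bound-lemma}, $\epsilon_2={\delta_1^2d_4\over 2(1-\delta_1)\ln(12/\gamma)}$ from Lemma~\ref{later-sampling-lemma}, and $\delta_6/4$ from Lemma~\ref{app-ratio-lemma}; for every $\epsilon\in(0,\delta_0]$ each component lemma then applies and each of $\gamma_1,\dots,\gamma_5$ is at most $\gammabound=\gamma/6$. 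Since the algorithm runs both the phase-1 enumeration (lines~\ref{phase1-start-line}--\ref{phase1-end-line}) and the phase-2 loop (lines~\ref{phase2-start-line}--\ref{phase2-end-line}) regardless of the input, it suffices to show that in each case one of these two phases deposits a good pair into $U$, and that the events guaranteeing this have combined probability at least $1-\gamma$.

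In the balanced case $\epsilon^{1+d_1}|P|\le|P_2|\le|P_1|$, I would invoke Statements~\ref{early-case1} and~\ref{early-case2} of Lemma~\ref{early-sampling-lemma}: with probability at least $1-\gamma_1-\gamma_2$ the sample $S_a$ contains at least $M$ points of $P_1$ and $S_b$ contains at least $M$ points of $P_2$, so the product enumeration considers a pair $(H_1,H_2)$ with $H_1\subseteq P_1$, $H_2\subseteq P_2$, each of size $M$. Applying Lemma~\ref{c1-quality-statm-lemma} once with $i=1$ and once with $i=2$ (failure $\le 2\gamma_4$) and then Lemma~\ref{f2-upper-lemma} together with its symmetric analogue for $c_2$ yields $f_2(c(H_1),P_1)+f_2(c(H_2),P_2)\le(1+{\epsilon(1+\eta)\over\alpha_6})OPT_2(P)\le(1+\epsilon)OPT_2(P)$, since ${(1+\eta)/\alpha_6}<1$ by inequality~(\ref{constant-start-conditions}); the total failure here is $\gamma_1+\gamma_2+2\gamma_4\le 4\gammabound<\gamma$.

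In the unbalanced case $|P_2|<\epsilon^{1+d_1}|P|$, I would use Statement~\ref{inbalance-statm} of Lemma~\ref{early-sampling-lemma} so that with probability $\ge1-\gamma_3$ the set $V_a$ lies entirely in $P_1$; then Lemma~\ref{c1-quality-statm-lemma} (failure $\le\gamma_4$) makes $c_1=c(V_a)$ satisfy inequality~(\ref{c1-closeness-ineqn}), and Lemma~\ref{f2-upper-lemma} controls $f_2(c_1,P_1)$. I would then split on $|P_2^{out}|$: in Case~1 the chain ending in the unlabeled lemma following Lemma~\ref{case1-f2-lemma} shows that $c_2=c_1$ already gives a $(1+\epsilon)$-approximation, while in Case~2, Lemma~\ref{later-sampling-lemma} guarantees (failure $\le\gamma_5$) that some iteration of the phase-2 loop produces $c_2=c(H')$ satisfying inequality~(\ref{c2-closeness-ineqn}), whence Lemma~\ref{app-ratio-lemma} gives the $(1+\epsilon)$-approximation. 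The total failure is $\gamma_3+\gamma_4\le 2\gammabound$ in Case~1 and $\gamma_3+\gamma_4+\gamma_5\le 3\gammabound<\gamma$ in Case~2.

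The step I expect to be the main obstacle is verifying that $U$ literally contains the pair claimed good in each case --- in particular the deterministic choice $c_2=c_1$ in Case~1, which the algorithm never writes down explicitly. I would argue that this pair is realized through the enumeration over size-$M$ subsets $H'$ of $V_b\cup\{M\text{ copies of }c(V_b)\}$ (the copies device being present precisely so that $c(H')$ can collapse onto a center near $c_1$ when almost all of $P_2$ lies inside $B_2$), or alternatively that the phase-1 enumeration already produces a pair within the required tolerance of $(c_1,c_1)$ because $S_a,S_b$ are dominated by $P_1$ in the unbalanced regime; pinning down which pair is generated and bounding the induced perturbation of $f_2$ is the delicate part. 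Once that is settled, taking the maximum of the three case failure bounds shows the algorithm fails with probability at most $\gamma$, i.e.\ returns a $U$ containing a good pair with probability at least $1-\gamma$, for every $\epsilon\in(0,\delta_0]$.
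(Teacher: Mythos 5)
Your proposal follows essentially the same route as the paper's own proof: the paper's argument is exactly your skeleton --- set $\delta_0=\min(\epsilon_0,\epsilon_1,\epsilon_2)$, observe each of $\gamma_1,\dots,\gamma_5$ is at most $\gamma/6$, apply the union bound, and cite Lemma~\ref{app-ratio-lemma} for the ratio --- except that the paper compresses this into three sentences and never separates the cases explicitly. Your version is more careful in two places: you include the threshold $\delta_6/4$ required by Lemma~\ref{app-ratio-lemma} in the definition of $\delta_0$ (the paper omits it, even though $\epsilon_1=\delta_6$ alone does not imply it), and you write out the balanced-case ratio bound via the symmetric application of Lemma~\ref{f2-upper-lemma}, which the paper nowhere states. (The paper additionally cites Lemma~\ref{vibrate-lemma} to make the distance ranking in line~\ref{find-i-th-largest-line} well defined; that is a minor point your write-up skips.)

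The ``delicate part'' you flag is a genuine loose end, but it is a loose end of the paper's exposition rather than of your decomposition, and your first proposed resolution is the intended one. As literally written, line~\ref{phase2-loop2-start-line} enumerates subsets of $V_b\cup\{M$ copies of $c(V_b)\}$, which indeed never yields the pair $(c_1,c_1)$. However, the proof of Lemma~\ref{later-sampling-lemma} simulates sampling from $\tilde{P_2}=P_2^{out}\cup\tilde{P_2^{in}}$, whose elements are points of $P_2^{out}$ (supplied by $V_b$) together with copies of $c_1$ --- so the $M$ copies in line~\ref{phase2-loop2-start-line} must be copies of $c_1$, and ``$c(V_b)$'' there is evidently a typo. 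Under that reading, Case~1 needs no perturbation argument at all: the subset $H'$ consisting of the $M$ copies of $c_1$ gives $c(H')=c_1$ exactly, so $(c_1,c_1)$ is deterministically added to $U$ in every iteration of the phase-2 loop, and the chain ending in the unlabeled lemma after Lemma~\ref{case1-f2-lemma} applies verbatim. Your alternative route through the phase-1 pairs could probably be made to work, but it would require a separate Inaba-type estimate for subsets of $S_b$ lying in $P_1$ and a bound trading $\|c(H_2)-c_1\|$ against $\sigma_1$; it is not needed once the typo is repaired.
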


\begin{proof} Let $\delta_0=\min(\epsilon_0, \epsilon_1,\epsilon_2)$, where $\epsilon_0$, $\epsilon_1$ and $\epsilon_2$ are defined in Lemmas~\ref{early-sampling-lemma},~\ref{lower-bound-lemma} and~\ref{later-sampling-lemma}, respectively. By Lemma~\ref{vibrate-lemma}, Lemma~\ref{later-sampling-lemma}, and
Lemma~\ref{early-sampling-lemma},  the probability of the algorithm
to fail is at most $\gamma_1+\gamma_2+\gamma_3+\gamma_4+\gamma_5\le
\gamma$ from union bound, where the parameters
$\gamma_1,\gamma_2,\gamma_3,\gamma_4$ and $\gamma_5$ are defined in
Lemmas~\ref{early-sampling-lemma} and~\ref{later-sampling-lemma}.
The $(1+\epsilon)$ approximation follows from
Lemma~\ref{app-ratio-lemma}.
\end{proof}

\begin{lemma}\label{time-lemma}
The algorithm 2-Means(.) runs in
$\bigO\left(dn+d\left({e^2(d_2 +\delta)\over
\epsilon^{1+d_1}}\right)^{M}+\left(({\alpha_2+\delta\over
\epsilon^2})e\right)^{M}\cdot {\log n\over \log
(1+\varsigma)})\right)$ -time.
\end{lemma}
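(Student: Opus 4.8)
The plan is to account for the running time of the algorithm by summing the contributions of its two structural phases, and then substituting the parameter definitions to obtain the stated bound. I would first dispose of the linear preprocessing: reading and sampling from $P$ requires $\bigO(dn)$ time, and this accounts for the leading $dn$ term.

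Next I would bound the first phase (lines \ref{phase1-start-line}--\ref{phase1-end-line}). By Statement~\ref{phase1-complexity-statm} of Lemma~\ref{early-sampling-lemma}, this phase generates at most ${N_a \choose M}\cdot {N_b \choose M}$ center pairs, and computing each centroid costs $\bigO(dM)$, which is absorbed into the big-$\bigO$. Using the Stirling-based estimate ${N \choose M}\le \bigl(\frac{eN}{M}\bigr)^M$, and substituting $N_a = d_2 M$ together with $N_b = \frac{M}{(1-\delta_2)\epsilon^{1+d_1}}$, the product becomes at most $\bigl(e d_2\bigr)^M \cdot \bigl(\frac{e}{(1-\delta_2)\epsilon^{1+d_1}}\bigr)^M = \bigl(\frac{e^2 d_2}{(1-\delta_2)\epsilon^{1+d_1}}\bigr)^M$. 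The inequality $(1+2\delta_2)d_2 \le d_2+\delta$ from~(\ref{delta2-d2-ineqn}) lets me replace $\frac{d_2}{1-\delta_2}$ by a quantity bounded by $d_2+\delta$ (since $\frac{1}{1-\delta_2}\le 1+2\delta_2$ for $\delta_2\in(0,\tfrac12]$), yielding the middle term $d\bigl(\frac{e^2(d_2+\delta)}{\epsilon^{1+d_1}}\bigr)^M$.

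For the second phase (lines \ref{phase2-start-line}--\ref{phase2-end-line}), Statement~\ref{phase2-complexity-statm} of Lemma~\ref{early-sampling-lemma} says the outer loop runs at most $\frac{\log n}{\log(1+\varsigma)}$ times, each iteration enumerating ${N_2+M \choose M}$ subsets. Applying ${N_2+M \choose M}\le \bigl(\frac{e(N_2+M)}{M}\bigr)^M$ and substituting $N_2 = M(1+\varsigma)\frac{\alpha_2}{\epsilon^2}\cdot\frac{1}{1-\delta_1}$, I would bound $\frac{N_2+M}{M}$ by $\frac{(1+\varsigma)(1+2\delta_1)\alpha_2}{\epsilon^2}$ (again using $\frac{1}{1-\delta_1}\le 1+2\delta_1$ and folding the additive $+M$ into the constant), and then invoke~(\ref{varsigma-delta1-ineqn}), namely $(1+\varsigma)(1+2\delta_1)\alpha_2 \le \alpha_2+\delta/2 \le \alpha_2+\delta$, to reach $\bigl(\frac{(\alpha_2+\delta)}{\epsilon^2}e\bigr)^M$; multiplying by the iteration count gives the third term.

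I expect the main obstacle to be the bookkeeping that makes the parameter inequalities line up exactly with the claimed constants inside the bases of the exponentials—in particular, justifying the passage from $N_b$ and $N_2$ (which carry the $\frac{1}{1-\delta_2}$ and $\frac{1}{1-\delta_1}$ factors and, for $N_2$, the extra $(1+\varsigma)$ and the additive $+M$) to the clean forms $d_2+\delta$ and $\alpha_2+\delta$ via inequalities~(\ref{delta2-d2-ineqn}) and~(\ref{varsigma-delta1-ineqn}). The probabilistic content is not reproved here; I would simply cite the structural complexity statements of Lemma~\ref{early-sampling-lemma}, so the entire argument reduces to combining three upper bounds and simplifying, with the only delicate step being the Stirling estimates and the constant-matching.
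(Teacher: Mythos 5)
Your accounting of the enumeration costs is essentially the paper's own proof: the two Stirling-based chains you describe for the phase-1 and phase-2 binomials, the substitution of $N_a$, $N_b$, $N_2$, the use of $\frac{1}{1-\delta_2}\le 1+2\delta_2$ and $\frac{1}{1-\delta_1}\le 1+2\delta_1$, the invocation of inequalities~(\ref{delta2-d2-ineqn}) and~(\ref{varsigma-delta1-ineqn}), and the absorption of the additive $+1$ into the slack between $\delta/2$ and $\delta$ (the paper does this via $\epsilon^2\le\delta/2$) reproduce the paper's displayed derivations of $T_1$ and $T_2$ step for step, and you cite the same structural statements of Lemma~\ref{early-sampling-lemma}.

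However, there is one genuine omission in your treatment of the leading $\bigO(dn)$ term. You charge it only to ``reading and sampling from $P$,'' but every iteration of the loop from line~\ref{phase2-loop-start-line} to line~\ref{phase2-end-line} also executes line~\ref{find-i-th-largest-line}: it must compute $\dist(p,c_1)$ for all $p\in Q_j$ and select the $\frac{|Q_j|}{1+\varsigma}$-th largest value to form $Q_{j+1}$, which costs $\bigO(d\,|Q_j|)$ per iteration. Charging $\bigO(dn)$ to each of the $\ceiling{\frac{\log n}{\log(1+\varsigma)}}$ iterations gives $\bigO\left(dn\cdot\frac{\log n}{\log(1+\varsigma)}\right)$, which is \emph{not} dominated by any term of the claimed bound (the $\log n$ factor there multiplies only an $n$-independent quantity). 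The paper closes exactly this hole with a geometric-decay argument: since $|Q_{j+1}|\le |Q_j|/(1+\varsigma)$, the sizes shrink by a constant factor, so $\sum_j |Q_j|=\bigO(n)$ and the total cost of all the selection steps is $\bigO(dn)$, which can then legitimately be folded into the leading term. Without this observation your stated bound does not follow; with it, your argument coincides with the paper's.
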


\begin{proof}By
Statement~\ref{phase1-complexity-statm} of
Lemma~\ref{early-sampling-lemma}, the number of pairs of centers
generated by line \ref{phase1-start-line} to
line~\ref{phase1-end-line} of the algorithm $2$-Means(.) is at most
\begin{eqnarray}
T_1&=&{N_a\choose M}\cdot {N_b\choose
M}\label{T1-inequality1}\\
&\le&\left(\left({N_a^M\over M!}\right)\cdot
\left({N_b^M\over
M!}\right)\cdot \right)\label{T1-inequality2}\\
&=&\bigO\left(\left({N_a\cdot e\over M}\right)^M\cdot
\left({N_b\cdot e\over
M}\right)^M\right)\label{T1-inequality3}\\
&=&\bigO\left((ed_2)^M\cdot \left({e\over (1-\delta_2)\cdot
\epsilon^{1+d_1}}\right)^M\right)\label{T1-inequality4}\\
&=&\bigO\left( \left({d_2e^2 \over (1-\delta_2)\cdot
\epsilon^{1+d_1}}\right)^M\right)\label{T1-inequality5}\\
&=&\bigO\left( \left({(1+2\delta_2)\cdot d_2e^2 \over
\epsilon^{1+d_1}}\right)^{M}\right)\label{T1-inequality7}\\
&=&\bigO\left( \left({e^2(d_2 +\delta)\over
\epsilon^{1+d_1}}\right)^{M}\right).\label{T1-inequality}
\end{eqnarray}

The transition from (\ref{T1-inequality7}) to (\ref{T1-inequality})
is by inequality~(\ref{delta2-d2-ineqn}).

By Statement~\ref{phase2-complexity-statm} of
Lemma~\ref{early-sampling-lemma}, the number of pairs of centers
generated by line \ref{phase2-start-line} to
line~\ref{phase2-end-line} is at most
\begin{eqnarray}
T_2&=&{N_2+M\choose M}\cdot \ceiling{\log n\over \log (1+\varsigma)}\label{T2-ineqn1}\\
&\le&\left({(N_2+M)^M\over M!}\cdot \ceiling{\log
n\over \log (1+\varsigma)}\right)\label{T2-ineqn2}\\
&=&\bigO\left(({(N_2+M)e\over M})^M\cdot {\log
n\over \log (1+\varsigma)}\right)\label{T2-ineqn3}\\
&=&\bigO\left(\left({((1+\varsigma)\cdot {\alpha_2\over
\epsilon^2}\cdot {1\over 1-\delta_1}+1)e}\right)^M\cdot  {\log
n\over \log (1+\varsigma)}\right)\label{T2-ineqn4}\\
&=&\bigO\left(\left({({(1+\varsigma)(1+2\delta_1)\alpha_2\over
\epsilon^2}+1)e}\right)^M\cdot  {\log
n\over \log (1+\varsigma)}\right)\label{T2-ineqn5}\\
&=&\bigO\left(\left({\left({\alpha_2+\delta/2\over
\epsilon^2}+1\right)e}\right)^M\cdot {\log
n\over \log (1+\varsigma)}\right)\label{T2-ineqn6}\\
&=&\bigO\left(\left({\left({\alpha_2+\delta/2+\epsilon^2\over
\epsilon^2}\right)e}\right)^M\cdot {\log
n\over \log (1+\varsigma)}\right)\label{T2-ineqn7}\\
&=&\bigO\left(\left({\alpha_2+\delta\over
\epsilon^2})e\right)^{M}\cdot {\log n\over \log
(1+\varsigma)}\right)\ \ \ {\rm for\ all\ small\ positive\ }
\epsilon\ {\rm with}\ \epsilon^2\le \delta/2.\label{T2-inequality}
\end{eqnarray}

The transition (\ref{T2-ineqn5}) to (\ref{T2-ineqn6}) is by
inequality (\ref{varsigma-delta1-ineqn}).
It takes $\bigO(dn)$ time to find the $i$-th largest element after
$c_1$ is fixed. The number of elements goes down by a constant factor. Therefore, the total time to find the $i$-th largest element is still  $\bigO(dn)$.
Therefore, we have total time $\bigO(dn+d(T_1+T_2))$.
\end{proof}

\begin{theorem}\label{integer-theorem}For the constrained $2$-means  problem, there is a positive constant $\delta_0$ such that
there is a randomized $\bigO\left(dn+d
\left(({\alpha_2+\delta\over
\epsilon^2})e\right)^{d_4\over\epsilon}\cdot \log n)\right)$-time
 algorithm for every $\epsilon\in (0,\delta_0]$. Moreover, it outputs a collection $U$ of approximate center pairs $(c_1, c_2)$ such that one of pairs in $U$ can induce a $(1+\epsilon)$-approximation for the constrained $2$-means problem.
\end{theorem}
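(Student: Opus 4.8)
The plan is to assemble Theorem~\ref{integer-theorem} directly from the two principal lemmas already established: Lemma~\ref{failure-approx-lemma} supplies correctness together with a constant success probability, and Lemma~\ref{time-lemma} supplies the running time. For the ``moreover'' clause and the randomized-correctness claim I would simply invoke Lemma~\ref{failure-approx-lemma} with $\delta_0=\min(\epsilon_0,\epsilon_1,\epsilon_2)$: for every $\epsilon\in(0,\delta_0]$ the algorithm $2$-Means$(.)$ outputs a collection $U$ of center pairs, one of which induces a $(1+\epsilon)$-approximation, and it fails with probability at most $\gamma_1+\gamma_2+\gamma_3+\gamma_4+\gamma_5\le\gamma=\tfrac13$, i.e.\ it succeeds with constant probability (which, if one wants, can be amplified by independent repetition). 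No new argument is needed here beyond restating that lemma.

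The substantive work is to rewrite the running-time bound of Lemma~\ref{time-lemma} in the stated closed form. That lemma gives total time $\bigO(dn+d(T_1+T_2))$ with $T_1=\bigO\!\big((e^2(d_2+\delta)/\epsilon^{1+d_1})^{M}\big)$ and $T_2=\bigO\!\big(((\alpha_2+\delta)e/\epsilon^{2})^{M}\cdot\log n/\log(1+\varsigma)\big)$, where $M=d_4/\epsilon$. Two simplifications are required. First, since $\varsigma$ is a fixed positive constant chosen in~(\ref{varsigma-delta1-ineqn}), $\log(1+\varsigma)$ is a positive constant, so $\log n/\log(1+\varsigma)=\bigO(\log n)$ and the $T_2$ term becomes exactly the expression in the theorem after substituting $M=d_4/\epsilon$. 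Second, I must show the first-phase cost $T_1$ is absorbed by $T_2$, so that $\bigO(dn+d(T_1+T_2))=\bigO(dn+d\,T_2)$.

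The key inequality to verify, and the only step that is not purely formal, is $T_1=\bigO(T_2)$ for all sufficiently small $\epsilon$. Comparing the two bounds, the ratio satisfies
\[
\frac{T_2}{T_1}=\Omega\!\left(\left(\frac{\alpha_2+\delta}{e(d_2+\delta)}\right)^{M}\epsilon^{-(1-d_1)M}\log n\right),
\]
and since $d_1=\delta_6=\tfrac1{10}<1$ the exponent $1-d_1$ is a positive constant, while $M=d_4/\epsilon\to\infty$ as $\epsilon\to0$. Writing the right-hand side as $\exp\!\big(M[\,(1-d_1)\ln(1/\epsilon)+\ln((\alpha_2+\delta)/(e(d_2+\delta)))\,]\big)\log n$, the bracketed quantity tends to $+\infty$ with $1/\epsilon$, so the ratio diverges and there is a threshold below which $T_1\le T_2$; I would take $\delta_0$ in the theorem to be the minimum of this threshold and the constant from Lemma~\ref{failure-approx-lemma}. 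Combining everything yields total time $\bigO\!\big(dn+d\,(((\alpha_2+\delta)/\epsilon^2)e)^{d_4/\epsilon}\log n\big)$, as claimed. The main obstacle is precisely this exponent comparison: although both $T_1$ and $T_2$ blow up as $\epsilon\to0$ and the constant base of $T_2$ may even be smaller than that of $T_1$, the extra factor $\epsilon^{-(1-d_1)M}$ coming from the larger power of $1/\epsilon$ in $T_2$ (namely $\epsilon^{-2}$ for locating two centers versus $\epsilon^{-(1+d_1)}$ in the first phase) overwhelms any constant-base disadvantage because $M$ itself grows like $1/\epsilon$.
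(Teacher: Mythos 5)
Your proposal is correct and follows essentially the same route as the paper, which proves the theorem by citing Lemma~\ref{time-lemma} for the running time and Lemma~\ref{failure-approx-lemma} for the approximation guarantee and failure probability. The only difference is that you explicitly verify the absorption $T_1=\bigO(T_2)$ for small $\epsilon$ (via the $\epsilon^{-(1-d_1)M}$ factor with $d_1<1$), a step the paper compresses into the phrase ``by selecting $\epsilon$ and the parameters \ldots to be small enough.''
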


\begin{proof}
The time complexity of the algorithm $2$-Means(.) follows from
Lemma~\ref{time-lemma} by selecting $\epsilon$ and the parameters
mentioned in Section~\ref{parameters-sect} to be small enough. The
failure probability of the algorithm and its approximation ratio
follow from Lemma~\ref{failure-approx-lemma}.
\end{proof}

\begin{corollary}\label{integer1-corollary}For the constrained $2$-means  problem, there is a positive constant $\delta_0$ such that
there is a $\bigO(dn+d({1\over\epsilon})^{\bigO({1\over
\epsilon})}\log n)$-time algorithm for every $\epsilon\in (0,\delta_0]$. Moreover, it outputs a collection $U$ of approximate center pairs $(c_1, c_2)$ such that one of pairs in $U$ can induce a $(1+\epsilon)$-approximation for the constrained $2$-means.
\end{corollary}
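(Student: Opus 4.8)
The plan is to derive Corollary~\ref{integer1-corollary} directly from Theorem~\ref{integer-theorem}, since the two statements describe the very same algorithm $2$-Means$(\cdot)$ with the same output guarantee; the only difference is that the Corollary reports the running time in the cleaner form $\bigO(dn+d(1/\epsilon)^{\bigO(1/\epsilon)}\log n)$. Thus the entire task reduces to an algebraic simplification of the running-time bound stated in Theorem~\ref{integer-theorem}, namely $\bigO\left(dn+d\left((({\alpha_2+\delta})/{\epsilon^2})e\right)^{d_4/\epsilon}\log n\right)$.

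First I would observe that all of $\alpha_2$, $\delta$, $d_4$, and $e$ are positive real constants independent of $\epsilon$: this is exactly what the parameter settings of Section~\ref{parameters-sect}, equations (\ref{gamma-eqn})--(\ref{varsigma-delta1-ineqn}), together with Lemma~\ref{parameters-lemma}, establish (recall the explicit note that all parameters set through (\ref{gamma-eqn})--(\ref{varsigma-delta1-ineqn}) are positive real constants). Consequently, writing $C=(\alpha_2+\delta)e$ for a single positive constant, all of the $\epsilon$-dependence lives in the explicit occurrences of $\epsilon$, and I can split the dominant factor as
\[
\left({C\over \epsilon^{2}}\right)^{d_4/\epsilon}
= C^{\,d_4/\epsilon}\cdot\left({1\over \epsilon}\right)^{2d_4/\epsilon}.
\]
The second factor is already of the form $(1/\epsilon)^{\bigO(1/\epsilon)}$, since $2d_4$ is a constant.

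For the first factor I would argue that the constant-base exponential $C^{d_4/\epsilon}=\exp\bigl((d_4\ln C)/\epsilon\bigr)=\exp(\bigO(1/\epsilon))$ is absorbed into $(1/\epsilon)^{\bigO(1/\epsilon)}$. Indeed, for any fixed constant $a$ and all sufficiently small $\epsilon$ we have $e^{a/\epsilon}=(1/\epsilon)^{\,a/(\epsilon\ln(1/\epsilon))}\le (1/\epsilon)^{1/\epsilon}$, because $a/\ln(1/\epsilon)\le 1$ once $\epsilon$ is small; this is legitimate precisely because the statement only quantifies over $\epsilon\in(0,\delta_0]$ and we are free to shrink $\delta_0$. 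Combining the two factors gives $\left((({\alpha_2+\delta})/{\epsilon^{2}})e\right)^{d_4/\epsilon}=(1/\epsilon)^{\bigO(1/\epsilon)}$, and substituting back yields the claimed bound $\bigO(dn+d(1/\epsilon)^{\bigO(1/\epsilon)}\log n)$. The correctness guarantee — that $U$ contains a pair $(c_1,c_2)$ inducing a $(1+\epsilon)$-approximation — and the existence of the constant $\delta_0$ are inherited verbatim from Theorem~\ref{integer-theorem}, so no additional argument is needed there.

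There is essentially no deep obstacle here: the Corollary is a restatement of Theorem~\ref{integer-theorem} with the running time rewritten. The one point that requires a little care is the domination claim $\exp(\bigO(1/\epsilon))\le(1/\epsilon)^{\bigO(1/\epsilon)}$, which would be false if one insisted on a fixed base but is immediate once the exponent's implied constant is allowed to grow and $\epsilon$ is restricted to a small enough interval $(0,\delta_0]$ — which the statement already permits. Everything else is routine bookkeeping of constants.
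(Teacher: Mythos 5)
Your proposal is correct and matches the paper's (implicit) derivation: the paper states Corollary~\ref{integer1-corollary} immediately after Theorem~\ref{integer-theorem} with no separate proof, precisely because $\alpha_2$, $\delta$, and $d_4$ are positive constants independent of $\epsilon$, so the theorem's bound $\bigO\bigl(dn+d\bigl(((\alpha_2+\delta)/\epsilon^2)e\bigr)^{d_4/\epsilon}\log n\bigr)$ rewrites as $\bigO(dn+d(1/\epsilon)^{\bigO(1/\epsilon)}\log n)$, exactly as you argue. Your careful absorption of the constant-base factor $C^{d_4/\epsilon}=\exp(\bigO(1/\epsilon))$ into $(1/\epsilon)^{\bigO(1/\epsilon)}$ is a valid (indeed slightly more explicit) treatment of the routine bookkeeping the paper leaves unstated.
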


\section{Improving the Existing PTAS for Constrained
$k$-Means}\label{k-mean-sect}

In this section, we generalize the method developed in this paper,
and derive improved PTAS for the constrained $k$-means problems. We
observed that all the existing PTAS can be transformed in our
framework.

\newcommand {\extendKM}{{\rm $k$-means extension}}
\newcommand {\gammaboundk}{{\mu_2\over 4}}

\begin{definition}\label{extend-k-means-def}
Let $P$ be an input of points in $\mathbb{R}^d$ for a constrained $k$-means problem, and $P_1,P_2,\cdots, P_k$ be the $k$ clusters
of $P$ with $|P_1|\ge |P_2|\ge \cdots\ge |P_k|$.    Let $\delta(.):(0,1)\rightarrow (0,1)$ be an nonincreasing
function with $\lim_{\epsilon\rightarrow 0}\delta(\epsilon)=0$.
 A
$(\epsilon,\delta(.),\mu)$-\extendKM~is an algorithm $A(k,
P,\epsilon, T)$ that gets a subset $T=(c_1,\cdots, c_r)$ approximate
centers for $P_1,\cdots, P_r$, respectively, with
$||c_j-c(P_j)||\le \delta(\epsilon)\sigma_j$ $(1\le j\le r)$
for the largest $r$
clusters $P_1,\cdots, P_r$,  and returns the rest $k-r$ approximate centers
$c_{r+1},\cdots, c_k$ for $P_{r+1},\cdots, P_k$, respectively, such
that they form a $(1+\epsilon)$-approximation for the $k$-means
problem. Furthermore, the failure probability is at most $\mu$.

A \extendKM~$A(k, P,\epsilon, T)$ has time complexity $Z(k,r,n,
d,\epsilon)$ that is the time to
find the rest $k-r$ centers, where $T=(c_1,\cdots, c_r)$ provides the algorithm $A(.)$ $r$ approximate
centers for $P_1,\cdots, P_r$, respectively.

A \extendKM~$A(k, P,\epsilon, T)$ has pair complexity $H(k,r,n,
d,\epsilon)$ to be an upper bound of the number of $k$-centers in
its output list after it gets $k-r$ approximate centers.
\end{definition}

Our method is based on the two cases that were discussed in the
algorithm for the constrained $2$-means problem. The existing approximation algorithm for the constrained $k$-means problem can be speed up.

 In the case 1, the two largest clusters $P_1$ and $P_2$ are balanced. We will efficiently find the approximate centers $c_1$ and $c_2$ for $P_1$ and $P_2$, respectively.  The
 centers for $P_3,\cdots, P_k$ can be found by calling $A(.)$.

 In the case 2, the two largest clusters $P_1$ and $P_2$  are not balanced. In this case, $P_1$ is much larger the union of the other clusters.   We will efficiently find the approximate centers $c_1$ for $P_1$. The
 centers for $P_2,\cdots, P_k$ can be found by calling $A(.)$.

\vskip 20pt {\bf Algorithm} $k$-Means$(P, \epsilon, A(.))$

Input: $P$ is a set of points, a real number $\epsilon$ is in $
(0,1)$, and $A(.)$ is a $(\epsilon,\delta(.),\mu_1)$ \extendKM~with
a fixed $\mu_1<1$.

Output: A collection $U$ of $k$ centers $(c_1,c_2,\cdots, c_k)$.

\begin{enumerate}[1.]

\item
\qquad Let $U=\emptyset$;

\item
\qquad Let $D=\emptyset$;

\item
\qquad Choose parameters $\mu_2=\mu_3$ such that
$\mu_1+\mu_2+\mu_3<1$;

\item
\qquad Let $\gamma^*={\mu_3\over 2}$;

\item\label{k-means-start-line}
\qquad Let $\delta_2$,  $d_1$ and $d_2$  be the same as those in the algorithm
$2$-Means(.);


\item
\qquad Let $M={1\over \gamma^*\delta(\epsilon)}$;

\item\label{EN1a-line}
\qquad Let $N_a=d_2kM$;

\item\label{EN1b-line}
\qquad Let $N_b={M(k-1)\over (1-\delta_2)\cdot
\delta(\epsilon)^{1+d_1}}$;

\item\label{Ephase1-start2-line}
\qquad Select a set $S_a$ of $N_a$ random samples from $P$;

\item
\qquad Select a set $S_b$ of $N_b$ random samples from $P$;

\item
\qquad For every two subsets $H_1$ of $S_a$ and $H_2$ of
$S_b$  of size $M$,

\item
\qquad \{

\item\label{two-centers-generation2-line}
\qquad\qquad Compute the centroid $c(H_1)$ of $H_1$, and $c(H_2)$ of
$H_2$;


\item\label{add-two-center-line}
\qquad\qquad Let $D=D\cup  \{(c(H_1), c(H_2))\}$.

\item\label{Ephase1-end2-line}
\qquad \}

\item\label{Ephase2-start-line}
\qquad Select a set $V_a$ of $M$ random samples from $P$;

\item
\qquad Compute the centroid $c_1=c(V_a)$ of $S$;


\item\label{k-means-end-line}
 \qquad $D=D\cup  \{(c_1)\}$;


\item
\qquad For each $T\in D$, let $U=U\cup A(k, P,\epsilon, T)$;

\item
\qquad Output $U$;

\end{enumerate}

{\bf End of Algorithm}

\vskip 20pt

The following Lemma~\ref{Eearly-sampling-lemma2} is similar to Lemma~\ref{early-sampling-lemma}. There are some modifications with the parameters.

\begin{lemma}\label{Eearly-sampling-lemma2} The algorithm $k$-Means(.) has the following
properties:
\begin{enumerate}[1.]
\item\label{second-early1}
With probability at least $1-\gamma_1$, at least $(1-\delta_2)d_2M$
random points are from $P_1$ in $S_a$, where
$\gamma_1=2^{{-\delta_2^2d_2\over 4}M}\le \gammaboundk$ for all $\epsilon\le \epsilon_0$, where $\epsilon_0$ is fixed in $(0,1)$. \item\label{second-early2}
If the two clusters satisfy ${\delta(\epsilon)^{1+d_1}\over
k-1}|P|\le |P_2|\le |P_1|$, then with probability at least
$1-\gamma_2$,  at least $M$ random points are from $P_2$ in
$S_b$, where $\gamma_2=2^{{-\delta_2^2\over 2}\cdot {M\over
1-\delta_2}}\le \gammaboundk$  for all $\epsilon\le \epsilon_0$, where $\epsilon_0$ is fixed in $(0,1)$.
\item\label{second-early3}
Line~\ref{Ephase1-start2-line} to line~\ref{Ephase1-end2-line} of the algorithm $k$-Means(.)  generate at most ${N_a\choose M}\cdot
{N_b\choose M}=\bigO\left( \left({(1+2\delta_2)\cdot d_2e^2k(k-1) \over
    \delta(\epsilon)^{1+d_1}}\right)^{M}\right)$ pairs of centers, where $e\approx 2.719$.
\item\label{second-early4}
If the clusters $P_1$ and $P_2$ satisfy
$|P_2|<{\delta(\epsilon)^{1+d_1}\over k-1}\cdot |P|$, then with
probability at least $1-\gamma_3$, $V_a$ contains no element of
$P_2$, where $\gamma_3=\delta(\epsilon)^{1+d_1}M\le \gammaboundk$
for all $\epsilon\le \epsilon_1$, where $\epsilon_1$ is fixed in $(0,1)$.

\item\label{second-early5}
Assume that $V$ only contains elements in $P_i$ $(1\le i\le k$). Then with
probability at least $1-\gamma_4$, the approximate center $c_i=c(V)$
satisfies the inequality
\begin{eqnarray}
||c_i-m_i||^2\le \delta(\epsilon)\sigma_i^2,
\label{Ec1-closeness-ineqn}
\end{eqnarray}
\end{enumerate}
where $\gamma_4=\gamma^*$, which is defined in algorithm $k$-Means(.).
\end{lemma}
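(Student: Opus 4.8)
The plan is to prove the five statements of Lemma~\ref{Eearly-sampling-lemma2} by adapting the corresponding arguments from Lemma~\ref{early-sampling-lemma}, carefully tracking how the parameter changes (the factor of $k$ in $N_a$, the factor of $k-1$ in $N_b$, the new value $M={1\over\gamma^*\delta(\epsilon)}$, and the generic failure bound $\mu_2/4$ rather than $\gamma/6$) propagate through each Chernoff and sampling estimate. First I would handle Statements~\ref{second-early1} and~\ref{second-early2}, which are direct Chernoff-bound computations (Theorem~\ref{chernoff3-theorem}). For Statement~\ref{second-early1}, since $|P_1|\ge |P_2|\ge\cdots\ge |P_k|$ forces $|P_1|\ge |P|/k$, I set the success probability $p_1\ge 1/k$ and draw $N_a=d_2kM$ samples, so the expected number from $P_1$ is at least $d_2M$; Theorem~\ref{chernoff3-theorem} with $\delta_2$ then bounds the probability of fewer than $(1-\delta_2)d_2M$ successes by $e^{-{\delta_2^2\over 2}p_1N_a}=e^{-{\delta_2^2 d_2\over 2}M}$, which I rewrite as a base-$2$ exponential and bound by $\mu_2/4$ for all small enough $\epsilon$ (here $M\to\infty$ as $\epsilon\to 0$ since $\delta(\epsilon)\to 0$, so the bound holds below some fixed $\epsilon_0$). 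Statement~\ref{second-early2} is analogous: under ${\delta(\epsilon)^{1+d_1}\over k-1}|P|\le|P_2|$, the per-sample success probability is at least $z_1=\frac{\delta(\epsilon)^{1+d_1}}{k-1}$, and with $N_b=\frac{M(k-1)}{(1-\delta_2)\delta(\epsilon)^{1+d_1}}$ the expected count is $\frac{M}{1-\delta_2}$, so Theorem~\ref{chernoff3-theorem} yields fewer than $(1-\delta_2)z_1N_b=M$ successes with probability at most $e^{-{\delta_2^2\over 2}\cdot{M\over 1-\delta_2}}\le\mu_2/4$ for $\epsilon\le\epsilon_0$.

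Next I would dispatch the counting statement~\ref{second-early3} and the unbalanced-case statement~\ref{second-early4}, which require no new probabilistic ideas. For~\ref{second-early3} the number of enumerated pairs is exactly ${N_a\choose M}{N_b\choose M}$; using ${N\choose M}\le (Ne/M)^M$ and substituting $N_a=d_2kM$, $N_b=\frac{M(k-1)}{(1-\delta_2)\delta(\epsilon)^{1+d_1}}$ gives $\bigO\!\left(\left(\frac{d_2e^2k(k-1)}{(1-\delta_2)\delta(\epsilon)^{1+d_1}}\right)^{M}\right)$, and I absorb $\frac{1}{1-\delta_2}$ into $(1+2\delta_2)$ exactly as in the transition~(\ref{T1-inequality5}) to~(\ref{T1-inequality7}) using inequality~(\ref{delta2-d2-ineqn}). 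Statement~\ref{second-early4} mirrors Statement~\ref{inbalance-statm}: when $|P_2|<\frac{\delta(\epsilon)^{1+d_1}}{k-1}|P|$, the chance a single draw misses $P_2$ is at least $1-\frac{\delta(\epsilon)^{1+d_1}}{k-1}$, so drawing $M$ points keeps $V_a$ free of $P_2$ with probability at least $\left(1-\frac{\delta(\epsilon)^{1+d_1}}{k-1}\right)^M\ge 1-\frac{\delta(\epsilon)^{1+d_1}}{k-1}M$ by Lemma~\ref{ex-lemma}; bounding $\frac{1}{k-1}\le 1$ gives $\gamma_3\le\delta(\epsilon)^{1+d_1}M$, which tends to $0$ and is below $\mu_2/4$ for $\epsilon\le\epsilon_1$.

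Finally I would prove Statement~\ref{second-early5}, the sampling-quality bound, by invoking Lemma~\ref{geometric-random-lemma} with the failure parameter set to $\delta^*=\gamma^*=\mu_3/2$. Since $V$ is a uniform sample of size $M$ drawn entirely from $P_i$, Lemma~\ref{geometric-random-lemma} gives $\|c(V)-m_i\|^2\le\frac{1}{\delta^*|V|}\sigma_i^2=\frac{1}{\gamma^* M}\sigma_i^2$ with probability at least $1-\gamma^*$. Because $M=\frac{1}{\gamma^*\delta(\epsilon)}$, I have $\frac{1}{\gamma^* M}=\delta(\epsilon)$ exactly, yielding $\|c_i-m_i\|^2\le\delta(\epsilon)\sigma_i^2$ with $\gamma_4=\gamma^*$ as claimed. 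The main obstacle, and the point demanding the most care, is the uniform threshold: each of the five failure bounds must be pushed below $\mu_2/4$ (respectively $\gamma^*$) simultaneously, and this relies on $M=\frac{1}{\gamma^*\delta(\epsilon)}\to\infty$ as $\epsilon\to 0$ (guaranteed by $\lim_{\epsilon\to 0}\delta(\epsilon)=0$) to drive the exponential tails down, while the polynomial-in-$M$ bound $\gamma_3$ needs $\delta(\epsilon)^{1+d_1}M=\frac{\delta(\epsilon)^{d_1}}{\gamma^*}\to 0$; I would therefore state explicitly the fixed thresholds $\epsilon_0,\epsilon_1\in(0,1)$ below which every bound holds, so that the union bound in the subsequent analysis closes.
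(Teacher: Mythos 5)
Your proof is correct and follows essentially the same route as the paper's: Chernoff bounds (Theorem~\ref{chernoff3-theorem}) for Statements 1 and 2, the binomial estimate ${N\choose M}\le (Ne/M)^M$ for Statement 3, Lemma~\ref{ex-lemma} for Statement 4, and Lemma~\ref{geometric-random-lemma} with failure parameter $\gamma^*$ for Statement 5. If anything, your write-up is more careful than the paper's at two points: in Statement 4 you correctly take the per-draw probability $\delta(\epsilon)^{1+d_1}/(k-1)$ (the paper mistakenly sets $z=\delta(\epsilon)^{1+d_1}M$ and carries a leftover $d_4$ from the 2-means case), and in Statement 5 you make explicit the computation $\frac{1}{\gamma^*M}=\delta(\epsilon)$ that the paper leaves as ``follows from Lemma~\ref{geometric-random-lemma}.''
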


\begin{proof}Note that $\lim_{\epsilon\rightarrow +0}\delta(\epsilon)=0$.
    Let $\epsilon_0$ be a real number in $(0,1)$ such that
    \begin{eqnarray}
     e^{-{\delta_2\over 4\gamma^*\delta(\epsilon_0)}}\le {\mu_2\over 4}.\label{epsilon0p-ineqn}
    \end{eqnarray}

Statement~\ref{second-early1}:  Since $P_1|\ge |P_2|$, we have $|P_1|\ge {|P|\over k}$.
Let $p_1={1\over k}$. With $N_a$ elements from $P$, with
probability at most $\gamma_1=e^{{-\delta_2^2\over
2}p_1N_a}=e^{{-\delta_2^2d_2\over 2}M}$, there are less than
$(1-\delta_2)p_1N_a={(1-\delta_2)d_2}M$ elements from $P_1$ by
Theorem~\ref{chernoff3-theorem}. It is easy to see that
$\gamma_1=2^{{-\delta_2^2d_2\over 4}M}\le \gammaboundk$ for all
 $\epsilon\le \epsilon_0$ (by inequality~(\ref{epsilon0p-ineqn})).

Statement~\ref{second-early2}: Let $p={\delta(\epsilon)^{1+d_1}\over
k-1}$. By line~\ref{EN1b-line} of the algorithm, we have
$pN_b={M\over 1-\delta_2}$.  When $N_b$ elements are selected from
$P$, by Theorem~\ref{chernoff3-theorem}, with probability at most
$\gamma_2=2^{{-\delta_2^2\over 2}pN_b}=2^{{-\delta_2^2\over 2}\cdot
{M\over 1-\delta_2}}\le 2^{{-\delta_2^2M\over 4}}\le \gammaboundk$
for all $\epsilon\le \epsilon_0$  (by
inequality~(\ref{epsilon0p-ineqn})), we have less than
$(1-\delta_2)pN_b=M$ random points from $P_2$.

Statement~\ref{second-early3}: It takes ${N_a\choose M}\cdot
{N_b\choose M}$ cases to enumerate all possible subsets $H_1$ and $H_2$ of size $M$ from $S_a$ and $S_b$, respectively.

The number of pairs of centers
generated by line~\ref{Ephase1-start2-line} to line~\ref{Ephase1-end2-line} of the algorithm $k$-Means(.) is at most
\begin{eqnarray}
{N_a\choose M}\cdot {N_b\choose
    M}\label{T1-inequality12}
&\le&\left(\left({N_a^M\over M!}\right)\cdot
\left({N_b^M\over
    M!}\right)\cdot \right)\label{T1-inequality22}\\
&=&\bigO\left(\left({N_a\cdot e\over M}\right)^M\cdot
\left({N_b\cdot e\over
    M}\right)^M\right)\label{T1-inequality3}\\
&=&\bigO\left((ed_2k)^M\cdot \left({e(k-1)\over (1-\delta_2)\cdot
    \delta(\epsilon)^{1+d_1}}\right)^M\right)\label{T1-inequality42}\\
&=&\bigO\left( \left({d_2e^2k(k-1) \over (1-\delta_2)\cdot
\delta(\epsilon)^{1+d_1}}\right)^M\right)\label{T1-inequality52}\\
&=&\bigO\left( \left({(1+2\delta_2)\cdot d_2e^2k(k-1) \over
    \delta(\epsilon)^{1+d_1}}\right)^{M}\right).\label{T1-inequality72}
\end{eqnarray}

Statement~\ref{second-early4}: Let $\epsilon_1$ be a real number in
$(0,1)$ such that $\delta(\epsilon_1)^{1+d_1}M\le \gammaboundk$. Let
$z=\delta(\epsilon)^{1+d_1}M$. When $M $ elements are selected in
$P$, the probability that $V_a$ contains no element of $P_2$ is at
least $(1-z)^{M }\ge 1-zM\ge 1-\delta(\epsilon)^{1+d_1}M
=1-\delta(\epsilon)^{1+d_1}M=1-\delta(\epsilon)^{d_1}d_4$ by
Lemma~\ref{ex-lemma}, and equation (\ref{c4-eqn}).
Let $\gamma_3=z$. We have
$\gamma_3\le\gammaboundk$ for all  positive $\epsilon\le \epsilon_1$.

Statement~\ref{second-early5}: It follows from
Lemma~\ref{geometric-random-lemma}. 
\end{proof}

\begin{lemma}\label{failure-lemma}
With probability $1-(\mu_1+\mu_2+\mu_3)$, the algorithm outputs a list U of $k$-centers $(c_1,\cdots, c_k)$ such that one of them gives a $(1+\epsilon)$-approximation for the constrained $k$-means.
\end{lemma}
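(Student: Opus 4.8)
The plan is to mirror the two-case analysis of the constrained $2$-means algorithm and, once a sufficiently accurate prefix of centers lands in the candidate set $D$, to hand it to the extension routine $A(.)$ and invoke its guarantee. First I would split the analysis according to the balance of the two largest clusters: either $\frac{\delta(\epsilon)^{1+d_1}}{k-1}|P|\le |P_2|\le|P_1|$ (balanced) or $|P_2|<\frac{\delta(\epsilon)^{1+d_1}}{k-1}|P|$ (unbalanced). In each case I would show that, with high probability, $D$ contains a tuple $T$ whose entries approximate the centroids of the largest clusters to the accuracy demanded by Definition~\ref{extend-k-means-def}; applying $A(k,P,\epsilon,T)$ then produces a full $(1+\epsilon)$-approximation, which is placed in $U$.

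For the balanced case I would invoke Statements~\ref{second-early1} and~\ref{second-early2} of Lemma~\ref{Eearly-sampling-lemma2}: with probability at least $1-\gamma_1-\gamma_2$ the sample $S_a$ contains at least $M$ points of $P_1$ and $S_b$ contains at least $M$ points of $P_2$ (note $(1-\delta_2)d_2\ge 1$ by equation~(\ref{d2-eqn}) and $\delta_2\le\frac{1}{2}$). Hence the enumeration on lines~\ref{Ephase1-start2-line}--\ref{Ephase1-end2-line} produces, among its pairs, one pair $(c(H_1),c(H_2))$ with $H_1\subseteq S_a\cap P_1$ and $H_2\subseteq S_b\cap P_2$, each of size $M$. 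Conditioned on these counts, the points of $S_a$ falling in $P_1$ are independent uniform samples of $P_1$ (and likewise for $P_2$), so $H_1$ and $H_2$ qualify as the samples $V$ of Statement~\ref{second-early5}; thus with probability at least $1-2\gamma_4$ both centroids satisfy~(\ref{Ec1-closeness-ineqn}), making $T=(c(H_1),c(H_2))$ an admissible prefix for $A(.)$. For the unbalanced case I would instead use Statement~\ref{second-early4}: since $P_2,\dots,P_k$ together carry mass below $\delta(\epsilon)^{1+d_1}|P|$, with probability at least $1-\gamma_3$ the set $V_a$ lies wholly in $P_1$, and then Statement~\ref{second-early5} gives, with probability at least $1-\gamma_4$, an admissible one-entry prefix $T=(c_1)$ added to $D$ on line~\ref{k-means-end-line}.

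Given an admissible $T\in D$, Definition~\ref{extend-k-means-def} guarantees that $A(k,P,\epsilon,T)$ returns, with probability at least $1-\mu_1$, the remaining $k-r$ centers so that the complete tuple is a $(1+\epsilon)$-approximation; since $U\supseteq A(k,P,\epsilon,T)$, one tuple of $U$ is then as required. I would finish with the union bound~(\ref{prob-Add-Ineqn}) over the relevant failure events. In the balanced case the total failure is at most $\gamma_1+\gamma_2+2\gamma_4+\mu_1\le \frac{\mu_2}{4}+\frac{\mu_2}{4}+2\gamma^*+\mu_1$, and in the unbalanced case at most $\gamma_3+\gamma_4+\mu_1\le\frac{\mu_2}{4}+\gamma^*+\mu_1$; substituting $\gamma^*=\mu_3/2$ (so $2\gamma^*=\mu_3$) bounds both by $\mu_1+\mu_2+\mu_3$, which yields the claimed success probability $1-(\mu_1+\mu_2+\mu_3)$ whichever case actually holds.

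The step I expect to be most delicate is matching precisions: Statement~\ref{second-early5} delivers $||c_i-m_i||^2\le\delta(\epsilon)\sigma_i^2$, whereas Definition~\ref{extend-k-means-def} asks for $||c_j-c(P_j)||\le\delta(\epsilon)\sigma_j$, so I would need to reconcile these (by reading the extension's accuracy requirement in the squared form produced by Lemma~\ref{geometric-random-lemma}, or by enlarging $M$ accordingly) before the sampled centroids can legitimately serve as prefixes. The second point requiring care is the conditioning argument that lets me treat the enumerated subsets $H_1,H_2$ as genuine independent uniform samples, so that Statement~\ref{second-early5} applies to them rather than to freshly drawn points; the rest of the bookkeeping is routine.
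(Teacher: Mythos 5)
Your proposal is correct and follows essentially the same route as the paper's proof: the same balanced/unbalanced case split on $|P_2|$ versus $\frac{\delta(\epsilon)^{1+d_1}}{k-1}|P|$, the same use of Lemma~\ref{Eearly-sampling-lemma2} to show an admissible prefix lands in $D$, the same appeal to the extension algorithm's failure bound $\mu_1$, and the same union bound giving total failure at most $\mu_1+\mu_2+\mu_3$. If anything you are slightly more careful than the paper, which in the unbalanced case writes $\gamma_1+\gamma_4$ where (as you have it) $\gamma_3+\gamma_4$ is what is actually needed, and which passes silently between the squared accuracy bound of Statement~\ref{second-early5} and the unsquared requirement $||c_j-c(P_j)||\le\delta(\epsilon)\sigma_j$ of Definition~\ref{extend-k-means-def} --- the mismatch you rightly flag as needing reconciliation.
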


\begin{proof}We discuss two cases of the algorithm $k$-Means(.) based on the balanced conditions:

Case1:  ${\delta(\epsilon)^{1+d_1}\over
    k-1}|P|\le |P_2|\le |P_1|$.

This case is successful if 1) Multiset $S_a$ has a size $M$ multisubset $H_1$ with all elements are in $P_1$, 2) Multiset $S_b$ has a size $M$ multisubset $H_2$ with all elements in $P_2$,  3) $||c(H_1)-m_1||\le \delta(\epsilon)\sigma_1$, and
4) $||c(H_2)-m_2||\le \delta(\epsilon)\sigma_2$.

By Lemma~\ref{Eearly-sampling-lemma2},
the failure probability of this case
is at most
$\gamma_1+\gamma_2+\gamma_4+\gamma_4\le \mu_2+\mu_3$.

Case 2: $|P_2|<{\delta(\epsilon)^{1+d_1}\over k-1}\cdot |P|$.

This case is successful if 1) Multiset $S_a$ has a size $M$ multisubset $H_1$ with all elements are in $P_1$, and  2) $||c(H_1)-m_1||\le \delta(\epsilon)\sigma_1$.

By Lemma~\ref{Eearly-sampling-lemma2},
the failure probability of this case
is at most $\gamma_1+\gamma_4\le \mu_2+\mu_3$.

Algorithm $A(.)$ fails with
probability at most $\mu_1$.
Combining the two cases, we have that the entire failure
probability of the algorithm is at most $\mu_1+\mu_2+\mu_3<1$.
\end{proof}

\begin{definition}
    Let $\epsilon$ be a positive real number.   An algorithm $H(.)$ is a
{\it    $(1+\epsilon)$-solution} of the constrained $k$-means if it outputs a list $U$ of $k$ approximate centers $(c_1,\cdots, c_k)$ such that one $(c_1^*,\cdots, c_k^*)$ in  $U$ induces a $(1+\epsilon)$-approximation for the input of the constrained $k$-means problem.
\end{definition}

\begin{theorem}Assume that $A(k, P,\epsilon, C)$ is an $(\epsilon,\delta(.),\mu_1)$-\extendKM~with complexity $Z(k, n, d, \epsilon)$. Then there is a
    $\bigO(2^{\bigO({k\over \delta(\epsilon)})}+Z(k,k-1, n, d,
    \epsilon)+\left({e^2(d_2 +\delta)k^2\over
        \delta(\epsilon)^{1+d_1}}\right)^{M} Z(k,k-2, n, d,
    \epsilon))$ time algorithm to give a $(1+\epsilon)$-solution for the
constrained $k$-means. Furthermore, it outputs a list of $U$ of at most
$H(k,k-1, n,d,\epsilon)+\bigO\left( \left({e^2(d_2 +\delta)k^2\over
    \delta(\epsilon)^{1+d_1}}\right)^{M}\right)H(k,k-2, n,d,\epsilon)$ many $k$-centers, and its failure
probability is at most $\mu_1+\mu_2+\mu_3<1$, where the parameters
$\delta, d_2$ and $M$ are defined in algorithm $k$-means.
\end{theorem}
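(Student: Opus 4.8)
The plan is to assemble the statement directly from the structure of Algorithm $k$-Means$(P,\epsilon,A(.))$ together with the two-case analysis already carried out in Lemma~\ref{failure-lemma} and the counting bounds of Lemma~\ref{Eearly-sampling-lemma2}. The key observation is that the collection $D$ built by the algorithm consists of exactly two kinds of partial center tuples: the pairs $(c(H_1),c(H_2))$ produced by the double enumeration over size-$M$ subsets of $S_a$ and $S_b$, and the single tuple $(c_1)=(c(V_a))$. A pair supplies $A(.)$ with $r=2$ approximate centers (for $P_1$ and $P_2$), so $A$ must still find $k-2$ centers; the singleton supplies $r=1$ center (for $P_1$), so $A$ must find $k-1$ centers. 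This matching of the balanced case to the pairs and of the unbalanced case to the singleton is what produces the two groups of $Z$- and $H$-terms in the statement.

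For the approximation guarantee and the failure probability I would simply invoke Lemma~\ref{failure-lemma}. Since the final output is $U=\bigcup_{T\in D}A(k,P,\epsilon,T)$, it suffices that \emph{some} $T^{*}\in D$ satisfies the precondition $||c_j-c(P_j)||\le\delta(\epsilon)\sigma_j$ demanded by the \extendKM~and that the single invocation $A(k,P,\epsilon,T^{*})$ succeeds; then the list $A(k,P,\epsilon,T^{*})\subseteq U$ already contains a $k$-tuple inducing a $(1+\epsilon)$-approximation. The existence of such a $T^{*}$ in each of the two cases, together with the union-bound accounting of Lemma~\ref{failure-lemma} (failure at most $\mu_2+\mu_3$ in either branch) and the $\mu_1$ failure of $A$, gives total failure probability at most $\mu_1+\mu_2+\mu_3<1$.

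For the size of $U$ and the running time I would count $D$. By Statement~\ref{second-early3} of Lemma~\ref{Eearly-sampling-lemma2} the number of pairs is $\bigO\big((e^2(d_2+\delta)k^2/\delta(\epsilon)^{1+d_1})^{M}\big)$, and there is one extra singleton. Running $A$ on the singleton contributes a list of size $H(k,k-1,n,d,\epsilon)$ and costs $Z(k,k-1,n,d,\epsilon)$, while running $A$ on each pair contributes $H(k,k-2,n,d,\epsilon)$ and costs $Z(k,k-2,n,d,\epsilon)$; summing over $D$ gives the stated bound on $|U|$ and the dominant $A$-call cost $Z(k,k-1,n,d,\epsilon)+\bigO\big((e^2(d_2+\delta)k^2/\delta(\epsilon)^{1+d_1})^{M}\big)\,Z(k,k-2,n,d,\epsilon)$. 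The remaining $2^{\bigO(k/\delta(\epsilon))}$ term absorbs the overhead of building and scanning $D$ itself — drawing $S_a,S_b,V_a$ and computing the $\bigO\big((e^2(d_2+\delta)k^2/\delta(\epsilon)^{1+d_1})^{M}\big)$ centroids — which, since $M=1/(\gamma^{*}\delta(\epsilon))=\Theta(1/\delta(\epsilon))$, is of this order.

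The main obstacle, and the place where I would be most careful, is the bookkeeping of \emph{which} $r$ each invocation of $A$ uses and verifying that the precondition of the \extendKM~is genuinely met by the good candidate. Concretely, I must check that in the balanced case the good pair satisfies inequality~(\ref{Ec1-closeness-ineqn}) simultaneously for $j=1$ and $j=2$ (so that $A$ is entitled to return a $(1+\epsilon)$-approximation from $r=2$), and that in the unbalanced case $V_a$ avoids $P_2$ and yields a single center meeting~(\ref{Ec1-closeness-ineqn}) for $j=1$; these are Statements~\ref{second-early4} and~\ref{second-early5} of Lemma~\ref{Eearly-sampling-lemma2}. A secondary point is fixing the index convention once and for all: throughout, the second argument of $Z$ and $H$ should record the number $k-r$ of centers $A$ must still find ($r=1$ for the singleton branch, $r=2$ for a pair), so that no constant is lost when translating the candidate counts of Lemma~\ref{Eearly-sampling-lemma2} into the $Z$- and $H$-terms of Definition~\ref{extend-k-means-def}.
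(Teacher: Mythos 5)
Your proposal is correct and follows essentially the same route as the paper's own proof: both argue via the two kinds of partial tuples in $D$ (the enumerated pairs and the singleton $(c_1)$), invoke Statement~\ref{second-early3} of Lemma~\ref{Eearly-sampling-lemma2} for the pair count, charge $Z(k,k-1,\cdot)$/$H(k,k-1,\cdot)$ to the singleton and $Z(k,k-2,\cdot)$/$H(k,k-2,\cdot)$ to each pair, and obtain the failure probability from Lemma~\ref{failure-lemma} together with the $\mu_1$ failure of $A(.)$. Your explicit verification of the \extendKM{} precondition in each case and your accounting for the $2^{\bigO(k/\delta(\epsilon))}$ overhead term are slightly more careful than the paper's write-up, but they do not change the argument.
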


\begin{proof}Let $P_1\ge P_2\ge\cdots \ge P_k$ be an optimal
$k$-clusters for a  given constrained $k$-means problem. Assume that
$t$ is an integer with $1\le t\le 2$ such that we have the $t$
approximate centers $(c_1,\cdots, c_t)$ with $f_2(c_i, P_i)\le
(1+\delta(\epsilon))F_2(m_i, P_i)$ for $i=1,2,\cdots, t$. The
\extendKM~$A(.)$ will return the rest $k-t$ approximate centers with
satisfied accuracy. Let $H(M,N_a,N_b)={N_a\choose M}\cdot
{N_b\choose
    M}=\bigO\left( \left({(1+2\delta_2)\cdot d_2e^2k(k-1) \over
    \delta(\epsilon)^{1+d_1}}\right)^{M}\right)$ by Statement~\ref{second-early3} of Lemma~\ref{Eearly-sampling-lemma2}.

By Lemma~\ref{geometric-random-lemma}, a multiset of $M$ random
samples from $P_i$ can generate an approximate center $c_i$ to
$m_i$. It follows from Lemma~\ref{Eearly-sampling-lemma2}. As $A(.)$
finds the rest of centers, therefore, it gives a
$(1+\epsilon)$-approximation algorithm for the constrained $k$-means
problem by Definition~\ref{extend-k-means-def}.

The set $D$ has only one vector $(c_1)$ with a single approximate center
for $P_1$. It is added to $D$ at line~\ref{k-means-end-line} of the algorithm $k$-Means(.). Thus, it
introduces additional $H(k,k-1,n,d,\epsilon)$ items in the list.
The set $D$ has $H(M,N_a,N_b)$ vectors $(c_1, c_2)$ of
two centers, which are added to $D$ at line~\ref{add-two-center-line} of the algorithm $k$-Means(.). Thus, it introduces additional $H(M,N_a,N_b)$ items in the list.

Therefore, the  time complexity is $\bigO(2^{\bigO({k\over
\delta(\epsilon)})}+Z(k,k-1, n, d, \epsilon)+ H(M,N_a,N_b) Z(k,k-2,
n, d, \epsilon))$. It returns a list $U$ of at most
$H(k,k-1,n,d,\epsilon)+H(M,N_a,N_b)H(k,k-2,n,d,\epsilon)$
$k$-centers.

The failure probability follows from Lemma~\ref{failure-lemma}. It fails with probability at most $\mu_1+\mu_2$.
\end{proof}

\section{NP-Hardness for Balanced  $k$-means }
In this section, we show that the $1$-balanced $2$-means problem is
NP-hard. We derive a polynomial time reduction from the classical
balanced 2-partition problem. The reduction here is adapted to the method in~\cite{AloiseDHP09}, which shows that $2$-means problem is
NP-hard without balance restriction.

\subsection{NP-Completeness}

 Given a simple graph $G = (V;E)$,  a balanced 2-partition of $G =
(V,E)$ is a partition of $V$ into two vertex sets $V_1, V_2$ such that
$|V_i|\le \ceiling{|V|\over 2}$. The cut size (or simply, the size)
of a balanced $2$- partition is the number of edges of $G$ with one
endpoint in set $V_1$ and the other endpoint in  set $V_2$. The
maximum  bisection is to get a balanced $2$-partition with the
maximum cut size. This is a well known NP-hard problem~\cite{DiazMertzios17}.

\begin{proposition}
The maximum bisection problem with even number of vertices is
NP-hard.
\end{proposition}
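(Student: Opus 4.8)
The plan is to give a polynomial-time many-one reduction from the (unrestricted) maximum bisection problem, whose NP-hardness is quoted above via~\cite{DiazMertzios17}, to the maximum bisection problem restricted to graphs with an even number of vertices. I would work with the decision versions: given a graph $G$ and a target $k$, decide whether $G$ admits a balanced $2$-partition of cut size at least $k$. The reduction will carry the threshold $k$ across unchanged, so correctness reduces to showing that the optimal cut value is preserved.

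First I would dispose of the trivial case. Let $G=(V,E)$ with $n=|V|$ be an arbitrary instance of maximum bisection. If $n$ is even, then $G$ is already an instance of the even-vertex problem, so the identity map together with the same threshold $k$ is already a correct reduction. The interesting case is $n$ odd, where I would form $G'=(V',E)$ by adding a single isolated vertex $v$, so that $V'=V\cup\set{v}$ and $n'=n+1$ is even; this is clearly computable in polynomial time and lands on an even-vertex instance.

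The key step is to verify that $G$ and $G'$ have the same maximum balanced cut value. For the lower bound, take an optimal balanced partition $(V_1,V_2)$ of $G$; since $n$ is odd we may assume $|V_1|=\ceiling{n/2}=(n+1)/2$ and $|V_2|=(n-1)/2$, so placing $v$ into $V_2$ yields a partition of $V'$ with both sides of size $(n+1)/2$, hence balanced for $G'$, and of the same cut size because $v$ has no incident edges. For the upper bound, take an optimal balanced partition $(W_1,W_2)$ of $G'$, both sides of size $(n+1)/2$; deleting $v$ from whichever side contains it produces a partition of $V$ whose two sides have sizes $(n+1)/2$ and $(n-1)/2$, so $|V_i|\le\ceiling{n/2}$ as required, and again the cut size is unchanged. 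Thus the two optima coincide, so $G$ has a balanced $2$-partition of cut size $\ge k$ if and only if $G'$ does, which establishes correctness.

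I do not expect a genuine obstacle here: this is a padding reduction whose only delicate points are bookkeeping, namely checking that the ceiling/floor sizes line up after the single added vertex and that an isolated vertex contributes nothing to any cut. The one place to be careful is that the reduction must land \emph{only} on even-vertex instances, which is why the odd case is padded up by exactly one vertex while the even case is left untouched. If one preferred to avoid case analysis altogether, an equally valid route is to reduce from MAX-CUT by adjoining $n$ isolated vertices to a graph on $n$ vertices: the resulting graph has $2n$ (automatically even) vertices, and its maximum bisection equals the maximum cut of the original, since any cut can be balanced using the isolated vertices without changing its value.
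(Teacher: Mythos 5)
Your proof is correct and follows essentially the same route as the paper's: pad an odd-vertex instance with a single isolated vertex and observe that the optimal bisection value is unchanged. You additionally spell out the size bookkeeping that the paper dismisses as ``easy to see,'' which is fine but not a different argument.
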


\begin{proof}
We can derive a polynomial time reduction from the maximum bisection
problem $G=(V,E)$ to a maximum bisection problem $G'=(V',E')$. We
only consider the case that $|V|$ is odd. Let $u$ be a vertex not in
$V$. Let $V'=V\cup \{u\}$, and $E'=E$. It is easy to see the maximum
bisection cut size for $G'$ is identical to that of $G$.
\end{proof}

\begin{theorem}
The $1$-balanced $2$-means problem is NP-hard.
\end{theorem}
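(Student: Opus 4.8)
The plan is to reduce from the maximum bisection problem on graphs with an even number of vertices, which is NP-hard by the preceding Proposition. Given such a graph $G=(V,E)$ with $|V|=n$ even and $|E|=m$, I would fix an arbitrary orientation of each edge and embed each vertex $v$ as the point $p_v\in\mathbb{R}^m$ whose coordinate at edge $e$ is $+1$ if $v$ is the head of $e$, $-1$ if $v$ is the tail of $e$, and $0$ otherwise (the signed incidence vector). A direct computation gives, for $u\neq v$,
\[
\|p_u-p_v\|^2=\deg(u)+\deg(v)+2\,[\{u,v\}\in E],
\]
so adjacency \emph{increases} the squared distance. Because the $1$-balanced constraint forces $|Q_0|=|Q_1|=n/2$, every feasible clustering of $\{p_v\}$ is exactly a bisection of $V$, which is where the parity guarantee of the Proposition is used.

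The key step is the standard variance identity $\sum_{v\in S}\|p_v-c(S)\|^2=\frac{1}{2|S|}\sum_{u,v\in S}\|p_u-p_v\|^2$, which lets me rewrite the objective of a balanced partition $(S,\bar S)$ as $\frac1n\bigl(T-2C\bigr)$, where $T=\sum_{u\neq v}\|p_u-p_v\|^2$ depends only on $G$ and is constant, and $C=\sum_{u\in S,\,v\in\bar S}\|p_u-p_v\|^2$ is the cross term. Substituting the distance formula and using $|S|=|\bar S|=n/2$, the degree contributions to $C$ collapse to the $G$-independent constant $nm$, leaving $C=nm+2\,\mathrm{cut}(S,\bar S)$. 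Hence the $1$-balanced $2$-means cost equals $\frac1n(T-2nm)-\frac4n\,\mathrm{cut}(S,\bar S)$, a constant minus a positive multiple of the cut size, so minimizing the $2$-means objective over balanced partitions is exactly the same as maximizing the bisection cut.

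From here the reduction is immediate: an exact algorithm for $1$-balanced $2$-means applied to $\{p_v\}$ returns a bisection of minimum cost, hence of maximum cut, and the cut value is recovered from the optimal cost through the affine relation above. Since all coordinates lie in $\{-1,0,1\}$, every cost is rational with denominator $n$, so optimal values can be compared exactly in polynomial time and the decision version ``is there a bisection of cut $\ge k$?'' is decided. Thus a polynomial-time solver for $1$-balanced $2$-means would solve maximum bisection, proving NP-hardness; the construction is the adaptation of~\cite{AloiseDHP09} promised in the section introduction.

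The main obstacle is the algebraic heart of the reduction: verifying the distance formula and, crucially, showing that under the \emph{exact} balance constraint the degree terms in $C$ sum to the constant $nm$, so that only the cut survives. This cancellation relies essentially on $|S|=|\bar S|$ and is precisely what makes the correspondence be with maximum (rather than minimum) bisection. I would also need to confirm that coincident points—vertices with identical signed-incidence vectors—cause no difficulty, since the clustering is of the vertex multiset and the variance identity holds regardless, and to note that the optimal $2$-means centroids are the cluster means so that the identity applies verbatim and no feasible clustering can undercut a genuine bisection.
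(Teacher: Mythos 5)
Your proof is correct, and it is the same reduction as the paper's: from maximum bisection on graphs with an even number of vertices, embedding each vertex as its signed edge-incidence vector, with the $1$-balanced constraint forcing exact bisections. Where you genuinely differ is in how the cost--cut identity is verified. The paper computes the two cluster centroids explicitly (for a crossing edge the corresponding coordinate of each centroid is $\pm\frac{2}{n}$, otherwise $0$) and sums squared distances coordinate by coordinate, getting a per-edge contribution of $2-\frac{4}{n}$ for crossing edges and $2$ for internal ones, hence cost $=2|E|-\frac{4}{n}\,\mathrm{cut}$. You instead never touch the centroids: you combine the variance identity $\sum_{v\in S}\|p_v-c(S)\|^2=\frac{1}{2|S|}\sum_{u,v\in S}\|p_u-p_v\|^2$ with the pairwise formula $\|p_u-p_v\|^2=\deg(u)+\deg(v)+2\,[\{u,v\}\in E]$, and the balance condition $|S|=|\bar S|=\frac{n}{2}$ enters exactly once, to collapse the degree terms of the cross sum to the constant $nm$. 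The two computations agree precisely: with $T$ taken over ordered pairs one has $T=4|E|n$, so your constant $\frac{1}{n}(T-2nm)$ equals the paper's $2|E|$, and both yield cost $=2|E|-\frac{4}{n}\,\mathrm{cut}$. Your route is less error-prone and makes transparent where balancedness is essential (the degree cancellation, which is exactly why the correspondence is with \emph{maximum} bisection), while the paper's is more elementary, using nothing beyond the definition of the centroid; your closing remarks on coincident points and on recovering the cut value from the optimal cost are correct but inessential.
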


\begin{proof}
Let $G=(V,E)$ be an instance of the maximum bisection problem with even $|V|$. Let
$v_1, v_2,\cdots, v_n$ be the vertices in $V$, and $e_1,\cdots, e_m$
be the edges in $E$. Consider a $|V|\times |E|$ matrix $M$. Each
column corresponds to an edge. For each edge $e_k=(v_i, v_j)$ with
$i<j$, let the $i$-th item of $k$-th column be $1$, and $j$-th item
of $k$-column be $-1$. Let the constrained $2$-means  problem consist of the
points that correspond to the $|V|$ rows in the matrix $M$.

Let $P$ and $Q$ be a balanced $2$-partition of $V$. Let $C_P$ be the
center of $P$, and $C_Q$ be the center of $Q$. If an edge $e_k=(v_i,
v_j)$ is a crossing edge, then the $k$-th item in $C_P$ is either
${2\over n}$ or $-{2\over n}$, and  the $k$-th item in $C_Q$ is
either ${2\over n}$ or $-{2\over n}$. Otherwise, the $k$-th items in
$C_P$  and $C_Q$ are both zero.

Therefore, the square of distances to the two centers are
\begin{eqnarray*}
&&\sum_{e\in E(P,Q)}(({n\over 2}-1){4\over n^2}+(1-{2\over n})^2+({n\over 2}-1){4\over n^2}+(1-{2\over n})^2)+\sum_{E(P,P)} 2+\sum_{E(Q,Q)}2\\
&=&(2-{4\over n})|E(P,Q)|+2|E(P,P)|+2|E(Q,Q)|\\
&=&2|E|-{4\over n}\cdot |E(P,Q)|.
\end{eqnarray*}

Therefore, the maximum balanced bisection problem with even number
of vertices is polynomial time reducible to the balanced $2$-means
problem.
\end{proof}

\newpage

\centerline{\Large Appendix}

\newcommand{\vibrate}{{\rm vibrate}}

 For a point $q$ a set of points $P\subseteq \mathbb{R}^d$, our algorithm needs to find the $i$-th largest $\dist(q,p)$ for $p\in P$ (line~\ref{find-i-th-largest-line} of algorithm $2$-Means(.)). The following Lemma~\ref{vibrate-lemma} makes the point $p$ unique by a minor adjustment
for point $q$. When $q$ is an approximate center of the largest
cluster $P_1$ for a constrained $2$-means problem, $f_2(q', P_1)\approx f_2(q,P_1)$ if $\rho$ is small enough by Lemma~\ref{vibrate-propsition}.

\begin{definition}
    For a point $q\in \mathbb{R}^d$, a $\rho$-vibration of $q$ is to generate
    $q'=q+Y$, where $Y=(y_1,\cdots, y_n)$, and each $y_j$ is a random real
    number in $[-\rho,\rho]$ for $j=1,\cdots, n$. We use the notation
    $q'=\vibrate(q, \rho)$.
\end{definition}

Before reading Lemma~\ref{vibrate-lemma}, a reader may
feel comfortable about the simple fact that a random real number in
$[0,1]$ is not equal to $0.5$ with probability one.

\begin{lemma}\label{vibrate-lemma}
    Let $P=\{p_1,\cdots, p_n\}$ be a finite set of points in $\mathbb{R}^d$,
    $q\in \mathbb{R}^d$, and $\rho>0$. Let $q'$ be  a $\rho$-vibration from $q$.
    Then with probability $1$,
    $\dist(p_i, q')\not=\dist(p_j,q')$ for any $p_i\not=p_j$ in $P$.
\end{lemma}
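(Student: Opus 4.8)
The plan is to reduce the statement to the elementary fact that an affine hyperplane in $\mathbb{R}^d$ has Lebesgue measure zero, combined with the observation that a $\rho$-vibration places $q'$ according to an absolutely continuous law (the uniform density on a cube), so that $q'$ misses any fixed measure-zero set with probability one. I interpret $\vibrate(q,\rho)$ as perturbing each of the $d$ coordinates of $q$ by an independent uniform draw from $[-\rho,\rho]$, so that the law of $q'$ has full-dimensional support in $\mathbb{R}^d$.

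First I would fix a pair of distinct points $p_i\neq p_j$ and analyze the ``collision'' locus
\[
E_{ij}=\set{x\in\mathbb{R}^d \st \dist(p_i,x)=\dist(p_j,x)}.
\]
Writing the equidistance condition as $\|x-p_i\|^2=\|x-p_j\|^2$, expanding both sides and cancelling the common quadratic term $\|x\|^2$ leaves the linear equation
\[
2\langle x,\,p_j-p_i\rangle=\|p_j\|^2-\|p_i\|^2 .
\]
Because $p_i\neq p_j$, the normal vector $p_j-p_i$ is nonzero, so $E_{ij}$ is a genuine affine hyperplane (the perpendicular bisector of the segment $p_ip_j$) and therefore has $d$-dimensional Lebesgue measure zero.

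Next I would note that the law of $q'$ is uniform on the cube $q+[-\rho,\rho]^d$, hence absolutely continuous with respect to Lebesgue measure with bounded density $(2\rho)^{-d}$. Integrating this bounded density over the measure-zero set $E_{ij}$ gives $\Pr(q'\in E_{ij})=0$ for each fixed pair. Applying the union bound (\ref{prob-Add-Ineqn}) over the at most ${n\choose 2}$ unordered pairs then yields
\[
\Pr\Bigl(\bigcup_{i\neq j}\set{q'\in E_{ij}}\Bigr)\le\sum_{i\neq j}\Pr(q'\in E_{ij})=0 .
\]
Taking complements shows that with probability $1$ the vibrated point $q'$ avoids every bisector, i.e.\ $\dist(p_i,q')\neq\dist(p_j,q')$ for all $p_i\neq p_j$, which is the assertion.

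This argument has no substantive obstacle; it is purely measure-theoretic. The only point that genuinely uses the hypothesis is the reduction in the first step: the equidistance set $E_{ij}$ is a measure-zero hyperplane precisely because $p_i\neq p_j$ forces $p_j-p_i$ to be nonzero (otherwise the locus would be all of $\mathbb{R}^d$). The absolute continuity of the vibration law is what converts ``measure zero'' into ``probability zero,'' and the finiteness of $P$ is what lets the union bound close the argument.
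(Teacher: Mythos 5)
Your proof is correct and follows essentially the same route as the paper's: both arguments show that for each fixed pair $p_i\neq p_j$ the equidistance locus is a measure-zero subset of $\mathbb{R}^d$, invoke the absolute continuity of the vibration's uniform law to conclude each collision event has probability zero, and finish with a union bound over the finitely many pairs. If anything, your version is slightly cleaner, since you note that the quadratic terms cancel so the locus is exactly the perpendicular-bisector hyperplane, whereas the paper describes it somewhat loosely as the graph of a ``quadratic'' function of the remaining coordinates.
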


\begin{proof}Assume that all points in $P$ are fixed, and $q$ is
    also fixed. Let $p_i=(a_1,\cdots, a_d)$ and $p_j=(b_1,\cdots, b_d)$
    and $q=(h_1,\cdots, h_d)$. For a point $y=(y_1,\cdots, y_d)\in \mathbb{R}^d$
    in a ball of radius $\rho$ with center at origin, we consider the case
    $||(q+y)-p_i||^2=||(q+y)-p_j||^2$. Without loss of generality, assume that
    $a_1\not=b_1$. We have $y_1=f(y_2,\cdots, y_d)$ for a quadratic
    polynomial function $f(.)$, which is nonzero since $p_i\not= p_j$.
    Therefore, the set of all points $\{y: y\in \mathbb{R}^d,{\rm \ and\ } ||(q+y)-p_i||^2-||(q+y)-p_j||^2=0\}$ is of dimensional
    less than $d$. Therefore, it is of measure $0$ in $\mathbb{R}^d$. There are
    at most $|P|^2$ pairs of $p_i$ and $p_j$. Therefore, with
    probability $1$,
    $\dist(p_i, q')\not=\dist(p_j,q')$ for any $p_i\not=p_j$ in $P$.
\end{proof}


\begin{lemma}\label{vibrate-propsition}There is a $\bigO(nd)$
    time algorithm such that given a parameter $\eta\in (0,1)$, a point $q\in \mathbb{R}^d$, and a set $P_1$ ($|P_1|\ge 3$) of $n$ points in $\mathbb{R}^d$ that
    are different each other, it generates  $q'=$vibrate$(q, \rho)$ with $f_2(q', P_1)\le
    (1+\eta)f_2(q, P_1)$, where $d_{\max}=\max\{\dist(q, p'):p'\in P\}$,
    $d_{min2}$ is the second smallest in $\{\dist(q,p'):  p'\in P\}$, and
    $\rho=\min\left(d_{max}, {\eta d_{min2}^2\over 3dd_{max}}\right)$.
\end{lemma}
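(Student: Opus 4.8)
The plan is to separate the two claims and to establish the cost bound $f_2(q',P_1)\le(1+\eta)f_2(q,P_1)$ as a deterministic estimate valid for \emph{every} displacement $Y$ with $\max_j|y_j|\le\rho$, so that no probabilistic reasoning is needed for this inequality; the role of randomness (via Lemma~\ref{vibrate-lemma}) is only to make the distances $\dist(p,q')$ pairwise distinct with probability $1$. First I would spell out the algorithm: compute $\dist(q,p)$ for every $p\in P_1$ in $\bigO(nd)$ time, scan the list to read off $d_{max}$ and the second smallest value $d_{min2}$ in $\bigO(n)$ time, set $\rho=\min(d_{max},\ \eta d_{min2}^2/(3dd_{max}))$, sample $Y=(y_1,\ldots,y_d)$ with each $y_j$ uniform in $[-\rho,\rho]$, and return $q'=q+Y$; the last two steps cost $\bigO(d)$, giving the claimed $\bigO(nd)$ bound. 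Because the $n\ge 3$ points of $P_1$ are pairwise distinct, at most one of them can coincide with $q$, so $d_{min2}>0$ and $\rho$ is a well-defined positive number.

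For the cost estimate I would first bound the displacement: since $|y_j|\le\rho$ for each coordinate, $||Y||^2=\sum_{j=1}^d y_j^2\le d\rho^2$, hence $||Y||\le\rho\sqrt d$. Then for a single point $p\in P_1$, expanding the square and applying Cauchy--Schwarz gives
\[
||p-q'||^2 = ||p-q||^2 - 2(p-q)\cdot Y + ||Y||^2 \le ||p-q||^2 + 2\,||p-q||\,||Y|| + ||Y||^2,
\]
and using $||p-q||\le d_{max}$ and $||Y||\le\rho\sqrt d$ this is at most $||p-q||^2 + 2d_{max}\rho\sqrt d + d\rho^2$. Summing over all $n$ points of $P_1$ yields $f_2(q',P_1)\le f_2(q,P_1)+n\bigl(2d_{max}\rho\sqrt d + d\rho^2\bigr)$, so it suffices to show that the additive error $n(2d_{max}\rho\sqrt d+d\rho^2)$ is at most $\eta\,f_2(q,P_1)$.

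The lower bound on $f_2(q,P_1)$ is where $d_{min2}$ and the hypothesis $|P_1|\ge 3$ enter: since $d_{min2}$ is the \emph{second} smallest of the distances, all but at most one point of $P_1$ lies at distance at least $d_{min2}$ from $q$, so $f_2(q,P_1)\ge(n-1)d_{min2}^2$, and $n\ge 3$ gives $n-1\ge\tfrac{2}{3}n$. I would then substitute $\rho\le \eta d_{min2}^2/(3dd_{max})$ into the linear term and additionally use $\rho\le d_{max}$, $d_{min2}\le d_{max}$, and $\eta<1$ in the quadratic term; each term then becomes a small multiple of $\eta n\,d_{min2}^2$, and the goal is to have their sum be at most $\eta(n-1)d_{min2}^2\le\eta f_2(q,P_1)$, which is exactly the desired inequality. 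Finally I would observe that combining this deterministic estimate with Lemma~\ref{vibrate-lemma} shows that, with probability $1$, the vibration simultaneously renders the distances distinct and keeps $f_2$ within a $(1+\eta)$ factor.

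The step I expect to be the crux is the constant bookkeeping in the last paragraph: the cross term and the quadratic term scale differently in the dimension (the former carries $\rho\sqrt d$, the latter $d\rho^2$), so the factor $1/(3dd_{max})$ built into $\rho$ must be allocated so that both are controlled at once, and one must use $n\ge 3$ to close the small gap between the $n$ appearing in the error and the $n-1$ appearing in the lower bound for $f_2(q,P_1)$; this is precisely the step that pins down the particular constant in the definition of $\rho$. Everything else is a routine triangle-inequality/Cauchy--Schwarz calculation together with a direct reading of the algorithm's cost.
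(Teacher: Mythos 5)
Your strategy is the same as the paper's: a deterministic triangle--inequality estimate valid for every displacement $Y$ with $||Y||\le \rho\sqrt{d}$, with randomness used only to make the distances distinct via Lemma~\ref{vibrate-lemma} (the paper's chain of inequalities is exactly yours, with the cruder bound $||Y||\le d\rho$). The problem sits precisely at the step you flag as the crux, and it is a genuine gap: the deferred bookkeeping does not close. Bounding each $||p-q||$ by $d_{max}$, the best per-point error you can extract from your toolkit (using $\rho\le \eta d_{min2}^2/(3dd_{max})$ for the linear term, and $\rho\le d_{max}$, $d_{min2}\le d_{max}$, $\eta<1$ for the quadratic term) is $2d_{max}\rho\sqrt{d}+d\rho^2\le\left(\frac{2}{3\sqrt{d}}+\frac{1}{9d}\right)\eta d_{min2}^2$. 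For $d=1$ this equals $\frac{7}{9}\eta d_{min2}^2$ per point, so the total error bound is $\frac{7}{9}n\,\eta d_{min2}^2$, while your budget is $\eta(n-1)d_{min2}^2$; the inequality $\frac{7}{9}n\le n-1$ forces $n\ge 5$, so your argument fails for $d=1$ and $n\in\{3,4\}$. The $\sqrt{d}$ saving you lean on in the linear term is worth nothing in dimension one, which is exactly where the constant $3$ baked into $\rho$ leaves no slack.

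The lemma itself is still true; what fails is the per-point aggregation, and the repair is to aggregate before bounding. By Cauchy--Schwarz, $\sum_{p\in P_1}||p-q||\le\sqrt{n\,f_2(q,P_1)}$, so the total linear term is at most $2\rho\sqrt{d}\sqrt{nf_2}\le\frac{2\eta}{3\sqrt{d}}\sqrt{\frac{n}{n-1}}\,f_2$ (using $d_{min2}\le d_{max}$ and $d_{min2}^2\le f_2/(n-1)$), and the total quadratic term is at most $\frac{n\eta}{9d(n-1)}f_2$; for all $n\ge 3$ and $d\ge 1$ the two coefficients sum to at most $\frac{2}{3}\sqrt{3/2}+\frac{1}{6}\approx 0.99<1$, which yields $f_2(q',P_1)\le(1+\eta)f_2(q,P_1)$. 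For what it is worth, the paper's own proof hides the same tension rather than resolving it: its penultimate transition replaces $n\,d_{min2}^2\eta$ by $f_2(q,P_1)\eta$, i.e.\ it silently assumes $f_2(q,P_1)\ge n\,d_{min2}^2$, which is false when $q$ coincides with a point of $P_1$ and the remaining points sit at distance exactly $d_{min2}$ (e.g.\ $P_1=\{0,1,-1\}$, $q=0$ in $\mathbb{R}^1$). Your lower bound $f_2\ge(n-1)d_{min2}^2$ is the correct one; with the Cauchy--Schwarz aggregation above, your outline becomes a complete proof, and in fact a more rigorous one than the paper's.
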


\begin{proof}  For $q'=$vibrate$(q, \rho)$, we
    have $||q-q'||^2\le d\rho^2$. Thus, we have
    \begin{eqnarray}
    f_2(q',P_1)&=&\sum_{x\in P_1}||x-q'||^2\label{prop-ineqn0}\\
    &\le&\sum_{x\in P_1}(||x-q||+||q-q'||)^2\label{prop-ineqn1a}\\
    &\le&\sum_{x\in P_1}(||x-q||^2+2||x-q||\cdot ||q-q'||+||q-q'||^2)\label{prop-ineqn1}\\
    &\le&\sum_{x\in P_1}(||x-q||^2+2d\rho\cdot d_{max}+d\rho^2)\label{prop-ineqn2}\\
    &\le&\sum_{x\in P_1}(||x-q||^2+2d\rho\cdot d_{max}+d\rho\cdot d_{max})\\
    &\le&\sum_{x\in P_1}(||x-q||^2+3d\rho\cdot d_{max})\\
    &\le&\sum_{x\in P_1}(||x-q||^2+{d_{min2}^2\eta})\\
    &\le&\left(\sum_{x\in P_1}||x-q||^2\right)+{f_2(q, P_1)\eta}\\
    &\le& (1+\eta)f_2(q, P_1).
    \end{eqnarray}
    The transition from (\ref{prop-ineqn0}) to (\ref{prop-ineqn1a}) is by
    triangular inequality.
\end{proof}

\end{document}